\documentclass[10pt]{article}
\usepackage[margin=1in]{geometry}
\usepackage[T1]{fontenc}\usepackage[utf8]{inputenc}
\usepackage{amsmath,amssymb,amsthm,mathtools,mathrsfs,booktabs,array,tabularx,xcolor,enumitem,float,graphicx,subcaption,tikz,xspace}
\usetikzlibrary{arrows.meta,positioning,fit,calc}
\usepackage{natbib,hyperref}\usepackage{url}
\hypersetup{hidelinks}
\setlist{nosep,leftmargin=1.4em,topsep=2pt}\renewcommand{\arraystretch}{1.12}
\newtheorem{assumption}{Assumption}\newtheorem{definition}{Definition}\newtheorem{lemma}{Lemma}\newtheorem{theorem}{Theorem}\newtheorem{corollary}{Corollary}
\newcommand{\TV}{d_{\mathrm{TV}}}\newcommand{\push}{\mathbin{\#}}
\newcommand{\pihat}{\widehat{\boldsymbol\pi}^{\mathrm{lex}}}
\tikzset{bbnode/.style={draw=black,rounded corners=2pt,fill=white,align=center,text width=25mm,minimum height=12mm,inner sep=3pt},bbmap/.style={-{Latex[length=2mm]},thick,black},bbdash/.style={-{Latex[length=2mm]},thick,dashed,black}}

\title{Calibrating Semantic Uncertainty from Observable Language-Model Probabilities}
\author{Matthew Dixon\thanks{Corresponding author. Email: \texttt{matthew.dixon@aifi.edu}}\\[-1pt]\small Artificial Intelligence Finance Institute (AIFI), 69 Charlton St, New York, NY 10014, USA}
\date{27 July 2026}

\begin{document}\maketitle
\begin{abstract}
As generative artificial intelligence enters scientific and professional work, its uncertainty must be defined on the states that matter for inference and decision-making. Language models assign probabilities to words, whereas applications require uncertainty over meaningful states such as diagnoses, hypotheses or operational conditions. We introduce a \emph{semantic map}: a prespecified, testable bridge from probabilities over verbal responses to a posterior over declared finite states. The language distribution remains unrestricted; held-out calibration connects it to a reference posterior. We derive posterior-error bounds and conditions for existence, conditional uniqueness, presentation stability and stable inverse recovery. This distinction matters because language probabilities depend on prompt wording, while the target posterior should not change under information-equivalent rewording.

Experiments use professional market text compiled from Federal Reserve economic and financial series, together with controlled simulations having exact posteriors. Across two fitted language models, language-derived probabilities outperform printed numerical confidence, recover held-out posteriors with valid uncertainty coverage, remain largely stable under paraphrase and respond appropriately to altered evidence. \textbf{Prompt engineering optimises a wording-dependent response; robust scientific use requires validated stability of application-relevant meaning.} The proposed map turns semantic uncertainty in generative systems into an identifiable and testable statistical measurement problem and, when its acceptance conditions hold, yields an auditable posterior estimate.
\end{abstract}

\noindent\textbf{Keywords:} uncertainty quantification; generative artificial intelligence; semantic uncertainty; semiparametric inference; calibration; conformal prediction.

\section{Introduction}
Language models are now used to assist judgement in medicine, law, cybersecurity, science, engineering, finance and many other areas. Their adoption extends far beyond a specialist community: a survey of 100,000 workers in exposed occupations found that one half had used ChatGPT \citep{humlum2025adoption}. Although these systems appear conversational, their native statistical output is a conditional distribution over possible words and phrases. Professional decisions, by contrast, require uncertainty over alternatives that have meaning in the application: competing diagnoses, hypotheses, threat conditions or operational states. This is a fundamental uncertainty-quantification problem for generative systems: the observable probability law and the scientifically relevant estimand live on different spaces. Can language probabilities shaped by evidence selection and prompt presentation recover a posterior that is stable under information-equivalent changes in the prompt's words, order and numerical format? We refer informally to such a statistically defined bridge from language probabilities to state probabilities as a \emph{semantic map}.

\noindent\textbf{Semantic map:} Consider a simple clinical example in which ``urgent review'' and ``immediate escalation'' express the same patient-assessment state. One prompt may divide probability $.45$ and $.15$ between them, while an information-equivalent rewording divides it $.20$ and $.40$. The most probable phrase changes, but the combined urgent meaning remains $.60$. Prompt engineering can therefore redistribute probability among synonymous expressions without defining their shared scientific meaning or establishing a calibrated state posterior. A prespecified semantic map is needed before posterior recovery and stability can be tested.

Constructing the map remains non-trivial because different probability quantities must not be conflated. The candidate state vectors $(0.51,0.48,0.01)$ and $(0.98,0.01,0.01)$ have the same modal state but very different concentration. Asking a model to print ``the probability of urgency is 70\%'' produces a \emph{numerically elicited probability}, which may reflect rounding or instruction-following conventions. A more direct language-based measurement is the \emph{continuation probability}: the conditional probability of a complete phrase such as ``urgent review'', calculated from its conditional token probabilities when available.\footnote{N.B. the continuation probability and the elicited number are separate observable measurements; neither is assumed to equal the other or the posterior.} Although closer to the model's native statistical output, it remains a prompt-dependent probability over language rather than a prompt-invariant state posterior.

The evaluator chooses the evidence and its prompt; the fitted service returns conditional token probabilities, which may be multiplied to obtain probabilities for complete phrases. The desired quantity lies one step further on: a posterior over meaningful states. We call the route from prompt to phrase probabilities the \emph{language channel}; the semantic map is the proposed bridge from that channel to the posterior target.

This distinction is more than terminological. Prompt design is an experimental factor, not a substitute for an estimand, calibration or validation. Rewording may alter a probability, classification, recommendation or refusal without altering the information supplied. Fluency and repeatability are therefore weak evidence of semantic stability. Informally, \emph{prompting} means choosing the wording, order, instructions and response format through which the evidence is presented. The resulting language probabilities may depend on these choices; the target posterior, under a fixed evidence set and reference model, should not depend merely on information-equivalent wording.

\textbf{The central methodological contribution is a scientifically defined and empirically testable map from phrase probabilities to $P(\text{state}\mid E)$.} Meaning-equivalent grouping has useful precedents \citep{farquhar2024semantic,kuhn2023semantic}, but the language model does not itself supply this map. We place the grouping within a semiparametric measurement model, retain unmatched language explicitly, and test calibration, identification and stability on untouched observations. The language distribution remains unrestricted; a low-dimensional calibration map connects it to the reference posterior.

The paper is organised around three questions. \emph{First}, does semantic measurement add information beyond a probability printed by the same fitted model? \emph{Second}, can it recover the target posterior on new cases, with calibrated uncertainty? \emph{Third}, is it stable under changes in presentation while remaining responsive to genuine changes in evidence? These questions move from practical value, through statistical recovery, to robustness. They also determine the order of the theory, experimental design and results.

The answers also provide a basis for runtime governance. Rather than trusting a fluent response or printed confidence, a deployed procedure can admit a state estimate only within its validated scope and otherwise abstain, request review or seek additional evidence.

In the running example, phrase probabilities $.45$, $.15$, $.25$, $.10$ and $.05$ may correspond to ``urgent review'', ``immediate escalation'', ``routine follow-up'', ``insufficient information'' and other responses. Semantic grouping assigns $.60$ to urgent language; held-out calibration may map the grouped vector to a state posterior such as $(.51,.48,.01)$. The two distributions are intentionally distinct. Words are measurements, not scientific states, and their probabilities must earn a posterior interpretation.

The proposed formulation draws together several statistical traditions that have largely developed separately.

\noindent\textbf{Calibration and forecast evaluation.} Classical work distinguishes calibration from sharpness and uses proper scores and prequential evaluation to assess probabilistic forecasts \citep{dawid1982well,dawid1984prequential,degroot1983comparison,gneiting2007strictly}. Neural-network calibration and conformal inference extend this concern to flexible predictors and finite-sample coverage \citep{angelopoulos2023conformal,guo2017calibration,lei2018distribution,vovk2005algorithmic}. Yet these theories begin with a declared outcome and an already defined predictive distribution. Here the difficulty arises earlier: the fitted model supplies a distribution over language, while the scientifically relevant state space and its relation to that language still require definition. Repeated prompt adjustment on the same examples also recalls the familiar danger of specification search \citep{chatfield1995model}.

\noindent\textbf{Confidence and uncertainty in language models.} Token probabilities, answer likelihoods and stated confidence can all contain useful signal \citep{jiang2021calibration,lin2022teaching,tian2023justask}. Rank calibration offers a complementary assessment without forcing heterogeneous scores onto one numerical scale \citep{huang2024rank}. This literature also makes plain that self-reported numbers and language probabilities are different measurements. Neither is thereby a posterior over a declared latent state.

\noindent\textbf{Semantic uncertainty.} Different strings may express the same meaning. Semantic-entropy methods group equivalent generations and can assist hallucination detection \citep{farquhar2024semantic,kuhn2023semantic}; decision-theoretic formulations offer an alternative \citep{wang2025subjective}. We ask a different inverse question: can the observable language distribution recover a separately defined finite-state posterior, and can failure be attributed to semantic grouping, probability calculation, uncovered language or unstable inversion?

\noindent\textbf{Inverse problems, experiments and compositions.} Treating language as a stochastic measurement brings familiar theory into view. Comparison of experiments describes information loss and sufficiency \citep{blackwell1953equivalent,lecam1964sufficiency,torgersen1991comparison}; inverse-problem theory explains ill-conditioned recovery \citep{engl1996regularization}; and compositional analysis supplies coordinates for probability vectors \citep{aitchison1982statistical,egozcue2003isometric}. The semantic map connects these ideas to a fitted model's conditional language distribution.

\subsection{Contributions of the paper}

\paragraph{Research gap.} Existing work does not jointly provide (i) a declared map from complete verbal responses to finite states, including unmatched language; (ii) calibration against an independently specified posterior; (iii) decomposition of semantic, missing-mass and inverse-recovery error; and (iv) held-out tests of identification and presentation stability. Together these distinguish a semantic map from an informal phrase dictionary or another prompt-engineering recipe.

\paragraph{Theoretical contribution.} The mathematical novelty is not a claim to have rediscovered image measures, total-variation contraction or singular-value inequalities. Those are established results and are cited as such. The contribution is to construct a new semantic measurement experiment around them and to derive guarantees that the general results do not state: an error decomposition for finite scored approximations of semantic cells; amplification by unexpressed mass; propagation of prompt-dependent language-law perturbations through semantic conditioning and calibration; and a combined semantic-to-posterior error budget. These results connect quantities that can be audited in the language experiment to error in the posterior estimand.

\paragraph{Structure of the theory.} The theoretical roadmap has four links. A standard image-measure lemma is first specialised to the unexpressed-state construction and its required normalisation. The paper's approximation theorem then supplies the new semantic error decomposition. A perturbation theorem combines standard data processing with the conditioning step needed for declared-state probabilities. Finally, a standard inverse-recovery result is joined to those semantic errors in a new combined bound, whose affine corollary gives the singular-value diagnostic an operational interpretation.

\paragraph{Empirical contribution.} The experiments answer the three central questions rather than offer a model leaderboard. Professional text compares semantic measurement with probabilities printed by the same fitted model. Controlled finite-state experiments test exact posterior recovery and conformal coverage across two fitted models. Finally, paraphrases, reorderings, omissions and contradictions separate presentation stability from responsiveness to evidence. The answers are encouraging but qualified: semantic measurement adds information, recovers held-out posteriors with calibrated uncertainty on the sampled design, and is largely stable under paraphrase; raw word uncertainty is not posterior uncertainty, however, and evidence order remains consequential.

\paragraph{Practical implication.} The practical claim is deliberately modest in form, though consequential in use. We do not claim access to a model's inaccessible ``true belief'', nor should every generated probability be trusted. Rather, an observable language distribution can become an auditable posterior estimator when explicit conditions hold. This supplies a runtime-governance rule: permit downstream use within the validated scope, but trigger abstention, additional evidence or human review when semantic coverage, inverse conditioning or predictive coverage fails.

\subsection{Organisation of the paper}
The remainder of the paper is as follows. Section~2 gives the statistical setup, defining both the reference probability model and the indirect observation supplied by the response distribution. Section~3 formulates identification, estimation and uncertainty. Section~4 develops the mathematical theory. Section~5 describes the experimental design used to test its observable implications, Section~6 reports the experimental results, and Section~7 concludes. The appendix provides notation, methodological lineage and step-by-step proofs.

\def\BeliefLensSetupOnly{1}
\ifdefined\BeliefLensSetupOnly
\newcommand{\BeliefReferenceHeading}{\section{Statistical setup}\subsection{Reference probability model and estimand}}
\newcommand{\BeliefPromptsHeading}{\subsection{Indirect observation through a response distribution}\subsubsection{Prompts as designed inputs}}
\newcommand{\BeliefContinuationHeading}{\subsubsection{Conditional continuation laws and semantic structure}}
\newcommand{\BeliefPushforwardHeading}{\subsubsection{Semantic pushforward}}
\newcommand{\BeliefMismatchHeading}{\section{Identification, estimation and uncertainty}\subsection{Statistical mismatch and inferential target}}
\newcommand{\BeliefCalibrationHeading}{\subsection{Calibration and verification}}
\newcommand{\BeliefNotationHeading}{\noindent For convenience, Table~\ref{tab:notation-guide} summarizes the primary mathematical objects used in the setup and subsequent analysis.}
\newcommand{\BeliefNotationBegin}{\begin{table}[H]\centering\small\caption{Primary mathematical objects in the statistical formulation.}\label{tab:notation-guide}}
\newcommand{\BeliefNotationEnd}{\end{table}}
\else
\documentclass[11pt]{article}
\usepackage[margin=1in]{geometry}
\usepackage[T1]{fontenc}
\usepackage{amsmath,amssymb,mathtools,mathrsfs,enumitem,microtype,booktabs,tabularx}
\usepackage{tikz}
\usetikzlibrary{arrows.meta,positioning}
\usepackage[hidelinks]{hyperref}
\setlist[enumerate]{leftmargin=*,itemsep=8pt,topsep=5pt}
\renewcommand{\arraystretch}{1.18}
\title{Measure-Theoretic Background for the Semantic Measurement Problem\\\large A short introductory note}
\author{Matthew Dixon}
\date{1 August 2026}

\begin{document}
\maketitle

\noindent
\textbf{Purpose.} This note introduces the small amount of measure theory
needed for the statistical setup. The notation is the same as in the companion
setup note. The central idea is simple: a probability model requires a set of
possible outcomes, a declared collection of observable events, and a rule that
assigns probabilities to those events.

The main development follows Chapters~2 and~3 of Capi\'nski and Kopp:
probability spaces and Borel sets first, followed by measurable functions,
random variables, generated $\sigma$-fields and induced distributions. The
final discussion of conditional probability kernels is clearly marked as the
one extension needed by the paper.

\noindent\textbf{Terminology inherited from Capi\'nski and Kopp.} We use
\[
 (\Omega,\mathscr F,\mathbb P)
\]
for a probability space. The set $\Omega$ is the \emph{sample space}; a point
$\omega\in\Omega$ is an \emph{outcome} or \emph{sample point}; and a set
$A\in\mathscr F$ is an \emph{event}. Thus $\omega$ is not an event and
$\Omega$ is not an event space. For example, when a die is rolled,
\[
 \Omega=\{1,2,3,4,5,6\},\qquad \omega=4,
 \qquad A=\{2,4,6\}\in\mathscr F.
\]
Here $\omega=4$ is the realized outcome, while $A$ is the event ``an even
number is rolled.'' Capi\'nski and Kopp's Definition~2.7 calls the members of
$\mathscr F$ events.

Chapter~3 of that text reserves \emph{random variable} for a measurable map
$Z:\Omega\longrightarrow\mathbb R$. The paper uses the common statistical
extension \emph{categorical random variable} for a measurable map into a
finite state space. If strict Chapter~3 notation is desired, choose a numerical
coding $\iota:\mathcal X\longrightarrow\{1,\ldots,K\}\subset\mathbb R$;
then $\iota\circ X:\Omega\longrightarrow\mathbb R$ is a random variable in
the book's literal sense. No probability statement depends on the arbitrary
numerical coding.

\section*{How this note builds the bridge}

The note is not intended as a second textbook. Its purpose is to begin with
the Chapter~2--3 construction and alter one component at a time. The complete
bridge is:
\newcommand{\BeliefReferenceHeading}{\subsection*{Reference probability model and estimand}}
\newcommand{\BeliefPromptsHeading}{\subsection*{Prompts as designed inputs}}
\newcommand{\BeliefContinuationHeading}{\subsection*{Conditional continuation laws and semantic structure}}
\newcommand{\BeliefPushforwardHeading}{\subsection*{Semantic pushforward}}
\newcommand{\BeliefMismatchHeading}{\subsection*{Statistical mismatch and inferential target}}
\newcommand{\BeliefCalibrationHeading}{\subsection*{Calibration and verification}}
\newcommand{\BeliefNotationHeading}{\section*{Compact notation guide}}
\newcommand{\BeliefNotationBegin}{\begin{center}\small}
\newcommand{\BeliefNotationEnd}{\end{center}}
\fi

\BeliefReferenceHeading
This section fixes the probability model independently of the language-based
measurement introduced later. It first specifies the latent state and observed
evidence, then defines the conditional state distribution that the method is
intended to estimate.

\noindent\textbf{Reference probability space and random variables.}
Begin with
$(\Omega,\mathscr F,\mathbb P)$. This notation follows the latent-state
convention, not the regression convention: $X$ is the unobserved scientific
state and $Y$ is its observed evidence. Using the conventional symbol
$\Omega$ for their common sample space, define
\[
 X:(\Omega,\mathscr F)\longrightarrow
   (\mathcal X,\mathcal B(\mathcal X)),
 \qquad
 Y:(\Omega,\mathscr F)\longrightarrow
   (\mathcal Y,\mathscr Y).
\]
$X$ represents the latent scientific state and $Y$ represents the observed
evidence. They are jointly distributed under $\mathbb P$.

\noindent\textbf{Sampling model.}
For $n$ sampled scenarios,
\[
 (X_i,Y_i)\overset{\mathrm{i.i.d.}}{\sim}\mathbb P_{X,Y},
 \qquad i=1,\ldots,n.
\]
$\mathscr F$ contains measurable events such as
$\{X=\text{urgent}\}$ or $\{Y_1>39\}$; these are statistical events, not
prompts. The measure $\mathbb P$ assigns probabilities to them.

\noindent\textbf{Reference posterior estimand.}
The estimand is the conditional distribution of the latent state given the
evidence. For $B\in\mathcal B(\mathcal X)$, write
\[
 \Pi^\star(B\mid y)=\mathbb P(X\in B\mid Y=y),
\]
and collect its probabilities on the $K$ declared states as
\[
 \boldsymbol\pi^\star(y)
 =\bigl(\Pi^\star(\{x_1\}\mid y),\ldots,
         \Pi^\star(\{x_K\}\mid y)\bigr)\in\Delta_K.
\]
Here
\[
 \Delta_K=\left\{\boldsymbol z\in[0,1]^K:
                    \sum_{k=1}^K z_k=1\right\}
\]
is the $K$-state probability simplex.
Thus $\Pi^\star$ is the reference conditional probability kernel and
$\boldsymbol\pi^\star(y)$ is the corresponding finite-dimensional posterior
vector. Both are defined by the scientific probability model and the observed
evidence; neither is defined by the fitted language model or by the wording
used to present that evidence.

\BeliefPromptsHeading
The reference posterior is not observed directly through the fitted language
model. Instead, the evidence is presented under a declared experimental
design, and the observable output is a conditional probability distribution
over verbal responses. This section constructs that indirect measurement in
the order evidence $\rightarrow$ context $\rightarrow$ response law
$\rightarrow$ semantic state probabilities.

\noindent\textbf{Fitted model and context.}
In this experiment, the mathematical input to the fitted language model
$\mathcal M_{\widehat\theta}$ is a complete context $c\in\mathcal C$,
referred to informally as a \emph{prompt}. The fitted parameters
$\widehat\theta$ remain fixed: the model's training procedure, parameters and
internal representations lie outside the experimental scope, while its
observable conditional probability laws are the measurements studied here.
The contexts are generated by the family of measurable maps
\[
 C_u:(\mathcal Y,\mathscr Y)
 \longrightarrow(\mathcal C,\mathscr C),
 \qquad u\in\mathcal U,
\]
where $C_u$ inserts evidence into presentation rule $u$ and returns the
context supplied to the fitted language model. Here
$(\mathcal Y,\mathscr Y)$ is the evidence space,
$(\mathcal C,\mathscr C)$ is the space of admissible prompts or complete
contexts, and $\mathcal U$ is a prespecified finite index set of presentation
rules,
\[
 \mathcal U=\{u_1,\ldots,u_m\}.
\]
In this paper, $(\mathcal Y,\mathscr Y)$ is a standard Borel evidence space,
so $\mathscr Y=\mathcal B(\mathcal Y)$, while the finite or countable raw
context space $\mathcal C$ has the discrete topology and hence
$\mathscr C=\mathcal B(\mathcal C)=\mathcal P(\mathcal C)$.

\noindent\textbf{Presentation rule.}
Note that an element $u\in\mathcal U$ is not the realised prompt string
itself. It simply identifies a complete rule for constructing that string,
including the task
instruction, placement and formatting of evidence, ordering of fields, and
permitted response phrases. Each $C_u$ is $\mathscr Y$-measurable: the
collection of evidence values that produce any declared context event must be
a measurable evidence event.

\noindent\textbf{Minimal example.} Take $\mathcal Y=\mathbb R$ with
$\mathscr Y=\mathcal B(\mathbb R)$ and let
\[
 \mathcal C
 =\{\text{high-temperature context},
      \text{ordinary-temperature context}\}
\]
with the discrete $\sigma$-algebra $\mathscr C=\mathcal P(\mathcal C)$, where
$\mathcal P(\mathcal C)$ denotes the power set, or collection of all subsets,
of $\mathcal C$. If
\[
 C_u(y)=
 \begin{cases}
 \text{high-temperature context},&y\geq39,\\
 \text{ordinary-temperature context},&y<39,
 \end{cases}
\]
then
\[
 C_u^{-1}\{\text{high-temperature context}\}=[39,\infty)
 \in\mathcal B(\mathbb R),
\]
and the analogous inverse image for the ordinary context is
$(-\infty,39)$. Hence $C_u$ is measurable.

Since $Y$ is random and $C_u$ is measurable, $C_u(Y)$ is a context-valued
random variable. For scenario $i$, the realised context is
\[
 c_{u,i}=C_u(y_i)\in\mathcal C.
\]

In the baseline experiment, prompts have no sampling space: the rules in
$\mathcal U$ are fixed experimental-design settings and are deliberately
crossed with evidence scenarios.

\BeliefContinuationHeading
The preceding subsection specifies what is supplied to the fitted model. We
now specify what is observed from it: a probability law over complete verbal
responses, followed by the prespecified grouping that assigns those responses
to scientific states.

Before relating language probabilities to scientific states, we must specify
the model's native probabilistic output and the measurable space on which its
law is defined.
So far, we have specified the evidence and prompts supplied as inputs to the
fitted language model, together with the scientific states about which we seek
inference. We now specify the model's language output and, more particularly,
the conditional probability assigned to each possible next token.
Let $\mathcal V$ denote the fitted language model's finite or countable token
vocabulary. Given a context $c\in\mathcal C$, the model supplies successive
conditional token probabilities. If a complete verbal continuation is the
finite token sequence
\[
 r=(v_1,\ldots,v_m)\in\mathcal V^{*},
\]
including its prescribed termination, then its conditional probability is
determined by the successive next-token distributions. Thus the observable
language probabilities concern token sequences, whereas our inferential
target is a probability distribution over the scientific state space
$\mathcal X$. Herein lies the gap that this paper seeks to fill.

The bridge has two stages. First, the successive token probabilities are
assembled into a prompt-dependent probability law over complete verbal
continuations, referred to informally below as \emph{phrases}. Second, those
continuations are grouped according to their
scientific meaning and the resulting law is transferred to the state space.
For the first stage, introduce the continuation space
$\mathcal R\subseteq\mathcal V^{*}$ containing the task-relevant complete
verbal continuations. It is the domain of the observable continuation law.

We endow the finite or countable set $\mathcal R$ with the discrete topology
$\tau_{\mathcal R}=\mathcal P(\mathcal R)$, and its Borel $\sigma$-algebra is
$\mathscr R:=\mathcal B(\mathcal R)=\mathcal P(\mathcal R)$.
Thus $(\mathcal R,\mathscr R)$ is the measurable continuation space used in
the probability model; no additional metric structure is required.

\noindent\textbf{Conditional continuation law.}
For each presentation rule $u\in\mathcal U$, the fitted model defines the
conditional continuation kernel
\[
 Q_u:\mathcal C\times\mathscr R\longrightarrow[0,1],
 \qquad (c,A)\longmapsto Q_u(A\mid c),
\]
from $(\mathcal C,\mathscr C)$ to $(\mathcal R,\mathscr R)$. Hence, for every
fixed context $c\in\mathcal C$, the conditional continuation law
\[
 Q_u(\cdot\mid c):\mathscr R\longrightarrow[0,1]
\]
is a probability measure. In particular, $Q_u(A\mid c)$ is the probability
that the fitted model, when supplied the input context $c$, generates a
continuation belonging to the phrase event $A\in\mathscr R$; for a single
phrase $r\in\mathcal R$, this is $Q_u(\{r\}\mid c)$. For every fixed
$A\in\mathscr R$, the map
$c\mapsto Q_u(A\mid c)$ is $\mathscr C$-measurable.
Each singleton mass $Q_u(\{r\}\mid c)$ is obtained by chaining the fitted
model's next-token probabilities along the continuation, with the probability
of each token conditioned on $c$ and on all preceding tokens in $r$. Since
$\mathcal R$ is countable, for every $A\in\mathscr R$,
\[
 Q_u(A\mid c)=\sum_{r\in A}Q_u(\{r\}\mid c).
\]
Consequently,
$\bigl(\mathcal R,\mathscr R,Q_u(\cdot\mid c)\bigr)$ is a probability space
for every fixed $(u,c)$. With realised evidence $y_i$, the fitted model
$\mathcal M_{\widehat\theta}$ supplies
$Q_u(\cdot\mid C_u(y_i))$. The language-model index is suppressed because
$\mathcal M_{\widehat\theta}$ is fixed. Since the context
$C_u(y_i)$ depends on the presentation rule, this continuation law generally
remains prompt-dependent even when the evidence $y_i$ is unchanged.

For this example, let
\[
 \mathcal R=\{r_1,r_2,r_3\}
 =\{\text{urgent review},\text{immediate escalation},
      \text{routine follow-up}\}.
\]
Hold fixed a patient record $y_i$ and present it first through
an evidence-first rule $u_1$ and then through an information-equivalent
question-first rule $u_2$. For the same three continuations, the fitted model
might return
\[
\begin{array}{c|ccc}
 & \text{urgent review} & \text{immediate escalation}
 & \text{routine follow-up}\\ \hline
 Q_{u_1}(\cdot\mid C_{u_1}(y_i)) & 0.45&0.15&0.40\\
 Q_{u_2}(\cdot\mid C_{u_2}(y_i)) & 0.20&0.40&0.40
\end{array}
\]
The evidence and scientific question are unchanged, but the two conditional
continuation laws differ because the contexts constructed by $u_1$ and $u_2$
use different wording or ordering.

\noindent\textbf{State augmentation.}
The second stage begins by accommodating a continuation that does not express
a declared state. Augment
the state space by one unexpressed category,
\[
 \mathcal X_0=\mathcal X\cup\{x_0\},
\]
where $x_0$ denotes \emph{unexpressed by the declared states}.

\noindent\textbf{Semantic map.}
The continuation law assigns probabilities to verbal phrases, whereas the
inferential target assigns probabilities to scientific states. Before these
phrase probabilities can be transferred to the state space, we must specify
which phrases express the same scientific meaning and how out-of-scope phrases
are treated. This semantic structure is introduced by the prespecified
$\mathscr R$-measurable map
\[
 \phi:(\mathcal R,\mathscr R)\longrightarrow
       (\mathcal X_0,\mathcal B(\mathcal X_0)).
\]
The map groups meaning-equivalent phrases and sends irrelevant, ambiguous or
out-of-scope phrases to $x_0$. For example,
$\phi(r_1)=\phi(r_2)=\text{urgent}$ and
$\phi(r_3)=\text{routine}$, so
$\phi^{-1}\{\text{urgent}\}=\{r_1,r_2\}\in\mathscr R$.
Once $\phi$ has been specified, the probabilities of meaning-equivalent
phrases can be aggregated into probabilities over the declared scientific
states. Because the underlying continuation law may depend on $u$, these
aggregated probabilities may also remain prompt-dependent. By contrast, the
reference posterior
$\Pi^\star(\cdot\mid y)$ is defined by the evidence and scientific model and
does not depend on how the same evidence is worded. It is, however,
evidence-dependent. Establishing when the
language-derived probabilities recover that prompt-independent target is the
calibration and stability problem studied in the paper.

Figure~\ref{fig:semantic-bridge} previews the complete construction. The upper path is observable; the lower path defines the independent inferential target against which the calibrated estimate is evaluated.

\begin{figure}[b]
\centering
\resizebox{0.80\linewidth}{!}{%
\begin{tikzpicture}[
  box/.style={draw,rounded corners,align=center,inner sep=4pt,font=\small},
  arr/.style={-{Latex[length=2mm]},thick},
  every node/.style={font=\small}
]
\node[box] (e) at (0,0) {evidence\\$y\in\mathcal Y$};
\node[box] (c) at (2.6,0) {context\\$C_u(y)\in\mathcal C$};
\node[box] (q) at (5.4,0) {continuation law\\$Q_u(\cdot\mid C_u(y))$};
\node[box] (s) at (9.3,0) {semantic state law on $\mathcal X_0$\\$\phi_\#Q_u(\cdot\mid C_u(y))$};
\node[box] (hat) at (9.3,-2.2) {estimated posterior\\$\widehat{\boldsymbol\pi}_u(y)$ on $\mathcal X$};
\node[box] (pi) at (4.4,-2.2) {reference posterior\\$\boldsymbol\pi^\star(y)$ on $\mathcal X$};
\draw[arr] (e) -- node[above] {$C_u$} (c);
\draw[arr] (c) -- node[above] {$Q_u$} (q);
\draw[arr] (q) -- node[above] {$\phi$} (s);
\draw[arr] (s) -- node[right,align=left] {declared-state normalization\\and held-out calibration} (hat);
\draw[arr,dashed] (pi) -- node[above] {held-out} node[below] {target} (hat);
\draw[arr] (e.south) |- node[pos=.72,below] {scientific model} (pi.west);
\end{tikzpicture}
}
\caption{From evidence to a posterior estimate. The semantic map transfers the continuation law to declared states; normalization and held-out calibration are assessed against the independently defined reference posterior.}
\label{fig:semantic-bridge}
\end{figure}

\BeliefPushforwardHeading
The continuation law and semantic map have now been defined separately. Their
composition gives the observable state-level measurement used by the
subsequent statistical analysis.

Recall that the goal of the experiment is to recover a probability
distribution over the scientific states from the observable probabilities
over verbal continuations. Given evidence $y$ and presentation rule $u$, the
conditional continuation kernel $Q_u$ defined above supplies the
continuation measure
$Q_u(\cdot\mid C_u(y))$ on $(\mathcal R,\mathscr R)$. The semantic map
$\phi$ transfers this measure to $(\mathcal X_0,\mathcal B(\mathcal X_0))$ by
\[
 (\phi_\#Q_u)(B\mid C_u(y))
 =Q_u\!\left(\phi^{-1}(B)\mid C_u(y)\right),
 \qquad B\in\mathcal B(\mathcal X_0).
\]
As Figure~\ref{fig:semantic-bridge} shows, the semantic pushforward is an intermediate measurement; declared-state normalization and held-out calibration are still required before comparison with the reference posterior.

To see this numerically, fix $u$ and $c$ and write
$Q=Q_u(\cdot\mid c)$ for the probability measure on the measurable space
$(\mathcal R,\mathscr R)$. Suppose, for the phrases
$\{r_1,r_2,r_3\}$, that
\[
 Q(\{r_1\})=0.45,
 \qquad Q(\{r_2\})=0.15,
 \qquad Q(\{r_3\})=0.40.
\]
Here $Q$ represents the probability law assigned to complete verbal
continuations. It assigns probabilities to sets of phrases; it is not the
reference probability law $\mathbb P$ on scientific outcomes. The
pushforward $\phi_\#Q$ represents the resulting probability distribution over
declared meanings after phrases have been grouped by $\phi$.
Then, using the inverse image calculated in the preceding semantic-map
construction,
\[
\begin{aligned}
 (\phi_\#Q)(\{\text{urgent}\})
 &=Q\!\left(\phi^{-1}\{\text{urgent}\}\right)\\
 &=Q(\{r_1,r_2\})\\
 &=Q(\{r_1\})+Q(\{r_2\})\\
 &=0.45+0.15=0.60.
\end{aligned}
\]
The pushforward $\phi_\#Q$ is therefore a probability measure on
$(\mathcal X_0,\mathcal B(\mathcal X_0))$.

\BeliefMismatchHeading
The pushforward places the observable response law on the target state space
but does not identify the reference posterior. It is a prompt-dependent
semantic measurement: $\phi$ groups continuations by declared meaning, but
does not establish that their aggregated probabilities have a posterior
interpretation. By contrast, the target posterior is defined independently
from the evidence on the $K$ scientific states, whereas $\phi_\#Q_u$ also
retains the unexpressed state $x_0$. Keeping measurement and target separate
prevents circular calibration and makes recovery falsifiable on held-out
observations. This section therefore formulates the remaining inverse problem,
defines its estimator and states the conditions under which recovery is
defensible.

\BeliefCalibrationHeading
Having separated the observable semantic measurement from the estimand, we
can define the finite-dimensional calibration input and the resulting
posterior estimator.

Calibration requires an observable $K$-state input on the same simplex as the
target posterior. The semantic pushforward is instead defined on the augmented
space $\mathcal X_0$, which also contains the unexpressed category $x_0$.
We must therefore normalize the probability mass assigned to the $K$ declared
states before it can be used as the calibration input. The excluded mass is
retained separately as a coverage diagnostic. Define
\[
\begin{aligned}
 M_u(c)
 &=\sum_{k=1}^{K}Q_u\!\left(\phi^{-1}(\{x_k\})\mid c\right)\\
 &=1-Q_u\!\left(\phi^{-1}(\{x_0\})\mid c\right),\\[2pt]
 p_{u,k}(c)
 &=\frac{Q_u\!\left(\phi^{-1}(\{x_k\})\mid c\right)}{M_u(c)},
 \qquad k=1,\ldots,K.
\end{aligned}
\]
whenever $M_u(c)>0$, and write
$\boldsymbol p_u(c)=(p_{u,1}(c),\ldots,p_{u,K}(c))\in\Delta_K$.
Thus $x_0$ is deliberately omitted from the coordinates of
$\boldsymbol p_u(c)$. Equivalently, $\boldsymbol p_u(c)$ is the conditional
distribution over $\mathcal X$ given that the continuation is mapped into one
of the declared states rather than into $x_0$. Its excluded probability
$1-M_u(c)$ remains reported separately.
More precisely, let
\[
 \mathcal C_u^+=\{c\in\mathcal C:M_u(c)>0\},
 \qquad
 \mathscr C_u^+=\{D\cap\mathcal C_u^+:D\in\mathscr C\}.
\]
Because $Q_u$ is a probability kernel and $\phi$ is measurable, every map
$c\mapsto Q_u\{\phi^{-1}(\{x_k\})\mid c\}$ is $\mathscr C$-measurable. Hence
$M_u$ is measurable and
\[
 \boldsymbol p_u:(\mathcal C_u^+,\mathscr C_u^+)
 \longrightarrow(\Delta_K,\mathcal B(\Delta_K))
\]
is measurable. Its composition $\boldsymbol p_u\circ C_u$ is therefore
measurable on the corresponding evidence domain
$\{y\in\mathcal Y:C_u(y)\in\mathcal C_u^+\}$, equipped with the trace of
$\mathscr Y$. The map $\boldsymbol p_u$ is only an intermediate, observable
calibration input; it is not the target posterior. The expressed mass $M_u$
is retained as a coverage diagnostic.

Let $\mathcal P_u\subseteq\Delta_K$ denote the prespecified calibration
domain. The paper then estimates the held-out calibration map
\[
 \widehat g_u:(\mathcal P_u,\mathcal B(\mathcal P_u))
 \longrightarrow(\Delta_K,\mathcal B(\Delta_K)),
 \qquad
 \widehat{\boldsymbol\pi}_u(y)
 =\widehat g_u\!\left[\boldsymbol p_u\{C_u(y)\}\right],
\]
from the observable lexical composition to the reference posterior and
derives conditions under which recovery exists, is conditionally unique and
is stable. Identification requires the population forward relation to be
one-to-one, with adequate separation, on the declared operating domain; the
theory then asks whether $\widehat g_u$ approximates its inverse there with
controlled error.

Thus the complete posterior estimator includes prompt construction,
the continuation law, semantic grouping, declared-state normalization and
held-out calibration; it does not terminate at $\phi_\#Q_u$.
The vector $\boldsymbol p_u$ is therefore the calibration input,
$M_u$ is a coverage diagnostic, $\widehat{\boldsymbol\pi}_u$ is the estimated
posterior, and $\boldsymbol\pi^\star$ is the held-out calibration and
validation target.
Under the sampling assumption stated in the probability-space setup, the calibration sample
contains the paired measurements and targets
\[
 \left\{\left(
 \boldsymbol p_u\{C_u(Y_i)\},
 \boldsymbol\pi^\star(Y_i)
 \right):i=1,\ldots,n\right\}.
\]
Repeated language measurements for one scenario remain in one cluster and
are not treated as additional independent evidence samples.
The continuation-law and semantic-structure subsection introduces the
paper-specific measurement construction. The statistical-mismatch and
calibration subsections state the substantive inferential problem; the
remaining components use standard measurable-space, kernel and pushforward
constructions.

\BeliefNotationHeading

\BeliefNotationBegin
\begin{tabularx}{\textwidth}{@{}l l X@{}}
\toprule
Notation & Mathematical type & Meaning \\
\midrule
$(\Omega,\mathscr F,\mathbb P)$ & Probability space & Underlying reference probability model \\
$(\mathcal X,\mathcal B(\mathcal X))$ & Finite measurable space & Declared scientific state space \\
$(\mathcal Y,\mathscr Y)$ & Standard Borel space & Evidence space \\
$X$ & Measurable map $\Omega\to\mathcal X$ & Latent scientific state \\
$Y$ & Measurable map $\Omega\to\mathcal Y$ & Observed evidence \\
$\Pi^\star$ & Probability kernel & Reference conditional state distribution \\
$\boldsymbol\pi^\star(y)$ & Element of $\Delta_K$ & Reference state-probability vector \\
$C_u$ & Measurable map $\mathcal Y\to\mathcal C$ & Context constructed from evidence \\
$Q_u$ & Probability kernel & Conditional continuation law \\
$\phi$ & Measurable map $\mathcal R\to\mathcal X_0$ & Semantic map with unexpressed category $x_0$ \\
$\phi_\#Q_u$ & Pushforward measure & Continuation law expressed on semantic states \\
$M_u(c)$ & Scalar in $[0,1]$ & Probability mass covered by declared states \\
$\boldsymbol p_u$ & Measurable map $\mathcal C_u^+\to\Delta_K$ & Intermediate normalized calibration map \\
$\boldsymbol p_u(c)$ & Element of $\Delta_K$ & Realised calibration input at context $c$ \\
$\widehat g_u$ & Fitted Borel-measurable map $\mathcal P_u\to\Delta_K$ & Held-out calibration map \\
$\widehat{\boldsymbol\pi}_u(y)$ & Element of $\Delta_K$ & Estimated reference posterior \\
\bottomrule
\end{tabularx}
\BeliefNotationEnd

\ifdefined\BeliefLensSetupOnly
\def\BeliefLensAfterSetup{ }
\else
\def\BeliefLensAfterSetup{}
\fi
\BeliefLensAfterSetup

\section*{Extra notes: measure-theoretic background}

\section*{Calibration and verification}

The notation $\widehat g_u$ uses a hat because the calibration map is fitted
from the calibration sample. Before that sample is observed,
$\widehat g_u$ is a random function through its dependence on the random
sample. Conditional on the realised sample and the fixed fitting rule, it is
a deterministic Borel-measurable map from the declared calibration domain
$\mathcal P_u$ into $\Delta_K$.

The phrase \emph{affine log-ratio calibration} refers to the particular
finite-dimensional calibration class used in the reported experiment. A
probability vector lies on a simplex, so its coordinates cannot vary freely.
Choose one state as the reference coordinate and define the log-ratio
coordinate map (conventionally called the additive log-ratio transformation)
\[
 \boldsymbol\ell(\boldsymbol p)
 =\left(\log\frac{p_1}{p_K},\ldots,
         \log\frac{p_{K-1}}{p_K}\right)
 \in\mathbb R^{K-1},
 \qquad \boldsymbol p\in\operatorname{int}(\Delta_K).
\]
The fitted calibration is affine in these unconstrained coordinates:
\[
 \boldsymbol\ell\{\widehat g_u(\boldsymbol p)\}
 =\widehat A_u\boldsymbol\ell(\boldsymbol p)+\widehat{\boldsymbol b}_u,
\]
or, equivalently,
\[
 \widehat g_u(\boldsymbol p)
 =\boldsymbol\ell^{-1}
   \left\{\widehat A_u\boldsymbol\ell(\boldsymbol p)
          +\widehat{\boldsymbol b}_u\right\}.
\]
Here ``affine'' means a linear transformation plus an intercept. The log-ratio
transformation moves the interior of the simplex to Euclidean coordinates;
the inverse transformation returns a valid probability vector. This fitted
map is continuous, and hence Borel measurable, on the interior of its
calibration domain. Boundary probabilities require a separately declared
zero-handling rule and are not silently covered by the displayed formula.

No global injectivity, surjectivity or bijectivity of $\widehat g_u$ is
assumed. For the displayed affine class, bijectivity on the simplex interior
would require $\widehat A_u$ to have full rank, but fitting alone does not
guarantee that condition. More importantly, the paper's identification claim
concerns the population forward relation on a declared operating domain, not
global invertibility of a fitted map. Verification therefore uses untouched
observations to assess recovery error and predictive coverage, while
conditioning diagnostics assess whether relevant posterior directions remain
distinguishable.

\section*{1. Sets, points and inverse images}

Let $S$ and $T$ be sets and let $f:S\to T$ be a function. For a set
$B\subseteq T$, the \emph{inverse image} is
\[
   f^{-1}(B)=\{s\in S:f(s)\in B\}\subseteq S.
\]
It is the event in the input space corresponding to the statement
``$f(s)$ lies in $B$.'' Inverse images, rather than numerical derivatives or
matrix inverses, are the basic operation used to define measurability.

\noindent\textbf{Running example.} Let $S=T=\mathbb R$, let
$f:S\to T$ be $f(s)=s^2$, and let $B=[1,4]\subseteq\mathbb R$. Then
\[
   f^{-1}(B)=[-2,-1]\cup[1,2].
\]
The set $B$ belongs to the codomain, whereas $f^{-1}(B)$ belongs to the
domain. Both are Borel subsets of $\mathbb R$.

\noindent\textbf{Paper connection and added complexity.} The paper replaces
$f$ by a semantic map $\phi_u$ from complete verbal continuations to declared
states. Then $\phi_u^{-1}(\{x_k\})$ is the collection of phrases assigned to
state $x_k$. Unlike $s\mapsto s^2$, this map records a scientific judgement
about meaning, so it must be declared and validated.

\section*{2. Sigma-algebras}

We write $\mathcal P(S)$ for the \emph{power set}, the collection of all
subsets of a set $S$. This script $\mathcal P$ has nothing to do with the
probability measure $\mathbb P$. For example, if $S=\{a,b\}$, then
\[
  \mathcal P(S)=\{\varnothing,\{a\},\{b\},\{a,b\}\}.
\]
Although the power set is occasionally useful for comparison, it is not needed
for the substantive argument below. We shall instead work directly with a declared
$\sigma$-algebra $\mathscr S$ and use $\mathcal B(S)$ for the Borel
$\sigma$-algebra when $S$ has a topology.

A \emph{$\sigma$-algebra} $\mathscr S$ on a set $S$ is a collection of subsets
of $S$ satisfying:
\begin{enumerate}[label=(\roman*)]
\item $S\in\mathscr S$;
\item if $A\in\mathscr S$, then $S\setminus A\in\mathscr S$;
\item if $A_1,A_2,\ldots\in\mathscr S$, then
      $\bigcup_{j=1}^{\infty}A_j\in\mathscr S$.
\end{enumerate}
The pair $(S,\mathscr S)$ is called a \emph{measurable space}. The sets in
$\mathscr S$ are the measurable sets or events. The axioms ensure that
negations, countable alternatives and countable intersections of events remain
measurable.

\noindent\textbf{Worked example.} Let $S=\{a,b\}$ and
\[
 \mathscr S=\{\varnothing,\{a\},\{b\},S\}.
\]
The complement of $\{a\}$ is $\{b\}$, and every union of members of
$\mathscr S$ remains in $\mathscr S$; hence this is a $\sigma$-algebra.
By contrast, $\{\varnothing,\{a\},S\}$ is not a $\sigma$-algebra because it
omits the complement $\{b\}$.

The notation $\sigma(\mathcal G)$ has a different meaning. If
$\mathcal G$ is a collection of subsets of $S$, then
\[
 \sigma(\mathcal G)
 =\bigcap\{\mathscr S:\mathscr S\text{ is a $\sigma$-algebra on $S$ and }
                         \mathcal G\subseteq\mathscr S\}
\]
is the smallest $\sigma$-algebra containing $\mathcal G$. The operator
$\sigma(\cdot)$ therefore acts on a \emph{collection of subsets}, not usually
on the underlying set of points. Writing $\mathscr F=\sigma(S)$ is consequently
ambiguous. The standard forms are
\[
 \mathscr F=\sigma(\mathcal G)
 \quad\text{for a collection of generators }\mathcal G,
 \qquad\text{or}\qquad
 \mathcal B(S)=\sigma\{G\subseteq S:G\text{ is open}\}.
\]
For $S=\mathbb R$, the Borel $\sigma$-algebra can equivalently be generated by
the open intervals:
\[
\mathcal B(\mathbb R)
 =\sigma\{(a,b):a<b,\ a,b\in\mathbb R\}.
\]

\noindent\textbf{Worked example of generation.} Let
$S=\{a,b,c\}$ and $\mathcal G=\{\{a\}\}$. Closure under complements and
unions gives
\[
 \sigma(\mathcal G)=\{\varnothing,\{a\},\{b,c\},S\}.
\]
This $\sigma$-algebra records whether the outcome is $a$, but it does not
distinguish $b$ from $c$.

\noindent\textbf{Why is $(S,\mathscr S)$ called a measurable space when no
measure has yet been specified?} This terminology is easy to misread. The
object $S$ is the underlying set, and $\mathscr S$ is a $\sigma$-algebra of
subsets of $S$. The pair $(S,\mathscr S)$ is called a \emph{measurable space}
because it declares which subsets are eligible to be measured; it does not yet
assign them numerical sizes. A particular set $A\subseteq S$ is called a
\emph{measurable set} when $A\in\mathscr S$. A numerical measure is introduced
only in the triple $(S,\mathscr S,\mu)$. Thus the standard terminology
separates three structures:
\[
\begin{array}{lll}
(S,\mathscr S) & \text{measurable space} &
   \text{declares the measurable sets},\\[2pt]
(S,\mathscr S,\mu) & \text{measure space} &
   \text{also assigns a measure }\mu,\\[2pt]
(S,\mathscr S,\mathbb P) & \text{probability space} &
   \text{has }\mathbb P(S)=1.
\end{array}
\]
The pair is therefore intentional. A triple is required only after a measure
or probability measure has been introduced.

No topology is required to form a measurable space. For example, with
$S=\{a,b,c\}$ and
\[
 \mathscr S=\{\varnothing,\{a\},\{b,c\},S\},
\]
the pair $(S,\mathscr S)$ is already a measurable space. The set $\{a\}$ is a
measurable set because $\{a\}\in\mathscr S$, whereas $\{b\}$ is not measurable
for this choice because $\{b\}\notin\mathscr S$.

A topology is needed only when we want to construct the particular
$\sigma$-algebra called the Borel $\sigma$-algebra. If $(S,\tau)$ is a
topological space, then
\[
 \mathcal B(S)=\sigma(\tau),
\]
the smallest $\sigma$-algebra containing the open sets $\tau$. Hence
$(S,\mathcal B(S))$ is a measurable space obtained from the topology, but a
general measurable space $(S,\mathscr S)$ need not arise this way.

\noindent\textbf{Paper connection and added complexity.} The paper uses
several measurable spaces: the sample space $\Omega$, finite state space
$\mathcal X$, evidence space $\mathcal Y$, context space $\mathcal C$, and
verbal-continuation space $\mathcal R$. They do not need one common topology or
$\sigma$-algebra; each receives the structure appropriate to its elements.

\noindent\textbf{Finite example.} If
$\mathcal X=\{x_1,x_2,x_3\}$ has the discrete topology, then
$\mathcal B(\mathcal X)$ contains all eight subsets of $\mathcal X$. This is
the $\sigma$-algebra used for the finite state space in the paper.

\noindent\textbf{Why not always use every subset?} For finite and countable
discrete spaces, the power set is natural. On an uncountable space such as
$\mathbb R$, assigning probabilities consistently to every conceivable subset
causes mathematical difficulties. Probability is therefore normally defined
on the Borel $\sigma$-algebra.

\section*{3. Borel sets and Borel measurable spaces}

This section slows the construction down because two objects are easily
confused:
\[
 \boxed{B\in\mathcal B(\mathbb R)}
 \qquad\text{versus}\qquad
 \boxed{\mathcal B(\mathbb R)}.
\]
Here $B$ is one set of real numbers. By contrast,
$\mathcal B(\mathbb R)$ is a collection of sets: the Borel $\sigma$-algebra.

\subsection*{3.0 What is the topology of $\mathbb R$?}

A \emph{topology} $\tau$ on a set $S$ is a collection of subsets, called open
sets, satisfying:
\begin{enumerate}[label=(\roman*)]
\item $\varnothing,S\in\tau$;
\item an arbitrary union of members of $\tau$ belongs to $\tau$; and
\item a finite intersection of members of $\tau$ belongs to $\tau$.
\end{enumerate}
The pair $(S,\tau)$ is a topological space. Unlike a $\sigma$-algebra, a
topology need not be closed under complements or countably infinite
intersections.

The \emph{standard topology} on $\mathbb R$, denoted here by
$\tau_{\mathbb R}$, is generated by the open intervals. Equivalently, a set
$G\subseteq\mathbb R$ is open when every $x\in G$ has some radius
$\varepsilon>0$ such that
\[
 (x-\varepsilon,x+\varepsilon)\subseteq G.
\]
For example, $G=(1,4)$ is open. If $x=2$, one may take
$\varepsilon=1/2$, since
\[
 (2-1/2,2+1/2)=(1.5,2.5)\subseteq(1,4).
\]
For a point very near the boundary, say $x=1.1$, one may take
$\varepsilon=0.05$. By contrast, $[1,4]$ is not open: at the endpoint $x=1$,
every interval $(1-\varepsilon,1+\varepsilon)$ contains points below $1$ and
therefore cannot lie inside $[1,4]$.

The topology expresses the ordinary geometry of the real line: which points
are near each other, which sets are neighbourhoods, and which functions are
continuous. For instance, $f(s)=s^2$ is continuous because sufficiently small
changes in $s$ produce small changes in $f(s)$; equivalently, the inverse image
of every open set is open.

The Borel $\sigma$-algebra is obtained by placing the $\sigma$-algebra rules
around this topology:
\[
 \boxed{\mathcal B(\mathbb R)=\sigma(\tau_{\mathbb R}).}
\]
The relationship can therefore be read from left to right as
\[
 \boxed{\text{open intervals}}
 \quad\longrightarrow\quad
 \boxed{\text{standard topology }\tau_{\mathbb R}}
 \quad\longrightarrow\quad
 \boxed{\text{Borel }\sigma\text{-algebra }\mathcal B(\mathbb R)}.
\]
In this precise sense, the Borel construction is related to the topology of
$\mathbb R$: open intervals generate its usual open sets, and those open sets
in turn generate its Borel $\sigma$-algebra.  The topology itself contains
only the open sets; the generated $\sigma$-algebra is larger because it must
also be closed under complements and countable set operations.  For example,
although $[a,b]$ is not open, it is Borel because
\[
 [a,b]
 =\mathbb R\setminus\bigl(( -\infty,a)\cup(b,\infty)\bigr).
\]
Thus every open set becomes a measurable event, as do its complement and all
countable unions and intersections of such sets. For a real-valued random
variable $Y$, ordinary numerical questions are therefore measurable. Examples
include
\[
 \{Y<4\}=Y^{-1}(( -\infty,4)),
\]
\[
 \{1\leq Y\leq4\}=Y^{-1}([1,4]),
\]
and, for a proposed value $y_0$ and tolerance $\varepsilon>0$,
\[
 \{|Y-y_0|<\varepsilon\}
 =Y^{-1}((y_0-\varepsilon,y_0+\varepsilon)).
\]
This is what it means to say that Borel sets connect probability to the
topology of $\mathbb R$: the sets naturally used to describe numerical
proximity, thresholds and ranges are all admitted as events to which
probabilities may be assigned.

\subsection*{3.1 An informal construction recipe}

A \emph{Borel set} is any subset of $\mathbb R$ obtainable from open intervals
using the operations permitted by a $\sigma$-algebra:
\begin{enumerate}[label=\textbf{\arabic*.},leftmargin=*,itemsep=4pt]
\item begin with open intervals such as $(1,4)$;
\item take complements;
\item take countable unions; and
\item take countable intersections.
\end{enumerate}
These operations may be repeated. The collection of every set obtainable in
this way is $\mathcal B(\mathbb R)$.

\noindent\textbf{Example 1: an open interval.}
\[
 (1,4)\in\mathcal B(\mathbb R)
\]
by definition, because open intervals are the starting sets.

\noindent\textbf{Example 2: a closed interval.} Since
\[
 [1,4]
 =\mathbb R\setminus\bigl((-\infty,1)\cup(4,\infty)\bigr),
\]
the set $[1,4]$ is Borel: it is the complement of a union of open sets.

\noindent\textbf{Example 3: a single point.} The singleton $\{2\}$ is Borel
because
\[
 \{2\}
 =\bigcap_{n=1}^{\infty}(2-1/n,2+1/n).
\]
Each interval is open, and the intersection is countable.

\noindent\textbf{Example 4: the integers.} Since every singleton is Borel,
\[
 \mathbb Z=\bigcup_{k\in\mathbb Z}\{k\}
\]
is Borel. The union is countable because $\mathbb Z$ is countable.

\noindent\textbf{Example 5: the rational and irrational numbers.} Similarly,
\[
 \mathbb Q=\bigcup_{q\in\mathbb Q}\{q\}
 \in\mathcal B(\mathbb R).
\]
Its complement
\[
 \mathbb R\setminus\mathbb Q
\]
is therefore also Borel. Thus a Borel set need not resemble a single interval:
the rationals and irrationals are both spread throughout the real line.

\noindent\textbf{Example 6: a union of separated intervals.}
\[
 [-2,-1]\cup[1,2]\in\mathcal B(\mathbb R)
\]
because each closed interval is Borel and a finite union is a countable union.
This is the inverse image of $[1,4]$ under $f(s)=s^2$.

\noindent\textbf{What is not Borel?} Non-Borel subsets of $\mathbb R$ exist,
but they cannot be produced from open intervals by the preceding countable
operations. Such sets are necessarily much less explicit than the events
normally used in elementary statistics. The important practical point is not
to identify one by formula; it is that $\mathcal B(\mathbb R)$ is an extremely
rich, but not all-inclusive, collection of subsets.

\noindent\textbf{Why does $\mathcal B(\mathbb R)$ not contain every subset of
$\mathbb R$?} Every singleton $\{x\}$ is Borel, and every subset
$A\subseteq\mathbb R$ can be written formally as
\[
 A=\bigcup_{x\in A}\{x\}.
\]
If $A$ is countable, this is a countable union, so $A$ is Borel. If $A$ is
uncountable, however, the displayed union is generally uncountable. A
$\sigma$-algebra is required to be closed under \emph{countable} unions, not
arbitrary uncountable unions. Consequently, this representation does not show
that an arbitrary uncountable subset is Borel.

There is also a counting argument. The real line has cardinality
$\mathfrak c$. Because open subsets of $\mathbb R$ can be described using
countably many intervals with rational endpoints, there are only
$\mathfrak c$ Borel sets, even after repeated countable $\sigma$-algebra
operations. The power set $\mathcal P(\mathbb R)$ has strictly larger
cardinality $2^{\mathfrak c}$. Hence
\[
 \mathcal B(\mathbb R)\subsetneq\mathcal P(\mathbb R),
\]
so most subsets of $\mathbb R$ are not Borel.

The restriction to countable operations matches countable additivity of
probability measures and is essential for limit arguments in probability.
The Lebesgue $\sigma$-algebra enlarges $\mathcal B(\mathbb R)$ by adding
subsets of null sets, but it still does not contain every subset of
$\mathbb R$.

\subsection*{3.2 Starting from open sets}

The real line already has a notion of openness. For example,
\[
 (1,4)=\{s\in\mathbb R:1<s<4\}
\]
is open. The Borel $\sigma$-algebra is defined by
\[
 \mathcal B(\mathbb R)
 =\sigma\{G\subseteq\mathbb R:G\text{ is open}\}.
\]
In words, begin with every open subset of $\mathbb R$ and add every set
required by the $\sigma$-algebra rules: complements, countable unions and
countable intersections. The resulting collection is
$\mathcal B(\mathbb R)$.

\noindent\textbf{Running example 1: an open Borel set.} Because $(1,4)$ is
one of the generating open sets,
\[
 (1,4)\in\mathcal B(\mathbb R).
\]
This expression says that the interval $(1,4)$ is one member of the much
larger collection $\mathcal B(\mathbb R)$.

\subsection*{3.3 Obtaining closed sets by complementation}

Every $\sigma$-algebra is closed under complements. Since
$(-\infty,1)\cup(4,\infty)$ is open, its complement is Borel:
\[
 \mathbb R\setminus\bigl[(-\infty,1)\cup(4,\infty)\bigr]=[1,4].
\]
Therefore
\[
 B=[1,4]\in\mathcal B(\mathbb R).
\]
This is why every closed subset of $\mathbb R$ is Borel, even though the
definition began with open sets.

\noindent\textbf{Running example 2: half-open intervals.} The interval
$[1,4)$ is also Borel because
\[
 [1,4)= [1,4]\cap(-\infty,4),
\]
an intersection of two Borel sets. Thus open, closed and half-open intervals
are all members of $\mathcal B(\mathbb R)$.

\subsection*{3.4 Obtaining more complicated sets}

Singletons are closed, so $\{r\}\in\mathcal B(\mathbb R)$ for every
$r\in\mathbb R$. A countable set is a countable union of singletons and is
therefore Borel. For example,
\[
 \mathbb Q=\bigcup_{r\in\mathbb Q}\{r\}
 \in\mathcal B(\mathbb R).
\]
Similarly,
\[
 [-2,-1]\cup[1,2]\in\mathcal B(\mathbb R)
\]
because it is a union of two closed, hence Borel, intervals.

\noindent\textbf{What this does not say.} It does not say that
$\mathcal B(\mathbb R)$ contains literally every subset of $\mathbb R$.
It contains those obtained from open sets through the countable operations
allowed by a $\sigma$-algebra. That collection is already rich enough for the
ordinary numerical events used in statistics.

\subsection*{3.5 Returning to $f(s)=s^2$}

Let $S=T=\mathbb R$, let $f:S\to T$ be $f(s)=s^2$, and let
$B=[1,4]\in\mathcal B(T)$. The inverse image is
\[
\begin{aligned}
 f^{-1}(B)
 &=\{s\in\mathbb R:f(s)\in[1,4]\}\\
 &=\{s\in\mathbb R:1\leq s^2\leq4\}\\
 &=[-2,-1]\cup[1,2].
\end{aligned}
\]
The answer belongs to $\mathcal B(S)$ because it is a union of closed
intervals. Notice the direction: $B$ is a set in the output space $T$, while
$f^{-1}(B)$ is the corresponding set in the input space $S$.

The same calculation can be repeated for other output events:
\[
\begin{array}{c|c}
\text{Borel output set }B\subseteq T &
\text{Borel inverse image }f^{-1}(B)\subseteq S\\ \hline
(-\infty,4) & (-2,2)\\
(1,4) & (-2,-1)\cup(1,2)\\
\{4\} & \{-2,2\}\\
(-\infty,0) & \varnothing\\[2pt]
[0,\infty) & \mathbb R
\end{array}
\]
Every set in the right-hand column is Borel. This is the concrete content of
the statement that $f$ is Borel measurable.

\subsection*{3.6 The same idea in several dimensions}

For a $d$-dimensional numerical observation, the usual measurable space is
\[
 (\mathbb R^d,\mathcal B(\mathbb R^d)).
\]
For example, let $\mathcal Y=\mathbb R^2$, where
$y=(y_1,y_2)$ records temperature and oxygen saturation. The rectangle
\[
 D=(39,\infty)\times(0,92)
\]
is Borel and represents ``temperature above $39$ and oxygen saturation below
$92$.'' If $Y:\Omega\to\mathbb R^2$ is the recorded observation, then
\[
 Y^{-1}(D)
 =\{\omega\in\Omega:Y_1(\omega)>39\text{ and }Y_2(\omega)<92\}
\]
is the corresponding event in the underlying sample space.

A measurable space $(S,\mathscr S)$ is \emph{standard Borel} if it is
measurably isomorphic to a Borel subset of a complete separable metric space.
Informally, it has the same measurable structure as a sufficiently regular
subset of a familiar metric space. Euclidean spaces, finite spaces, countable
discrete spaces and many function spaces used in statistics are standard
Borel.

In this paper, the evidence space is written
\[
   (\mathcal Y,\mathscr Y),\qquad
   \mathscr Y=\mathcal B(\mathcal Y),
\]
after choosing a standard Borel representation of $\mathcal Y$. This
assumption is broad enough for ordinary statistical applications and ensures
the existence of regular conditional distributions.

The preceding $\mathbb R$ and $\mathbb R^2$ examples are standard Borel
spaces. For the present paper, the practical point is that ordinary finite
states and Euclidean evidence spaces possess the regular measurable structure
needed later.

\noindent\textbf{Checkpoint.} In the running example:
\begin{enumerate}[label=(\roman*)]
\item $S=T=\mathbb R$ are the underlying sets;
\item $\mathcal B(S)$ and $\mathcal B(T)$ are collections of admissible sets;
\item $B=[1,4]$ is one member of $\mathcal B(T)$;
\item $f^{-1}(B)=[-2,-1]\cup[1,2]$ is one member of $\mathcal B(S)$; and
\item the fact that this works for every $B\in\mathcal B(T)$ makes $f$
Borel measurable.
\end{enumerate}

\subsection*{3.7 From numbers to categories, words and phrases}

Measure theory does not require the possible observations to be numbers. Its
starting point is simply a set $S$ together with a $\sigma$-algebra
$\mathscr S$. Numerical spaces such as $\mathbb R$ are only one example.

\noindent\textbf{Worked categorical example.} Let
\[
 \mathcal X=\{\text{urgent},\text{routine},\text{insufficient}\}.
\]
Because this is a finite set, we use the discrete $\sigma$-algebra containing
every subset:
\[
\begin{aligned}
 \mathcal B(\mathcal X)=\{&\varnothing,
 \{\text{urgent}\},\{\text{routine}\},\{\text{insufficient}\},\\
 &\{\text{urgent},\text{routine}\},
 \{\text{urgent},\text{insufficient}\},
 \{\text{routine},\text{insufficient}\},\mathcal X\}.
\end{aligned}
\]
Thus ``the state is urgent'' and ``the state is urgent or routine'' are both
measurable events. A probability measure might assign
\[
 \mathbb P_X(\{\text{urgent}\})=0.60,
 \quad \mathbb P_X(\{\text{routine}\})=0.30,
 \quad \mathbb P_X(\{\text{insufficient}\})=0.10.
\]
Nothing in this construction requires the category names to be numerical.

\noindent\textbf{Worked word example.} Let the vocabulary be
\[
 \mathcal V=\{\text{urgent},\text{immediate},\text{routine},\text{review},
 \text{escalation}\}.
\]
Again, $\mathcal V$ is finite and carries the discrete $\sigma$-algebra. If
$W$ is the next-word random variable, the event
\[
 A=\{\text{urgent},\text{immediate}\}\subseteq\mathcal V
\]
means ``the next word is either urgent or immediate.'' If
\[
 \mathbb P(W=\text{urgent})=0.35,
 \qquad \mathbb P(W=\text{immediate})=0.20,
\]
then
\[
 \mathbb P(W\in A)=0.35+0.20=0.55.
\]

\noindent\textbf{Worked phrase example.} A complete phrase is an element of a
set $\mathcal R$ of finite word sequences. For a deliberately small example, let
\[
 \mathcal R=\{\text{urgent review},\text{immediate escalation},
             \text{routine follow-up}\}.
\]
With the discrete $\sigma$-algebra on $\mathcal R$, the set
\[
 A_{\rm urgent}
 =\{\text{urgent review},\text{immediate escalation}\}
\]
is a measurable event. If a language distribution assigns masses $0.45$,
$0.15$ and $0.40$ to the three phrases respectively, then
\[
 Q(A_{\rm urgent})=0.45+0.15=0.60.
\]
This is ordinary probability on a finite set; the elements happen to be
phrases rather than numbers.

\noindent\textbf{Countably many strings.} With a finite or countable
vocabulary, the collection of all finite strings is countable. We may again
use the discrete $\sigma$-algebra containing every collection of strings.
Consequently, any declared group of phrases is measurable. The difficulty in
the paper is not whether these sets can be measured; it is whether grouping
phrases by meaning is scientifically defensible and stable.

\noindent\textbf{Worked semantic-map example.} Define
\[
 \phi:\mathcal R\to\mathcal X
\]
by
\[
 \phi(\text{urgent review})=\phi(\text{immediate escalation})
   =\text{urgent},
 \qquad
 \phi(\text{routine follow-up})=\text{routine}.
\]
For the categorical event $B=\{\text{urgent}\}\subseteq\mathcal X$,
\[
 \phi^{-1}(B)
 =\{\text{urgent review},\text{immediate escalation}\}
 =A_{\rm urgent}.
\]
Both spaces are discrete, so this inverse image is measurable. The induced
state probability is
\[
 (\phi_\#Q)(B)=Q\{\phi^{-1}(B)\}=0.60.
\]
This is the direct analogue of
$f^{-1}([1,4])=[-2,-1]\cup[1,2]$: the numerical function $f$ is replaced by
a semantic function $\phi$, and numerical intervals are replaced by declared
categories.

\noindent\textbf{Probability vectors return us to Euclidean space.} Once the
three category probabilities have been calculated, they form a vector such as
\[
 \boldsymbol p=(0.60,0.30,0.10)
 \in\Delta_3
 =\{(p_1,p_2,p_3)\in[0,1]^3:p_1+p_2+p_3=1\}.
\]
Thus words and phrases live in discrete measurable spaces, while their
probability vectors live in a subset of $\mathbb R^3$ equipped with its Borel
$\sigma$-algebra.

\subsection*{3.8 May the paper's spaces be assumed topological?}

They may be given topologies, but no single topology should be imposed on all
of them without justification. Convenient choices are:
\begin{center}
\small
\begin{tabularx}{\textwidth}{@{}l l X@{}}
\toprule
Paper space & Topology & Consequence \\
\midrule
$\mathcal X$ (finite states) & Discrete & Every collection of states is Borel. \\
$\mathcal R$ (finite strings) & Discrete & Every declared phrase group is Borel. \\
$\mathcal C$ (prompts) & Discrete & Every declared prompt family is measurable. \\
$\mathcal Y\subseteq\mathbb R^d$ & Euclidean subspace & Numerical evidence events are Borel. \\
$\Delta_K\subseteq\mathbb R^K$ & Euclidean subspace & Continuity, distances and error bounds are available. \\
\bottomrule
\end{tabularx}
\end{center}

For the discrete topology, every singleton is open, so every subset is open
and hence Borel. Thus words create no special difficulty for basic
measurability. For example, if
\[
 \mathcal R=\{r_1,r_2,r_3\},
\]
then $\{r_1,r_2\}$ is automatically Borel.

The discrete topology does \emph{not} encode semantic similarity. It does not
say that ``urgent review'' is closer in meaning to ``immediate escalation''
than to ``routine follow-up.'' An embedding-induced topology could encode a
model-dependent notion of proximity, but would change with the fitted model
and context. The paper therefore declares and validates a semantic map rather
than defining scientific meaning through an embedding topology.

Mixed numerical and textual evidence can be represented, for example, as
\[
 \mathcal Y=\mathbb R^d\times\mathcal W,
 \qquad
 \mathscr Y=\mathcal B(\mathbb R^d)\otimes\mathcal B(\mathcal W),
\]
where $\mathcal W$ is a finite or countable string space with the discrete
topology. A realization could be
\[
 y=((39.4,89),\text{``increasing respiratory distress''}).
\]
The product $\sigma$-algebra is beyond Chapters~2--3 of the chosen primer.
Accordingly, the paper can state explicitly that the combined evidence space
is standard Borel without developing product-measure theory in the setup.

\noindent\textbf{Where the paper is more complex.} A continuation can contain
several tokens, its probability depends on the complete prompt, and the
reported alternatives may omit probability mass. Information-equivalent
prompts can also produce different phrase probabilities. These are statistical
and semantic complications, not failures of Borel measurability.

\subsection*{3.9 Discrete language spaces and the geometry actually used}

Topology, geometry and semantic equivalence play different roles. Let
$\Sigma$ be a finite or countable vocabulary. The space of complete finite
continuations
\[
 \mathcal R=\bigcup_{m\geq0}\Sigma^m
\]
is countable and may be given the discrete topology. Then
$\mathcal B(\mathcal R)=\mathcal P(\mathcal R)$, so every declared phrase
group is measurable. This is the only topological structure required to
define the continuation law and its semantic pushforward.

The discrete topology supplies no useful graded notion of linguistic
similarity. Its associated discrete metric,
\[
 d_{\rm disc}(r,r')=\mathbf 1_{\{r\ne r'\}},
\]
assigns the same distance to ``urgent review'' versus ``immediate
escalation'' as to ``urgent review'' versus an unrelated phrase. It therefore
solves the measurability problem but not the semantic-stability problem.

For prompt perturbations, the paper instead uses a prespecified experimental
relation. Write $c\sim_E c'$ when two contexts contain the same evidence,
target and horizon and differ only through a declared nuisance change in
wording, order or numerical format. This is an information-equivalence
relation supplied by the experimental protocol; it is not implied by the
discrete topology. The effect of the change is then measured through the
induced probability laws, for example by
\[
 d_{\rm TV}\{Q_u(\cdot\mid c),Q_{u'}(\cdot\mid c')\}
\]
at the language layer and by Jensen--Shannon divergence between the calibrated
state-probability vectors. Thus the paper tests finite-design prompt stability
without claiming a universal metric on natural language.

An embedding map $e:\mathcal C\to\mathbb R^d$ could induce the pseudometric
\[
 d_e(c,c')=\|e(c)-e(c')\|_2.
\]
This may be useful as a supplementary diagnostic, but it is model-dependent
and is not used to define scientific equivalence: a small lexical change such
as inserting a negation may preserve embedding proximity while reversing the
scientific meaning.

After semantic grouping, the substantive geometry is no longer a geometry of
strings. The vector
\[
 \boldsymbol p_u(c)\in\Delta_K
\]
lies on the probability simplex, where total variation, Jensen--Shannon
divergence, Hellinger distance, Aitchison log-ratio geometry and, for
filtering, Hilbert's projective metric have established statistical meanings.
The paper therefore uses the following division of labour:
\[
 \boxed{\text{discrete topology makes language events measurable}},
\]
\[
 \boxed{\text{declared equivalence identifies nuisance rewordings}},
\]
\[
 \boxed{\text{probability geometry quantifies their inferential effect}}.
\]

\section*{4. Measurable functions}

A function
\[
   f:(S,\mathscr S)\longrightarrow(T,\mathscr T)
\]
is \emph{measurable} if
\[
   f^{-1}(B)\in\mathscr S
   \qquad\text{for every }B\in\mathscr T.
\]
Thus every measurable statement about the output corresponds to a measurable
event in the input space.

If $S$ and $T$ carry their Borel $\sigma$-algebras, such a function is called
\emph{Borel measurable}. Every continuous function between ordinary Euclidean
spaces is Borel measurable, although a Borel measurable function need not be
continuous.

\noindent\textbf{Paper notation.} The map
\[
   C_u:(\mathcal Y,\mathscr Y)\to(\mathcal C,\mathscr C)
\]
is measurable when
\[
   C_u^{-1}(B)\in\mathscr Y
   \qquad\text{for every }B\in\mathscr C.
\]
This condition says nothing about whether $C_u$ is one-to-one, onto or
information preserving. Those are separate properties.

\noindent\textbf{Worked example.} Take $S=T=\mathbb R$ and $f(s)=s^2$.
For $B=[1,4]$,
\[
 f^{-1}(B)=[-2,-1]\cup[1,2]\in\mathcal B(\mathbb R).
\]
The same inverse-image property holds for every Borel set, so $f$ is Borel
measurable. It is not one-to-one because $f(-1)=f(1)$ and is not onto
$\mathbb R$ because it never takes a negative value. Measurability therefore
does not imply invertibility.

\noindent\textbf{Paper connection and added complexity.} The presentation map
$C_u:\mathcal Y\to\mathcal C$ and semantic map
$\phi_u:\mathcal R\to\mathcal X_0$ are measurable. The paper also asks whether
presentation loses relevant information and whether semantic grouping permits
stable posterior recovery; measurability alone does not answer either question.

\section*{5. Measures and probability measures}

A \emph{measure} $\mu$ on $(S,\mathscr S)$ assigns a number
$\mu(A)\in[0,\infty]$ to every $A\in\mathscr S$ such that
\[
  \mu(\varnothing)=0,
  \qquad
  \mu\!\left(\bigcup_{j=1}^{\infty}A_j\right)
  =\sum_{j=1}^{\infty}\mu(A_j)
\]
for every sequence of pairwise disjoint measurable sets $A_j$.
A \emph{probability measure} additionally satisfies $\mu(S)=1$.

\noindent\textbf{Worked example.} Let $S=\{a,b\}$,
$\mathscr S=\{\varnothing,\{a\},\{b\},S\}$, and define
\[
 \mathbb P(\{a\})=0.3,\qquad \mathbb P(\{b\})=0.7.
\]
Then $\mathbb P(S)=0.3+0.7=1$ and $\mathbb P(\varnothing)=0$.
For the disjoint events $\{a\}$ and $\{b\}$, additivity gives
$\mathbb P(\{a,b\})=0.3+0.7=1$.

The reference model is written
\[
   (\Omega,\mathscr F,\mathbb P),
\]
where $\mathbb P$ is a probability measure on $(\Omega,\mathscr F)$.
The notation $\mathbb P(A)$ is meaningful only when $A\in\mathscr F$.

\subsection*{5.1 Probability need not be defined on a Borel $\sigma$-algebra}

A probability measure may be defined on any measurable space
$(\Omega,\mathscr F)$. The $\sigma$-algebra $\mathscr F$ might be Borel,
Lebesgue, discrete, or another application-specific $\sigma$-algebra. The
probability axioms require only that
\[
 \mathbb P:\mathscr F\longrightarrow[0,1]
\]
be countably additive and satisfy $\mathbb P(\Omega)=1$.

\noindent\textbf{Worked Lebesgue example.} Let
\[
 \Omega=[0,1],
 \qquad
 \mathscr F=\mathcal L([0,1]),
\]
where $\mathcal L([0,1])$ is the Lebesgue $\sigma$-algebra, and let
$\lambda$ denote Lebesgue measure. Define
\[
 \mathbb P(A)=\lambda(A),
 \qquad A\in\mathcal L([0,1]).
\]
Since $\lambda([0,1])=1$, this is a probability measure. For example,
\[
 \mathbb P([1/4,3/4])=\lambda([1/4,3/4])=\frac12.
\]
Thus
\[
 ([0,1],\mathcal L([0,1]),\lambda)
\]
is a probability space whose event $\sigma$-algebra is Lebesgue rather than
merely Borel.

The Lebesgue $\sigma$-algebra contains the Borel $\sigma$-algebra and also all
subsets of Borel sets having Lebesgue measure zero:
\[
 \mathcal B([0,1])\subsetneq\mathcal L([0,1]).
\]
This enlargement is called the completion of the Borel $\sigma$-algebra with
respect to Lebesgue measure. Some sets are therefore Lebesgue measurable but
not Borel measurable.

The phrase \emph{Lebesgue measurable set} means precisely
\[
 A\in\mathcal L(\mathbb R).
\]
It is better to say \emph{Lebesgue $\sigma$-algebra} than ``Lebesgue
algebra,'' because closure under countable operations is essential. A function
\[
 Z:(\Omega,\mathcal L(\Omega))\longrightarrow
   (\mathbb R,\mathcal B(\mathbb R))
\]
is Lebesgue measurable when
\[
 Z^{-1}(B)\in\mathcal L(\Omega)
 \qquad\text{for every }B\in\mathcal B(\mathbb R).
\]

\subsection*{5.2 Almost-sure statements do not require Lebesgue completion}

The notation ``$\mathbb P$-almost surely'', abbreviated $\mathbb P$-a.s., is
meaningful on any probability space $(\Omega,\mathscr F,\mathbb P)$.  A
statement holds $\mathbb P$-a.s. when its measurable failure event has
probability zero.  A Lebesgue $\sigma$-algebra is therefore not required.

\noindent\textbf{Concrete numerical example on the Borel space.} Consider
\[
 (\Omega,\mathscr F,\mathbb P)
 =([0,1],\mathcal B([0,1]),\lambda)
\]
and define two random variables by
\[
 X(\omega)=\omega,
 \qquad
 Y(\omega)=
 \begin{cases}
  99, & \omega=1/2,\\
  \omega, & \omega\ne1/2.
 \end{cases}
\]
Both are Borel measurable.  They disagree on exactly the Borel event
\[
 \{\omega:X(\omega)\ne Y(\omega)\}=\{1/2\},
\]
whose probability is
\[
 \mathbb P(\{1/2\})=\lambda(\{1/2\})=0.
\]
Consequently
\[
 X=Y\qquad\mathbb P\text{-a.s.},
\]
even though $X(1/2)=1/2$ and $Y(1/2)=99$.  This almost-sure statement is
already valid on the Borel probability space; no Lebesgue completion has been
used.

\noindent\textbf{What completion adds.} Let $N\subset[0,1]$ be the Cantor
set.  It is Borel and $\lambda(N)=0$.  There exist subsets $A\subseteq N$
that are not Borel.  On the Borel probability space, the indicator
\[
 Z(\omega)=\mathbf 1_A(\omega)
 =\begin{cases}1,&\omega\in A,\\0,&\omega\notin A\end{cases}
\]
is not measurable, because
\[
 Z^{-1}(\{1\})=A\notin\mathcal B([0,1]).
\]
After completing the space to
\[
 ([0,1],\mathcal L([0,1]),\lambda),
\]
every subset of the null set $N$, including $A$, is measurable and has
probability zero.  The same $Z$ is then measurable and
\[
 Z=0\qquad\lambda\text{-a.s.}
\]
Thus completion is not what creates the notion of almost-sure equality.  It
ensures that arbitrary changes made inside a measurable null event remain
measurable.  In compact form,
\[
 \boxed{\mathbb P\text{-a.s. requires a measurable null exceptional event}}
\]
whereas
\[
 \boxed{\text{completion makes every subset of that null event measurable}.}
\]

\subsection*{5.3 Why some authors write ``$\mathbb P$-measurable''}

The phrase ``$\mathbb P$-measurable'' is used in more than one way.  The
meaning must therefore be read from the author's definition.

First, since a probability measure is itself specified as
\[
 \mathbb P:\mathscr F\longrightarrow[0,1],
\]
some authors use ``$\mathbb P$-measurable'' merely as shorthand for
``$\mathscr F$-measurable.''  The measure already reveals which
$\sigma$-algebra is intended.

Second, the phrase may mean measurable with respect to the
$\mathbb P$-completion
\[
 \mathscr F^{\mathbb P}
 =\sigma\!\left(
   \mathscr F\cup
   \{A:A\subseteq N,\ N\in\mathscr F,\ \mathbb P(N)=0\}
  \right).
\]
This is a genuinely larger domain when $\mathscr F$ does not already contain
every subset of each $\mathbb P$-null event.  Third, in constructions from an
outer measure, ``measure-measurable'' may refer to the Carath\'eodory
criterion.  That is a separate construction and will not be needed in the
paper.

\noindent\textbf{Numerical example: when the distinction matters.} Return to
\[
 ([0,1],\mathcal B([0,1]),\lambda).
\]
For the non-Borel subset $A$ of the null Cantor set used above,
$Z=\mathbf 1_A$ is not $\mathcal B([0,1])$-measurable.  It is, however,
measurable with respect to the completed $\sigma$-algebra
$\mathcal B([0,1])^{\lambda}=\mathcal L([0,1])$.  Thus an author who calls
$Z$ ``$\lambda$-measurable'' may specifically mean that $Z$ is measurable
only after completion.  Here one must check the convention.

\noindent\textbf{Text example: when the distinction does not matter.} Let
the possible verbal continuations be
\[
 \mathcal R
 =\{\text{``urgent review''},\text{``routine review''},
      \text{``insufficient information''}\}
\]
with the discrete $\sigma$-algebra $\mathscr R=\mathcal P(\mathcal R)$, and
suppose
\[
 Q(\text{``urgent review''})=0.60,
 \quad
 Q(\text{``routine review''})=0.40,
 \quad
 Q(\text{``insufficient information''})=0.
\]
The null event is
\[
 N=\{\text{``insufficient information''}\}.
\]
Its only subsets are $\varnothing$ and $N$, and both already belong to
$\mathscr R$.  In fact, every subset of every null event belongs to
$\mathcal P(\mathcal R)$.  The space is already complete, so completion adds
nothing.  The same conclusion holds for the paper's finite semantic state
space and for a finite or countable continuation space equipped with its full
power-set $\sigma$-algebra.

The practical rule is therefore simple.  For finite categorical states and
declared continuation lists, ordinary measurability is sufficient and
completion requires no extra care.  Completion becomes relevant for
continuous or otherwise uncountable spaces when functions are modified on
arbitrary subsets of null events, or when results are represented only up to
$\mathbb P$-almost-sure equality.  In this paper, the phrase
``$\mathbb P$-measurable'' should be avoided unless the
$\mathbb P$-completion is explicitly intended.

\noindent\textbf{Paper connection.} The paper deliberately begins with a
general event $\sigma$-algebra $\mathscr F$. It therefore does not require the
underlying probability space to be Borel. Borel structure is imposed on the
finite state and evidence codomains where it is needed for measurable mappings
and conditional distributions.

\noindent\textbf{Running example with a measure.} Let $\lambda$ denote length
on $(\mathbb R,\mathcal B(\mathbb R))$. Then
\[
 \lambda([1,4])=4-1=3
\]
and, because the two intervals are disjoint except for no shared endpoint,
\[
 \lambda([-2,-1]\cup[1,2])
 =\lambda([-2,-1])+\lambda([1,2])=1+1=2.
\]
The Borel $\sigma$-algebra specifies which sets may be measured; $\lambda$
then assigns their numerical lengths.

\noindent\textbf{Paper connection and added complexity.} The paper uses
probability measures rather than length: $Q_u(\cdot\mid c)$ assigns mass to
verbal continuations and $\Pi^\star(\cdot\mid y)$ assigns mass to states. Since
each varies with a conditioning value, each is a probability kernel rather
than one fixed measure.

\section*{6. Random variables}

A \emph{random variable} is a measurable function from a probability space
into a measurable space. A real-valued random variable has the familiar form
\[
 Z:(\Omega,\mathscr F)\longrightarrow
   (\mathbb R,\mathcal B(\mathbb R)).
\]
For each outcome $\omega\in\Omega$, the realization $Z(\omega)$ is a real
number. The codomain need not be $\mathbb R$; categorical, vector-valued and
phrase-valued measurable maps are also called random variables here. Thus
\[
 X:(\Omega,\mathscr F)\longrightarrow(\mathcal X,\mathcal B(\mathcal X)),
 \qquad
 Y:(\Omega,\mathscr F)\longrightarrow(\mathcal Y,\mathscr Y)
\]
are random variables with different codomains.

The distribution of $Y$ is the probability measure
\[
   \mathbb P_Y(B)=\mathbb P\{Y\in B\}
   =\mathbb P\{Y^{-1}(B)\},
   \qquad B\in\mathscr Y.
\]
More compactly, $\mathbb P_Y=\mathbb P\circ Y^{-1}$ is the pushforward of
$\mathbb P$ through $Y$.

\noindent\textbf{Worked example.} Let
$\Omega=\{-2,-1,0,1,2\}$, assign probability $1/5$ to each point, and define
$Y(\omega)=\omega^2$. For $B=[1,4]$,
\[
 Y^{-1}(B)=\{-2,-1,1,2\},
 \qquad
 \mathbb P_Y(B)=\mathbb P\{Y\in B\}=\frac45.
\]
Here $Y$ is the random variable and $\mathbb P_Y$ is its induced
distribution.

\noindent\textbf{Continuous version of the running example.} Let $U$ be
uniformly distributed on $[-2,2]$ and set $Y=f(U)=U^2$. For
$B=[1,4]$,
\[
\begin{aligned}
 \mathbb P_Y(B)
 &=\mathbb P\{Y\in[1,4]\}\\
 &=\mathbb P\{U\in f^{-1}([1,4])\}\\
 &=\mathbb P\{U\in[-2,-1]\cup[1,2]\}\\
 &=\frac{\lambda([-2,-1]\cup[1,2])}{\lambda([-2,2])}
 =\frac{2}{4}=\frac12.
\end{aligned}
\]
This single calculation connects the Borel set $B$, its inverse image, the
underlying probability measure and the induced distribution of $Y$.

\noindent\textbf{Paper connection and added complexity.} The latent state
$X$ is categorical and the observed evidence $Y$ may combine numbers and
text. The target is the conditional distribution of $X$ given $Y$, not merely
the marginal distribution of either random variable.

\section*{7. Pushforward measures}

Let $f:(S,\mathscr S)\to(T,\mathscr T)$ be measurable and let $\mu$ be a
measure on $(S,\mathscr S)$. The \emph{pushforward measure} $f_{\#}\mu$ on
$(T,\mathscr T)$ is
\[
   (f_{\#}\mu)(B)=\mu\{f^{-1}(B)\},
   \qquad B\in\mathscr T.
\]
The pushforward transfers a probability measure from one measurable space to
another through a measurable function.

\noindent\textbf{Paper notation.} The semantic map
\[
   \phi_u:(\mathcal R,\mathscr R)\to
          (\mathcal X_0,\mathcal B(\mathcal X_0))
\]
pushes the conditional response measure $Q_u(\cdot\mid c)$ onto the finite
semantic state space. For a state $x_k$,
\[
 [\phi_{u\#}Q_u](\{x_k\}\mid c)
 =Q_u\{\phi_u^{-1}(x_k)\mid c\}.
\]
This operation simply adds the probabilities of all responses mapped to the
same state.

\noindent\textbf{Worked semantic pushforward.} Suppose
\[
 Q(\text{urgent review})=0.45,\quad
 Q(\text{immediate escalation})=0.15,\quad
 Q(\text{routine follow-up})=0.40.
\]
Let $\phi$ map the first two phrases to \emph{urgent} and the last to
\emph{routine}. Then
\[
 (\phi_\#Q)(\{\text{urgent}\})=0.45+0.15=0.60,
 \qquad
 (\phi_\#Q)(\{\text{routine}\})=0.40.
\]
Thus the induced state distribution is $(0.60,0.40)$.

\noindent\textbf{Paper connection and added complexity.} This pushforward is
the uncalibrated language-derived state distribution. The paper does not assume
that it equals the reference posterior: prompt dependence, semantic
misclassification and omitted continuation mass create additional errors.

\section*{8. Sigma-algebras generated by observations}

For a random variable $Y$, the notation $\sigma(Y)$ denotes the smallest
$\sigma$-algebra on $\Omega$ that makes $Y$ measurable:
\[
   \sigma(Y)=\{Y^{-1}(B):B\in\mathscr Y\}.
\]
It represents precisely the information revealed by observing $Y$.

If $Y$ records temperature and oxygen saturation, then events such as
``temperature exceeds $39^\circ$C'' or ``oxygen saturation is below $92\%$''
belong to $\sigma(Y)$. Events involving an unobserved quantity need not.

\noindent\textbf{Worked example.} Let $\Omega=\{1,2,3,4\}$ and define
\[
 Y(1)=Y(2)=0,
 \qquad
 Y(3)=Y(4)=1.
\]
The observation distinguishes $\{1,2\}$ from $\{3,4\}$ but not the points
within either pair. Therefore
\[
 \sigma(Y)=\{\varnothing,\{1,2\},\{3,4\},\Omega\}.
\]

\noindent\textbf{Paper connection and added complexity.} The reference
posterior conditions on the information in $Y$. Prompt wording is not meant to
redefine that target information. The empirical question is whether equivalent
presentations preserve enough information to recover the same posterior.

\section*{9. The one extension beyond Chapters 2--3}

Chapters~2 and~3 provide the foundations needed to define the observable
variables and their induced distributions. The paper additionally conditions
a state distribution on observed evidence. General conditioning with respect
to a $\sigma$-field and regular conditional distributions are later material;
they should not be attributed to Chapters~2 and~3.

\subsection*{9.1 Conditional expectation}

Let $Z$ be an integrable random variable and let
$\mathscr G\subseteq\mathscr F$ be a $\sigma$-algebra. The conditional
expectation $\mathbb E(Z\mid\mathscr G)$ is a $\mathscr G$-measurable random
variable satisfying
\[
 \int_G\mathbb E(Z\mid\mathscr G)\,d\mathbb P
 =\int_G Z\,d\mathbb P
 \qquad\text{for every }G\in\mathscr G.
\]
It is defined uniquely up to sets of $\mathbb P$-probability zero.

For an event $A\subseteq\mathcal X$,
\[
 \mathbb P\{X\in A\mid\sigma(Y)\}
 =\mathbb E\!\left[\mathbf 1_{\{X\in A\}}\mid\sigma(Y)\right].
\]
This is a random variable on $\Omega$. A regular conditional distribution
expresses the same quantity as a measurable function of the observed value
$Y$.

\noindent\textbf{Worked example.} On the preceding four-point space, assign
probability $1/4$ to each point and let
$Z=\mathbf 1_{\{1,3,4\}}$. Then
\[
 \mathbb E(Z\mid\sigma(Y))=
 \begin{cases}
  1/2,&\omega\in\{1,2\},\\
  1,&\omega\in\{3,4\}.
 \end{cases}
\]
Within $\{1,2\}$ the values of $Z$ are $1,0$; within $\{3,4\}$ they are
$1,1$. The conditional expectation is the average within each group that the
observation $Y$ cannot distinguish further.

\subsection*{9.2 Probability kernels}

Let $(S,\mathscr S)$ and $(T,\mathscr T)$ be measurable spaces. A
\emph{Markov kernel} or \emph{probability kernel}
\[
   K:S\times\mathscr T\to[0,1]
\]
satisfies two conditions:
\begin{enumerate}[label=(\roman*)]
\item for every $s\in S$, $B\mapsto K(B\mid s)$ is a probability measure on
      $(T,\mathscr T)$;
\item for every $B\in\mathscr T$, $s\mapsto K(B\mid s)$ is
      $\mathscr S$-measurable.
\end{enumerate}
A kernel is therefore a probability distribution that varies measurably with
a conditioning value.

\noindent\textbf{Simple example.} For $S=\mathbb R$ and
$T=\{0,1\}$, define
\[
 K(\{1\}\mid s)=\frac{e^s}{1+e^s},
 \qquad
 K(\{0\}\mid s)=\frac{1}{1+e^s}.
\]
For every $s$, these values define a Bernoulli distribution; as functions of
$s$, they are measurable.

\noindent\textbf{Paper notation.} Two kernels appear:
\[
 \Pi^\star:\mathcal Y\times\mathcal B(\mathcal X)\to[0,1],
 \qquad
 Q_u:\mathcal C\times\mathscr R\to[0,1].
\]
$\Pi^\star(\cdot\mid y)$ is the reference conditional distribution of the
state given evidence. $Q_u(\cdot\mid c)$ is the conditional distribution of a
response given covariate $c$.

\noindent\textbf{Numerical evaluation.} In the Bernoulli kernel above,
\[
 K(\{1\}\mid0)=K(\{0\}\mid0)=\tfrac12,
\]
while
\[
 K(\{1\}\mid\log3)=\tfrac34,
 \qquad K(\{0\}\mid\log3)=\tfrac14.
\]
For every fixed input, the two probabilities sum to one.

\noindent\textbf{Paper connection and added complexity.} The language kernel
$Q_u$ is observed through returned token or phrase probabilities. If the list
is truncated, unobserved mass must be recorded rather than silently assigned
probability zero.

\subsection*{9.3 Regular conditional distributions}

A \emph{regular conditional distribution} of $X$ given $Y$ is a kernel
\[
   \Pi^\star:\mathcal Y\times\mathcal B(\mathcal X)\to[0,1]
\]
such that, for every $A\subseteq\mathcal X$,
\[
   \Pi^\star(A\mid Y)
   =\mathbb P\{X\in A\mid\sigma(Y)\}
   \qquad\mathbb P\text{-almost surely}.
\]
The word \emph{regular} means that the conditional probability can be selected
as an actual measurable function of the observed value $y$. On standard Borel
spaces such a version exists. It is unique for $\mathbb P_Y$-almost every
$y$, rather than necessarily at every point that has zero probability under
$\mathbb P_Y$.

For the finite state space, define
\[
 \pi_k^\star(y)=\Pi^\star(\{x_k\}\mid y),
 \qquad
 \boldsymbol\pi^\star(y)
 =\bigl(\pi_1^\star(y),\ldots,\pi_K^\star(y)\bigr).
\]
Thus $\Pi^\star$ is the full conditional probability measure and
$\boldsymbol\pi^\star(y)$ is its vector of state probabilities.

\noindent\textbf{Worked example.} Let $X,Y\in\{0,1\}$ with joint
probabilities
\[
\begin{array}{c|cc}
 &Y=0&Y=1\\ \hline
X=0&0.40&0.20\\
X=1&0.10&0.30
\end{array}.
\]
Since $\mathbb P(Y=1)=0.20+0.30=0.50$,
\[
 \Pi^\star(\{1\}\mid1)
 =\mathbb P(X=1\mid Y=1)=\frac{0.30}{0.50}=0.60.
\]
Likewise, $\Pi^\star(\{1\}\mid0)=0.10/(0.40+0.10)=0.20$. With the
coordinate order $(X=0,X=1)$, the two posterior vectors are therefore
$(0.40,0.60)$ and $(0.80,0.20)$.

\noindent\textbf{Paper connection and added complexity.} The reference
posterior $\boldsymbol\pi^\star(y)$ comes from an external probabilistic model
or reference procedure. The language-derived vector is a separate measurement.
Identification asks when the latter contains sufficient information to recover
the former under stated support and equivalence conditions.

\section*{10. Conditioning and normalization}

Suppose a probability measure assigns masses
$\mu_1,\ldots,\mu_K$ to declared states and mass $1-M$ elsewhere, where
$M=\sum_{k=1}^K\mu_k>0$. Conditioning on membership in the declared states
gives
\[
   p_k=\frac{\mu_k}{M},
   \qquad k=1,\ldots,K.
\]
The vector $\boldsymbol p=(p_1,\ldots,p_K)$ lies in the probability simplex
\[
   \Delta_K=\left\{\boldsymbol z\in[0,1]^K:
                     \sum_{k=1}^Kz_k=1\right\}.
\]
This normalization does not recover probability mass that was omitted. The
quantity $1-M$ must be retained separately because small $M$ can make the
normalized composition unstable.

\noindent\textbf{Worked example.} Suppose $\mu_1=0.45$, $\mu_2=0.15$ and
the unexpressed mass is $0.40$. Then $M=0.60$ and
\[
 p_1=\frac{0.45}{0.60}=0.75,
 \qquad
 p_2=\frac{0.15}{0.60}=0.25.
\]
The normalized vector is $(0.75,0.25)$, but it must be accompanied by
$M=0.60$; otherwise the omitted mass $0.40$ is concealed.

\noindent\textbf{Paper connection and added complexity.} Normalization places
the measured composition on the simplex, where calibration is performed. The
covered mass $M$ remains a separate diagnostic because normalization can make
a poorly covered candidate set appear falsely decisive.

\section*{11. Almost-sure statements}

An equality holds \emph{$\mathbb P$-almost surely}, abbreviated
$\mathbb P$-a.s., if it may fail only on an event of probability zero. This is
the natural equality used for random variables and conditional expectations.
Conditional distributions are generally determined only up to such null sets.

This distinction matters when a continuous evidence value has
$\mathbb P(Y=y)=0$. The expression $\Pi^\star(\cdot\mid y)$ is still supplied
by a regular conditional kernel, but changing its value on a
$\mathbb P_Y$-null set does not change the statistical model.

\noindent\textbf{Worked example.} Let $Y$ be uniform on $[0,1]$ and define
$Z=Y$ except at $1/2$, where $Z(1/2)=99$. Because
$\mathbb P(Y=1/2)=0$,
\[
 \mathbb P(Z=Y)=1.
\]
Thus $Z=Y$ almost surely, even though the functions differ at one point.

\noindent\textbf{Paper connection and added complexity.} Existence and
uniqueness of the reference conditional distribution are understood up to
evidence values having probability zero. Empirical claims are therefore tied
to the evidence distribution represented in calibration and test data, not to
every imaginable text or prompt.

\section*{Extra background notes}

\subsection*{Vector-valued evidence and finite-valued states}

The main bridge permits evidence to contain several recorded quantities. In
the simplest Euclidean case,
\[
 Y:\Omega\longrightarrow\mathbb R^d,
 \qquad Y(\omega)=(39.4,89),
\]
where the two coordinates might record temperature and oxygen saturation.
Here $d$ is the number of evidence components, not the sample size. The paper
uses the more general evidence space $\mathcal Y$ because an evidence record
may also contain categories or text.

The latent state may likewise be categorical. For example,
\[
 \mathcal X
 =\{\text{urgent},\text{routine},\text{insufficient information}\},
 \qquad
 X:(\Omega,\mathscr F)longrightarrow
   (\mathcal X,\mathcal B(\mathcal X)).
\]
With the discrete topology,
$\mathcal B(\mathcal X)=\mathcal P(\mathcal X)$, so every collection of
states is measurable. Thus
\[
 X^{-1}\{\text{urgent},\text{insufficient information}\}\in\mathscr F.
\]
The variable $X$ is the latent scientific state whose conditional
distribution is sought; it is not a generated phrase. By contrast,
$\mathcal R$ is the continuation space and contains phrases such as
``urgent review''. Even when a state and a phrase use similar words, they are
different mathematical objects. They are connected only by the prespecified
semantic map
\[
 \mathcal R\xrightarrow{\ \phi\ }\mathcal X_0,
 \qquad \mathcal X_0=\mathcal X\cup\{x_0\}.
\]

For completeness, if
$\mathcal R=\{r_1,r_2,r_3\}$ is finite, then
$\mathscr R=\mathcal B(\mathcal R)=\mathcal P(\mathcal R)$. A measurable map
\[
 W:(\Omega,\mathscr F)\longrightarrow(\mathcal R,\mathscr R)
\]
represents a random complete verbal continuation. The paper works directly
with its conditional law $Q_u(\cdot\mid C_u(y))$, rather than treating $W$ as
the scientific state.

\subsection*{The measurable class of prompt-construction maps}

For a more formal treatment, the indexed collection
\[
 \mathfrak C=\{C_u:u\in\mathcal U\}
 \subseteq
 \mathcal M_{\mathscr Y,\mathscr C}(\mathcal Y,\mathcal C)
\]
is the fixed family of prompt-construction maps, where
\[
 \mathcal M_{\mathscr Y,\mathscr C}(\mathcal Y,\mathcal C)
 :=\left\{f:\mathcal Y\to\mathcal C:
       f^{-1}(D)\in\mathscr Y\text{ for every }D\in\mathscr C\right\}.
\]
This function-class notation is not required for the main statistical setup;
it is recorded here for later theoretical or book-length developments.

\subsection*{Numerical random variables and the half-line measurability test}

Begin with a measurable
map
\[
 Y:(\Omega,\mathscr F)\longrightarrow
   (\mathbb R,\mathcal B(\mathbb R)).
\]
For example, $Y(\omega)$ may be a measured temperature. Following the usual
textbook shorthand, we say that $Y$ is an $\mathscr F$-measurable random
variable, with the Borel $\sigma$-algebra on $\mathbb R$ understood. This means
\[
 Y^{-1}(B)=\{\omega\in\Omega:Y(\omega)\in B\}\in\mathscr F
 \qquad\text{for every }B\in\mathcal B(\mathbb R).
\]
Here $Y$ represents the observed numerical evidence and $Y(\omega)$ is the
value observed under outcome $\omega$. For example, if outcome $\omega$
describes one possible patient, then $Y(\omega)=39.4$ may be that patient's
temperature. The symbol $Y$ denotes the random variable before observation;
the lower-case symbol $y=39.4$ denotes one realised evidence value.
Capi\'nski and Kopp equivalently require
\[
 \{\omega\in\Omega:Y(\omega)\geq a\}\in\mathscr F
 \qquad\text{for every }a\in\mathbb R.
\]
Why do they use a one-sided condition rather than immediately requiring
\[
 \{\omega\in\Omega:a\leq Y(\omega)\leq b\}\in\mathscr F?
\]
Either formulation can be used. The closed half-lines $[a,\infty)$, as $a$
ranges over $\mathbb R$, generate $\mathcal B(\mathbb R)$, so it is enough to
check their inverse images. The bounded-interval event follows from the
$\sigma$-algebra operations. First,
\[
 \{Y\leq b\}
 =\bigcap_{n=1}^{\infty}\{Y<b+1/n\}
 =\bigcap_{n=1}^{\infty}\{Y\geq b+1/n\}^{c}
 \in\mathscr F.
\]
Therefore
\[
\begin{aligned}
 \{a\leq Y\leq b\}
 &=\{Y\geq a\}\cap\{Y\leq b\}\\
 &\in\mathscr F.
\end{aligned}
\]
Thus the one-sided test is not a restriction; it is a compact generating
criterion.

\noindent\textbf{Why do the half-lines generate all Borel sets?} A single
half-line is not sufficient. The generating family is
\[
 \mathcal H=\{[a,\infty):a\in\mathbb R\}.
\]
Starting from all members of $\mathcal H$, the $\sigma$-algebra operations
produce every open interval. First,
\[
 (-\infty,b)=[b,\infty)^c.
\]
Second,
\[
 (a,\infty)
 =\bigcup_{n=1}^{\infty}[a+1/n,\infty).
\]
To interpret this carefully, define
\[
 A_n=[a+1/n,\infty).
\]
The sets increase as $n$ increases:
\[
 A_1\subseteq A_2\subseteq A_3\subseteq\cdots.
\]
For an increasing sequence of sets, its set-theoretic limit is the union, so
one may write
\[
 \lim_{n\to\infty}A_n
 :=\bigcup_{n=1}^{\infty}A_n
 =(a,\infty),
\]
provided this meaning of the set limit is stated. It is not correct to replace
$1/n$ informally by $1/\infty$. The expression $1/\infty$ is not a real
number, and such a substitution would misleadingly suggest the closed
half-line $[a,\infty)$.

The endpoint explains the difference. For every finite $n$,
\[
 a<a+1/n,
\]
so $a\notin A_n$ and hence
\[
 a\notin\bigcup_{n=1}^{\infty}A_n.
\]
On the other hand, if $x>a$, then one can choose $n$ sufficiently large that
$1/n\leq x-a$. It follows that $x\geq a+1/n$, so $x\in A_n$. Therefore every
$x>a$ enters the union, while $a$ never does, proving
\[
 \bigcup_{n=1}^{\infty}[a+1/n,\infty)=(a,\infty).
\]
Therefore
\[
 (a,b)=(a,\infty)\cap(-\infty,b)
\]
belongs to $\sigma(\mathcal H)$. Every open subset of $\mathbb R$ can be
written as a countable union of open intervals with rational endpoints.
Consequently, every open set belongs to $\sigma(\mathcal H)$, and hence
\[
 \mathcal B(\mathbb R)\subseteq\sigma(\mathcal H).
\]
Conversely, each $[a,\infty)$ is closed and therefore Borel, so
\[
 \sigma(\mathcal H)\subseteq\mathcal B(\mathbb R).
\]
The two inclusions give
\[
 \boxed{\mathcal B(\mathbb R)=\sigma\{[a,\infty):a\in\mathbb R\}}.
\]

This also explains the measurability test. Inverse images commute with the
same operations:
\[
 Y^{-1}(B^c)=\{Y^{-1}(B)\}^c,
 \qquad
 Y^{-1}\!\left(\bigcup_nB_n\right)=\bigcup_nY^{-1}(B_n).
\]
If the inverse image of every generating half-line is in $\mathscr F$, then
the inverse image of every set constructed from those half-lines is also in
$\mathscr F$. Since those constructed sets are precisely the Borel sets, the
half-line condition is sufficient.

The sets $[a,\infty)$ are closed half-lines, rather than half-open bounded
intervals. One could instead use the family of all half-open intervals
$[a,b)$; that family also generates $\mathcal B(\mathbb R)$, but it is not the
particular criterion used in the cited chapter.

\noindent\textbf{Worked construction of $[a,b]$.} The lower half-line
$[a,\infty)$ imposes the lower bound $x\geq a$. To impose the upper bound
$x\leq b$, first construct
\[
 (b,\infty)
 =\bigcup_{n=1}^{\infty}[b+1/n,\infty),
\]
and then take its complement:
\[
 (-\infty,b]
 =(b,\infty)^c
 =\left(\bigcup_{n=1}^{\infty}[b+1/n,\infty)\right)^c.
\]
The required closed interval is therefore
\[
\boxed{
 [a,b]
 =[a,\infty)\cap(-\infty,b]
 =[a,\infty)\cap
 \left(\bigcup_{n=1}^{\infty}[b+1/n,\infty)\right)^c.}
\]
The endpoint in the second half-line must be $b$, not $a$:
$[a,\infty)$ supplies the lower endpoint and $(-\infty,b]$ supplies the upper
endpoint. If one instead used $(-\infty,a]$, the intersection would be
\[
 [a,\infty)\cap(-\infty,a]=\{a\},
\]
which constructs only the singleton at the lower endpoint.

The Borel $\sigma$-algebra is not merely a list of intervals and their unions.
It is
\[
 \mathcal B(\mathbb R)
 =\sigma\{(a,b):a<b\},
\]
the smallest $\sigma$-algebra containing all open intervals. It consequently
contains complements, countable unions and countable intersections formed from
them. For example,
\[
 [1,4]=( (-\infty,1)\cup(4,\infty))^c
 \in\mathcal B(\mathbb R),
\]
and
\[
 \mathbb Q=\bigcup_{q\in\mathbb Q}\{q\}
 \in\mathcal B(\mathbb R).
\]
It contains all open and closed subsets of $\mathbb R$ and many more
complicated sets obtained by repeated countable operations, but not every
subset of $\mathbb R$.
For a real-valued random variable
\[
 Y:(\Omega,\mathscr F)\longrightarrow
   (\mathbb R,\mathcal B(\mathbb R)),
\]
standard probability terminology calls $Y$ \emph{$\mathscr F$-measurable}
when
\[
 Y^{-1}(B)\in\mathscr F
 \qquad\text{for every }B\in\mathcal B(\mathbb R).
\]
Because the sets pulled back from the range are Borel sets, the same random
variable may also be called Borel measurable. These are two descriptions of
the same inverse-image condition, not two different requirements.

\section*{Summary}

The entire construction rests on four operations:
\begin{enumerate}[label=\textbf{\arabic*.}]
\item declare measurable spaces and probability measures;
\item represent observations by measurable random variables;
\item describe conditional distributions by probability kernels;
\item transfer a response distribution to semantic states by a measurable
      pushforward.
\end{enumerate}
These operations are standard measure theory. The new statistical question is
whether the resulting semantic composition identifies and accurately estimates
the independently defined reference posterior.

\section*{Sources and further reading}
\begingroup
\sloppy

\begin{enumerate}[label=\textbf{\arabic*.}]
\item M. Capi\'nski and E. Kopp, \emph{Measure, Integral and Probability},
Springer, 2nd ed., 2004. Chapter~2 supplies the treatment of Borel sets and
probability spaces; Chapter~3 supplies measurable functions, random variables,
generated $\sigma$-fields and induced probability distributions:
\url{https://zaco.au/lib/math/analysis/Marek-Capinski-Peter-E.-Kopp-Measure-integral-and-probability-Springer-2004.pdf}.

\item R. L. Wolpert, \emph{STA 711: Probability and Measure Theory, Week 1},
Duke University, p.~2. This supplementary reference discusses the power set;
the paper uses $\mathcal P(S)$ whenever that collection must be mentioned:
\url{https://www2.stat.duke.edu/courses/Fall15/sta711/lec/wk-01.pdf}.

\item O. Kallenberg, \emph{Foundations of Modern Probability}, Springer,
2nd ed., 2002, Chapter~1, for measurable generating classes and probability
kernels. Publisher record:
\url{https://link.springer.com/book/10.1007/978-1-4757-4015-8}.

\item P. L\'opez, \emph{Introduction to Probability and Statistics}, UCLA,
Definition~1.7, p.~6. These notes use the alternative notation
$\mathcal P(\Omega)=\{A:A\subseteq\Omega\}$:
\url{https://www.math.ucla.edu/~plopez/Fall_25/170E/Lecture_notes.pdf}.

\item C. Wallace, \emph{Probability 1: Axioms of Probability}, Durham
University, for the power set and the largest possible $\sigma$-algebra on a
sample space:
\url{https://www.maths.dur.ac.uk/users/clare.wallace/Prob1/01-axioms.html}.
\end{enumerate}
\endgroup

\newtoks\BeliefLensFormerSetup
\BeliefLensFormerSetup={%
\subsection{Basic model setup}
Our goal is to estimate a declared posterior over a finite state space from
an observable conditional response distribution. The setup is as follows.

\begin{figure}[H]
\centering
\resizebox{0.84\linewidth}{!}{%
\begin{tikzpicture}[
  font=\scriptsize,
  arr/.style={-{Latex[length=1.5mm]},thin,black}]
\node (evidence) at (0,1.25) {$y\in\mathcal Y$};
\node (target) at (8.8,1.25) {$\boldsymbol\pi^\star(y)\in\Delta_K$};
\node (law) at (2.1,0) {$Q_u\{\,\cdot\mid C_u(y)\}\in\mathcal M_1(\mathcal R)$};
\node (lexical) at (5.6,0) {$\boldsymbol p_u\{C_u(y)\}\in\Delta_K$};
\node (estimate) at (8.8,0) {$\widehat{\boldsymbol\pi}_u(y)\in\Delta_K$};
\draw[arr] (evidence)--node[above]{reference conditioning}(target);
\draw[arr] (evidence)--node[below left]{$C_u,Q_u$}(law);
\draw[arr] (law)--node[above]{$\phi_u$}(lexical);
\draw[arr] (lexical)--node[above]{calibration}(estimate);
\draw[arr,dashed] (estimate)--(target);
\end{tikzpicture}
}
\caption{The target posterior is defined by the reference probability law.
The lower path constructs an estimator from an observable conditional law on
responses. The dashed line denotes posterior-recovery error.}
\label{fig:formal-setup}
\end{figure}

Let $(\Omega,\mathscr F,\mathbb P)$ be a probability space, with generic
outcome (sample point) $\omega\in\Omega$. Following standard probability
terminology, $\Omega$ is the sample space and the members $A\in\mathscr F$ are
events; neither $\omega$ nor $\Omega$ is called an event. Let
$\mathcal X=\{x_1,\ldots,x_K\}$, equipped with its discrete Borel $\sigma$-algebra $\mathcal B(\mathcal X)$, and let
$(\mathcal Y,\mathscr Y)$ be a standard Borel space. Equivalently, one may
realise $\mathcal Y$ as a Borel subset of a complete separable metric space and
write $\mathscr Y=\mathcal B(\mathcal Y)$. Define the measurable random
variables
\[
 X:(\Omega,\mathscr F)\longrightarrow(\mathcal X,\mathcal B(\mathcal X)),
 \qquad
 Y:(\Omega,\mathscr F)\longrightarrow(\mathcal Y,\mathscr Y)
\]
as the finite-valued latent state and the observed evidence. Our target is a
regular conditional probability kernel
$\Pi^\star:\mathcal Y\times\mathcal B(\mathcal X)\to[0,1]$ satisfying
\[
 \Pi^\star(A\mid Y)
 =\mathbb P\{X\in A\mid\sigma(Y)\}\quad\mathbb P\text{-a.s.},
 \qquad A\subseteq\mathcal X.
\]
Following common statistical usage, we call $X$ and $Y$ random variables even
when their codomains are finite or structured. Capi\'nski and Kopp use the term
literally for maps into $\mathbb R$; a finite-valued $X$ can be placed in that
form by any one-to-one coding of its $K$ states into
$\{1,\ldots,K\}\subset\mathbb R$. For each $\omega\in\Omega$, $X(\omega)$
and $Y(\omega)$ are realizations of the state and evidence variables.
For $k=1,\ldots,K$, define the state probability
$\pi_k^\star(y)=\Pi^\star(\{x_k\}\mid y)$ and collect these probabilities in
\[
 \boldsymbol\pi^\star(y)
 =\bigl(\pi_1^\star(y),\ldots,\pi_K^\star(y)\bigr)\in\Delta_K,
 \qquad
 \Delta_K=\left\{\boldsymbol z\in[0,1]^K:\sum_{k=1}^Kz_k=1\right\}.
\]
Thus $\Pi^\star$ denotes the conditional probability measure on arbitrary
state events, whereas $\boldsymbol\pi^\star(y)$ denotes its vector of state
probabilities. Regularity means that $y\mapsto\Pi^\star(A\mid y)$ is measurable
for every $A\subseteq\mathcal X$; existence follows from the standard Borel
assumption, with uniqueness for $\mathbb P_Y$-almost every $y$.

To define the observation, let $(\mathcal C,\mathscr C)$ be a measurable
covariate space and $(\mathcal R,\mathscr R)$ a measurable continuation space. For each
$u\in\mathcal U$, let
\[
C_u:(\mathcal Y,\mathscr Y)\to(\mathcal C,\mathscr C)
\]
be measurable, and let
$Q_u:\mathcal C\times\mathscr R\to[0,1]$ be a Markov kernel. Hence
$Q_u(A\mid C_u(y))$ is the conditional probability of response event
$A\in\mathscr R$ given evidence $y$. Finally, a measurable map
$\phi_u:\mathcal R\to\mathcal X_0$, where $\mathcal X_0=\mathcal X\cup\{x_0\}$ and $x_0$ denotes an unexpressed continuation, induces a probability vector
$\boldsymbol p_u\{C_u(y)\}\in\Delta_K$ after conditioning on
$\phi_u\ne x_0$. The statistical problem is a semiparametric inverse problem:
determine whether this observable vector identifies
$\boldsymbol\pi^\star(y)$ and, if it does, estimate the inverse relation with
controlled error. The specific interpretation of $C_u$, $Q_u$, $\mathcal R$ and
$\phi_u$ for verbal data is deferred to Sections~2.3 and~2.4.

\noindent\textbf{Patient-diagnostics example.}
Let $\mathcal X=\{\text{urgent},\text{routine},\text{unsure}\}$ and suppose the
evidence is a temperature of $39.4^\circ$C with oxygen saturation $91\%$. A
declared clinical reference model might give
$\boldsymbol\pi^\star(y)=(.55,.35,.10)$. Under a fixed prompt, suppose the
continuation law assigns probabilities $.45$, $.15$, $.25$ and $.10$ to
``urgent review'', ``immediate escalation'', ``routine follow-up'' and
``insufficient information'', with $.05$ elsewhere. The semantic map combines
the first two phrases and retains the last $.05$ as unexpressed mass, giving
\[
\boldsymbol p_u(c)
=\frac{(.60,.25,.10)}{.95}
=(.632,.263,.105).
\]
The language-derived composition $\boldsymbol p_u(c)=(.632,.263,.105)$ and the
reference posterior $\boldsymbol\pi^\star(y)=(.55,.35,.10)$ live on the same
simplex but are not assumed equal. Calibration uses many such paired scenarios
to estimate the inverse relationship from $\boldsymbol p_u(c)$ to
$\boldsymbol\pi^\star(y)$; identification asks whether materially different
reference posteriors remain distinguishable through this language-derived
measurement.

The example completes the basic setup: the reference law supplies the target
and the observable response law supplies a second probability vector on the
same simplex. No sampling or independence assumption is needed to define these
quantities. A measurable set
$\mathcal Y_0\in\mathscr Y$ will denote the \emph{operating domain}: the
prespecified evidence values on which calibration and recovery claims are made.

The interior of the probability simplex is
\[
\operatorname{int}(\Delta_K)=\left\{\boldsymbol z\in\Delta_K:z_k>0\ \text{for every }k\right\}.
\]
Thus $\Delta_K$ contains all probability vectors over the $K$ states, including distributions with zero components. The interior restriction is imposed here because the log-ratio coordinates used below are finite only when every component is positive; it is not required merely to define a categorical probability distribution.
The target may come from any declared probabilistic model for which exact or independently certified inference is available. For example, a Bayesian network is one convenient data-generating process for conditionally structured evidence; a finite mixture or expert-validated likelihood model may serve the same role. No particular graphical model is required. What is required is a reproducible evidence-generating law and a reference posterior whose assumptions are visible to the evaluator.

\subsection{Calibration experiment and sampling assumptions}
The preceding construction concerns one generic pair $(X,Y)$ and makes no
independence assumption. For the baseline repeated-sampling analysis, scenario
pairs $(X_i,Y_i)$, $i=1,\ldots,n$, are independent copies of $(X,Y)$ within
each prespecified design stratum. Calibration uses these scenarios and, where
replication is available,
repeated language measurements $r=1,\ldots,R_i$. Scenario $i$ contributes
evidence $Y_i$, reference posterior $\boldsymbol\pi_i^\star$, and one or more
lexical compositions obtained under prespecified prompt rules. The calibration
sample is therefore a collection of paired probability compositions,
\[
\mathcal D_{\rm cal}
=\left\{\left(\boldsymbol\pi_i^\star,
  \boldsymbol p_{u,ir}\{C_u(Y_i)\}\right):
  i\in I_{\rm cal},\ r=1,\ldots,R_i\right\}.
\]
Repeated observations from one scenario remain in the same cluster.
Calibration, specification validation, conformal calibration and final testing
use disjoint scenario index sets fixed before the untouched test results are
examined.
In practical terms, five measurements of the same case remain one case; repetition measures stability but does not create five independent observations.

\noindent\textbf{Dependence assumptions.}
The basic construction, existence results and deterministic error bounds do
not require this i.i.d. sampling condition. Repeats within a scenario are not
treated as i.i.d.; they may be arbitrarily dependent and remain one cluster.
The reported cluster bootstrap uses the independent scenario clusters within
each declared stratum. Split-conformal coverage requires
exchangeability of calibration and future nonconformity scores within the
population or stratum for which coverage is claimed. If temporal, adaptive or
cross-scenario dependence is material, these inferential procedures must be
replaced by blocked, weighted or sequential analogues; the semantic
pushforward itself remains well defined.

The reference experiment and declared state definitions are fixed throughout; its
posterior is exact conditional on the declared model, not asserted to be
open-world truth. Continuation probabilities are observed rather than
estimated from sampled response frequencies, complete phrases are scored by
the chain rule, and unobserved probability mass remains explicit. These
conditions separate model definition from statistical estimation and prevent
replication from being mistaken for additional independent scenarios.

\subsection{Observable continuation law}
We next specify the observation experiment. Statistically, the prompt is a
designed textual covariate, the continuation is a discrete sequence-valued
response, and the fitted model supplies its conditional distribution. The
prompt must be represented explicitly because wording and ordering can alter
that distribution even when the intended information is unchanged.

\begin{definition}[Prompt specification for the experiment; cf.\ \citealp{jiang2021calibration,tian2023justask}]
A \emph{prompt specification} is a design rule for representing evidence as
text. Let $\mathcal U$ denote the admissible specifications and let
$(\mathcal C,\mathscr C)$ be the measurable space of finite contexts. For each
$u\in\mathcal U$, the measurable map
$C_u:(\mathcal Y,\mathscr Y)\to(\mathcal C,\mathscr C)$ fixes the state
definitions, evidence selection and ordering, numerical format, instructions
and response form. Thus $u$ indexes the design, $c=C_u(y)$ is the realised
context, and $C_u(Y)$ is the corresponding random covariate. The fitted model's
tokenisation and conditional response mechanism are part of the observation
process, not further design choices available to the evaluator.
Measurability means $C_u^{-1}(B)\in\mathscr Y$ for every
$B\in\mathscr C$. No one-to-one or onto property is presumed: two evidence
values may yield the same context, and many syntactically possible contexts
need never be produced. Any stronger claim that $C_u$ preserves the relevant
information in $Y$ is therefore an empirical restriction, not a consequence
of measurability.
\end{definition}
The prompt is part of the statistical design because two wordings that convey similar information need not induce the same conditional distribution over responses. For example, the rules ``choose urgent, routine, or insufficient information'' and ``choose immediate escalation, ordinary follow-up, or unresolved'' encode the same intended three-way distinction, but use different words and may therefore produce different response probabilities. Presentation equivalence is consequently an empirical hypothesis, not an assumption made by wording alone.

In the experimental design below, the notion of \emph{information equivalence} refers to presentation variants constructed from the same frozen evidence record with the same target, horizon, state meanings and response task. No evidential value is added, removed or changed; only wording, order or lossless formatting may differ. This is prespecified design terminology, not a formal equivalence relation or an assumption that the fitted language model responds identically. The presentation experiment tests that latter question: Theorem~\ref{thm:perturb} bounds the resulting change in state probabilities when the continuation laws differ.

The distinction between target and measurement is exact. The reference posterior $\boldsymbol\pi^\star(y)$ carries no presentation index $u$: for a fixed evidence realization $y$ and fixed reference probability model, changing only the wording, ordering or numerical format of an information-preserving prompt does not change the target. The language law $Q_u(\cdot\mid C_u(y))$, by contrast, generally does depend on $u$ and on which components of $y$ are included in the context. This is the sense in which the target is presentation-independent. It is not a claim that the posterior is independent of evidence selection, measurement, or the reference model. If a prompt omits or changes evidence, its information set is no longer equivalent; the unchanged full-evidence posterior may then be retained as a benchmark target specifically to measure the resulting curation loss. The semantic map does not reconstruct information that was never supplied. It instead provides a common state scale on which curation loss, presentation sensitivity, semantic mismatch and calibration error can be distinguished. The proposed calibrated map is expected to be stable only over the prespecified family of information-preserving presentations on which that stability is validated.

\begin{definition}[Continuation space and conditional language law; cf.\ \citealp{kallenberg2021foundations}]
We use the standard probability-kernel formalism \citep{kallenberg2021foundations} to represent conditional language generation. The kernel formalism is standard; the continuation space, its interpretation as observable language, and its later connection to scientific states through the semantic map are specific to the present construction.
Let $\mathcal V$ be the finite or countable vocabulary used to represent text. Its elements are model tokens, which may be words, parts of words or punctuation rather than complete dictionary words. The continuation space is the disjoint union
\[
\mathcal R=\mathcal V^{<\infty}=\bigcup_{m\ge0}\mathcal V^m,
\]
so an element $r=(v_1,\ldots,v_m)\in\mathcal R$ is one realised complete finite continuation: the text that could appear next after the prompt, such as ``urgent review'' or ``routine follow-up''. The continuation space $\mathcal R$ is an auxiliary observation space, distinct from the underlying probability sample space $\Omega$. We give it the discrete topology and write $\mathscr R=\mathcal B(\mathcal R)$ for its Borel $\sigma$-algebra; because the topology is discrete, every collection of continuations is measurable. Under prompt specification $u$, the language law is the Markov kernel $Q_u:\mathcal C\times\mathscr R\to[0,1]$. Hence $A\mapsto Q_u(A\mid c)$ is a probability measure on continuations for every realised context $c\in\mathcal C$, while $c\mapsto Q_u(A\mid c)$ is measurable for every event $A\in\mathscr R$.
\end{definition}
Let $W$ be the $\mathcal R$-valued random variable representing the complete continuation, with realization $r$. Conditional on prompt $c$ under specification $u$, this definition may be written compactly as $W\mid c\sim Q_u(\cdot\mid c)$. Thus $Q_u$ is simply the conditional continuation distribution; the auxiliary observation is a finite string rather than a scalar or Euclidean vector.
If $W_j$ denotes the random token generated at position $j$, the probability of a particular complete continuation is
\begin{equation}
Q_u(\{r\}\mid c)=\prod_{j=1}^m
\mathbb P(W_j=v_j\mid c,W_{1:j-1}=v_{1:j-1}).\label{eq:sequence-law}
\end{equation}
This is the familiar next-token mechanism written as a probability for a whole phrase. Complete-sequence probabilities matter because two state descriptions may share an opening token and diverge only later; using only the first token would then measure the shared prefix rather than the intended meaning.
For example, suppose the prompt ends with ``The patient requires'' and, for illustration, ``urgent'' and ``review'' are separate tokens. If
\[
\mathbb P(W_1=\text{``urgent''}\mid c)=0.60,
\qquad
\mathbb P(W_2=\text{``review''}\mid c,W_1=\text{``urgent''})=0.75,
\]
then the complete continuation ``urgent review'' has probability $0.60\times0.75=0.45$. The first number alone is not the probability of the phrase: it also includes continuations such as ``urgent testing'' or ``urgent monitoring.''

\subsection{Semantic pushforward}
The continuation law is observable, but its outcomes are strings rather than
the states in the reference experiment. The next step is therefore a
measurable coarsening. It preserves the probability calculus whilst changing
the outcome space: meaning-equivalent continuations are assigned to the same
declared state, and language outside the declared partition remains visible
rather than being silently discarded.

\begin{definition}[Semantic coarsening; adapted from \citealp{farquhar2024semantic,kuhn2023semantic}]
Following the semantic-equivalence literature, different verbal continuations may be grouped when they express the same meaning. Here, however, the groups are not inferred anew from sampled responses: they are a prespecified, application-defined partition whose induced probabilities will be calibrated against an external reference posterior.
The semantic outcome space is the finite measurable space $(\mathcal X_0,\mathcal B(\mathcal X_0))$ with the discrete topology, where $\mathcal X_0=\mathcal X\cup\{x_0\}$ and $x_0$ denotes an unexpressed or out-of-partition continuation. The symbol $x_0$ is an additional category, not an orthogonal complement or projection. For each $u\in\mathcal U$, the semantic map
\[
\phi_u:(\mathcal R,\mathscr R)\longrightarrow(\mathcal X_0,\mathcal B(\mathcal X_0))
\]
assigns each complete continuation to a semantic state or to $x_0$. Measurability here is with respect to $\mathscr R$ and $\mathcal B(\mathcal X_0)$, not Lebesgue measure. Because both spaces carry discrete $\sigma$-algebras, every such function is measurable. If $\mathcal M_1(S)$ denotes the probability measures on a measurable space $S$, the induced pushforward operator is $\phi_{u\#}:\mathcal M_1(\mathcal R)\to\mathcal M_1(\mathcal X_0)$, where $[\phi_{u\#}Q](A)=Q\{\phi_u^{-1}(A)\}$.
\end{definition}

The lexical state-mass functions $\mu_{u,k}:\mathcal C\to[0,1]$ and expressed-mass function $M_u:\mathcal C\to[0,1]$ are
\[
\mu_{u,k}(c)=Q_u(\phi_u^{-1}(x_k)\mid c),\qquad M_u(c)=\sum_{k=1}^K\mu_{u,k}(c),
\]
and, whenever $M_u(c)>0$, define
\[
p_{u,k}(c)=\frac{\mu_{u,k}(c)}{M_u(c)}
=Q_u\!\left(\phi_u=x_k\mid \phi_u\ne x_0,c\right),
\qquad
\boldsymbol p_u(c)=\bigl(p_{u,1}(c),\ldots,p_{u,K}(c)\bigr)\in\Delta_K.
\]
Thus $p_{u,k}:\mathcal C_u^+\to[0,1]$ is the scalar probability assigned to state $x_k$, whereas $\boldsymbol p_u:\mathcal C_u^+\to\Delta_K$, with $\mathcal C_u^+=\{c\in\mathcal C:M_u(c)>0\}$, is the entire probability vector over the $K$ expressed states. The composite map $\boldsymbol p_u\circ C_u:\mathcal Y_{0,u}^+\to\Delta_K$, where $\mathcal Y_{0,u}^+=\{y\in\mathcal Y_0:C_u(y)\in\mathcal C_u^+\}$, is the lexical measurement as a function of realised evidence. It is the pushforward distribution induced by $Q_u$ and $\phi_u$, conditional on the continuation falling inside the declared semantic partition.

For the numerical example in Section~2.1, these definitions give semantic
masses $(.60,.25,.10)$, expressed mass $M_u=.95$ and lexical composition
$(.632,.263,.105)$. Thus the continuation law induces a well-defined
multinomial observation on the declared state space, whilst the reference
posterior remains the separate calibration target.

Figure~\ref{fig:push} is not a second statistical setup. It enlarges the lower
middle portion of Figure~\ref{fig:formal-setup}, from evidence $y$ to the
lexical composition $\boldsymbol p_u\{C_u(y)\}$. In particular,
Figure~\ref{fig:formal-setup} shows
the complete inferential problem, including the reference posterior and the
calibrated estimate; Figure~\ref{fig:push} shows the internal construction of the lexical
measurement before calibration. The intermediate semantic-mass vector and its
normalisation are suppressed in Figure~\ref{fig:formal-setup} only to keep that overview legible.

\begin{figure}[H]\centering
\resizebox{0.98\linewidth}{!}{%
\begin{tikzpicture}[node distance=8mm and 7mm,font=\small]
\node[bbnode] (e) {evidence\\$Y$};
\node[bbnode,right=of e] (c) {context rule\\$C_u(Y)$};
\node[bbnode,right=of c] (q) {language law\\$Q_u(\cdot\mid c)$};
\node[bbnode,right=of q] (m) {semantic mass\\$\phi_u\push Q_u$};
\node[bbnode,right=of m] (p) {lexical composition\\$\boldsymbol p_u\in\Delta_K$};
\draw[bbmap] (e)--(c);\draw[bbmap] (c)--(q);\draw[bbmap] (q)--(m);\draw[bbmap] (m)--(p);
\node[below=2mm of e,align=center,text width=27mm,font=\scriptsize] (xe) {e.g. high fever,\\oxygen $91\%$};
\node[below=2mm of c,align=center,text width=27mm,font=\scriptsize] (xc) {``Evidence: fever, low oxygen.\\Choose urgent, routine or unsure.''};
\node[below=2mm of q,align=center,text width=27mm,font=\scriptsize] (xq) {$P$(urgent review)$=.45$\\$P$(escalation)$=.15$};
\node[below=2mm of m,align=center,text width=27mm,font=\scriptsize] (xm) {urgent $=.60$; routine $=.25$\\unsure $=.10$; $x_0=.05$};
\node[below=2mm of p,align=center,text width=27mm,font=\scriptsize] (xp) {$(.60,.25,.10)/.95$\\$=(.632,.263,.105)$};
\node[below=23mm of m,align=center,text=black,font=\scriptsize] (a) {assess: semantic error, probability error,\\unexpressed mass, repeated variation};\draw[bbdash] (a)--(xm);
\end{tikzpicture}
}
\caption{Expansion of the lexical-measurement path in Figure~\ref{fig:formal-setup}. The annotations follow one realization from evidence to $\boldsymbol p_u\{C_u(y)\}$: clinical evidence becomes a fixed prompt, the language law assigns probabilities to complete phrases, synonymous phrases are combined into semantic masses, and the expressed masses are normalised. The $0.05$ outside the declared categories remains visible as the additional category $x_0$. Calibration from $\boldsymbol p_u\{C_u(y)\}$ to $\widehat{\boldsymbol\pi}_u(y)$ is shown in Figure~\ref{fig:formal-setup} and developed in the next subsection.}
\label{fig:push}
\end{figure}

\begin{definition}[Lexical measurement experiment; introduced here]
The triple $\mathcal E_u=(Q_u,C_u,\phi_u)$ is a lexical measurement experiment for $\boldsymbol\pi^\star$ on domain $\mathcal Y_0$ if probabilities under $Q_u$ are observable, $C_u$ and $\phi_u$ are fixed before final evaluation, and $M_u\{C_u(y)\}>0$ for every $y\in\mathcal Y_0$.
\end{definition}
}

\subsection{Identification and inverse recovery}
The semantic pushforward supplies a probability vector on the correct state
space, but not yet a calibrated posterior. The remaining inferential problem
is an inverse one: learn the relation between lexical and reference
compositions on a calibration sample, and establish that distinct reference
posteriors remain distinguishable through the observation channel.

\begin{definition}[Quantitative observability; cf.\ \citealp{aitchison1982statistical,blackwell1953equivalent,engl1996regularization}]
Following the cited compositional-data and inverse-problem literature, probability vectors are treated as compositions and compared through log ratios rather than through unconstrained Euclidean coordinates. With state $K$ as the prespecified reference component, the additive log-ratio map and its inverse are
\[
\boldsymbol\ell:\operatorname{int}(\Delta_K)\to\mathbb R^{K-1},
\quad [\boldsymbol\ell(\boldsymbol z)]_k=\log(z_k/z_K),
\qquad
\boldsymbol\ell^{-1}:\mathbb R^{K-1}\to\operatorname{int}(\Delta_K).
\]
Let $\Pi_0\subset\operatorname{int}(\Delta_K)$ be the declared posterior domain and let $\mathcal L=\boldsymbol\ell(\Pi_0)\subset\mathbb R^{K-1}$ be its log-ratio image. Write $\boldsymbol\ell^\star=\boldsymbol\ell(\boldsymbol\pi^\star)$ and $\boldsymbol\lambda_u=\boldsymbol\ell(\boldsymbol p_u)$. The forward regression function $h_u:\mathcal L\to\mathbb R^{K-1}$ gives the conditional mean lexical log ratio associated with reference coordinate $\boldsymbol\ell$, namely $h_u(\boldsymbol\ell)=\mathbb E(\boldsymbol\lambda_u\mid\boldsymbol\ell^\star=\boldsymbol\ell)$ under the declared experiment. Unless a narrower class is stated, $h_u$ belongs to $C(\mathcal L;\mathbb R^{K-1})$, the space of continuous vector-valued functions with the uniform norm. It is quantitatively observable on $\mathcal L$ with modulus $c_u>0$ when
\[
\|h_u(\boldsymbol\ell)-h_u(\boldsymbol\ell')\|_2\ge c_u\|\boldsymbol\ell-\boldsymbol\ell'\|_2
\quad\text{for every }\boldsymbol\ell,\boldsymbol\ell'\in\mathcal L.
\]
In plain terms, materially different reference posteriors must not become nearly indistinguishable after language measurement; $c_u$ records the weakest retained direction.
\end{definition}

At this stage, we are ready to state the paper's main inferential problem formally. The semantic construction produces a language-derived probability vector, while the reference experiment supplies the target posterior. We must learn whether the former can be mapped back to the latter, and whether that inverse remains stable on new evidence rather than merely fitting the calibration sample.

\begin{definition}[Calibrated posterior estimator and affine specification]
An inverse calibration map is a measurable map
$\psi_u:\mathbb R^{K-1}\to\mathcal L$. Given $\psi_u$, the calibrated
posterior estimator is the measurable composition
\[
 \widehat{\boldsymbol\pi}_u
 =\boldsymbol\ell^{-1}\circ\psi_u\circ\boldsymbol\ell
  \circ\boldsymbol p_u\circ C_u:
 \mathcal Y_{0,u}^+\longrightarrow\operatorname{int}(\Delta_K).
\]
The empirical analysis uses the affine specification
$h_u(\boldsymbol\ell)=B_u\boldsymbol\ell+\boldsymbol d_u$, with an analogous affine form for $\psi_u$.
\end{definition}

A uniform nonlinear observability claim would additionally require a prespecified regularity class, such as $C^{1,1}(\mathcal L;\mathbb R^{K-1})$ with bounded derivative and Lipschitz derivative; finite observations cannot certify a uniform lower modulus over the unrestricted continuous class.

No independence is assumed among tokens within a continuation: its probability is the product of next-token probabilities, each conditional on the prompt and preceding tokens. The principal structural results are conditional or deterministic; independence or exchangeability is needed only for the repeated-sampling interpretation of bootstrap intervals and conformal coverage.

The fitted language-model service reveals only the $k$ largest conditional probabilities at each token position, so omitted continuations are not assigned probability zero and the unrestricted law $Q_u$ is not fully observed. An absent continuation is unknown, not impossible. For the controlled full-law assessment, we instead use a finite response grammar $G\subset\mathcal R$: three admissible continuations and an explicit residual category define $Q_u^G$, which is treated as effectively complete only when its covered mass exceeds a prespecified threshold. Thus the theory concerns $Q_u$, whereas this empirical assessment concerns the task-constrained law $Q_u^G$.

\subsubsection{Decomposition of uncertainty and measurement error}
This subsection separates two sources of discrepancy between a calibrated
estimate and its reference posterior: variation across repeated measurements
of the same scenario and recovery error that persists after those repetitions
are averaged. The distinction matters because repetition may reduce the first,
whereas the second requires improved representation or calibration. The decomposition is conditional on the fixed semantic map,
prompt specification, reference model and calibration procedure. It does not
treat posterior entropy, alternative state definitions or external validity
as measurement-error components. To formulate the decomposition in a geometry
appropriate for probability vectors, choose an orthonormal contrast matrix
\[
 V\in\mathbb R^{K\times(K-1)},
 \qquad V^\top V=I_{K-1},
 \qquad V^\top\boldsymbol 1_K=0,
\]
and define the basis-explicit log-ratio coordinate map
\[
 \boldsymbol\ell_V:\operatorname{int}(\Delta_K)\longrightarrow\mathbb R^{K-1},
 \qquad
 \boldsymbol\ell_V(\boldsymbol p)=V^\top\log\boldsymbol p,
\]
where the logarithm is componentwise. This map is an isometry from the simplex under Aitchison distance to Euclidean space. Its inverse normalizes the positive vector $\exp(V\boldsymbol z)$:
\[
 \boldsymbol\ell_V^{-1}(\boldsymbol z)
 =\frac{\exp(V\boldsymbol z)}{\boldsymbol 1_K^\top\exp(V\boldsymbol z)}.
\]
Let $\boldsymbol z_{ir}=\boldsymbol\ell_V\{\widehat{\boldsymbol\pi}_{ir}\}$ be the realised coordinate of the calibrated estimate for scenario $i$ and repeat $r$, let $\boldsymbol\theta_i=\boldsymbol\ell_V(\boldsymbol\pi_i^\star)$, and write $\overline{\boldsymbol z}_i=R^{-1}\sum_r\boldsymbol z_{ir}$. The usual within--between identity for the realised observations gives
\begin{equation}
 \frac1R\sum_{r=1}^R\|\boldsymbol z_{ir}-\boldsymbol\theta_i\|^2
 =\frac1R\sum_{r=1}^R\|\boldsymbol z_{ir}-\overline{\boldsymbol z}_i\|^2
 +\|\overline{\boldsymbol z}_i-\boldsymbol\theta_i\|^2.\label{eq:decomp}
\end{equation}
Under squared log-ratio, or Aitchison, distance, equation~\eqref{eq:decomp} separates error exactly into within-scenario variation and stable recovery error. The first asks whether repeated measurements move; the second asks whether their average remains wrong. Resampling calibration scenarios diagnoses inverse-estimation uncertainty but is nested within recovery error and is not added again. Missing mass and discrepancies between probability calculations assess the observation procedure, while posterior entropy measures target dispersion rather than measurement error. The decomposition is therefore conditional on the fixed semantic map and reference experiment.

The claims are likewise conditional on the fitted language model and observed response law: a finite grammar does not validate unrestricted prose, average accuracy does not establish uniform invertibility, and service drift or dependent adaptive sampling can invalidate ordinary bootstrap and split-conformal guarantees. The estimand is an observable application-specific composition, not an inaccessible internal belief. New states, evidence classes or presentations require validation; low expressed mass calls for expanded coverage or rejection, weak conditioning for a more informative presentation, and failed coverage for investigation of distribution shift. The next section formalises these conditions.

\section{Theory}
The mathematical development follows the order in which a statistician must justify the measurement channel. Following the comparison-of-experiments tradition of \citet{blackwell1953equivalent}, \citet{lecam1964sufficiency}, and \citet{torgersen1991comparison}, we ask whether a transformed observation preserves information about the target experiment. Our contribution is to pose that question for a probability law over language and to connect it to calibrated recovery in log-ratio coordinates. First, does the language distribution induce a well-defined probability over the declared response states? Second, how much error is introduced by approximating semantic classes with a finite set of scored sequences? Third, do different reference posteriors remain distinguishable after passing through the language channel? These are the static requirements that must be established before any dynamic use is considered.

Some conclusions below are theorems because they follow from explicit mathematical assumptions. Whether their conditions hold for a particular fitted language distribution is a separate statistical question and must be assessed from data. We therefore state each mathematical implication and then describe its empirical assessment on a finite prespecified design.

\subsection{Existence, uniqueness, and semantic approximation}
The first question is foundational but easy to overlook. Continuation probabilities concern phrases, while users reason about diagnoses, risks, hypotheses, or operational states. A semantic map connects these sample spaces. Measurability is sufficient for the pushforward distribution to exist, but uniqueness is conditional on the map: two defensible partitions can induce two different lexical compositions from the same language distribution.

Lemma~\ref{lem:exist} recasts the standard image-measure result of \citet{kallenberg2021foundations} for the semantic measurement defined here. The general theorem gives the pushforward measure, but it does not supply the paper's additional unexpressed category or the conditional distribution over declared states. That second step requires a positive expressed mass and can fail even when the pushforward itself exists. Kallenberg supplies the measure-theoretic result; the semantic map, augmentation and normalised lexical composition belong to the present construction.

\begin{lemma}[Existence and conditional uniqueness under a fixed semantic map; cf.\ \citealp{kallenberg2021foundations}]\label{lem:exist}
If $Q_u(\cdot\mid c)$ is a probability measure on $(\mathcal R,\mathscr R)$ and $\phi_u:(\mathcal R,\mathscr R)\to(\mathcal X_0,\mathcal B(\mathcal X_0))$ is measurable, then $\phi_u\push Q_u$ is a unique probability measure on $(\mathcal X_0,\mathcal B(\mathcal X_0))$. If $M_u(c)>0$, the normalised lexical composition $\boldsymbol p_u(c)$ exists and is unique conditional on $(Q_u,\phi_u,c)$.
\end{lemma}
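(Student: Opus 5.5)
The plan has two parts: first the pushforward, then the normalised composition. For the pushforward I would define the set function $\nu(B)=Q_u(\phi_u^{-1}(B)\mid c)$ for $B\in\mathcal B(\mathcal X_0)$. This is well defined because measurability of $\phi_u$ gives $\phi_u^{-1}(B)\in\mathscr R$, which lies in the domain of $Q_u(\cdot\mid c)$. I would then verify the probability axioms by transporting them through inverse images. We have $\nu(\varnothing)=Q_u(\varnothing\mid c)=0$ and $\nu(\mathcal X_0)=Q_u(\mathcal R\mid c)=1$. For pairwise disjoint $B_j$, the inverse images $\phi_u^{-1}(B_j)$ are pairwise disjoint and $\phi_u^{-1}(\bigcup_jB_j)=\bigcup_j\phi_u^{-1}(B_j)$, so countable additivity of $Q_u(\cdot\mid c)$ gives countable additivity of $\nu$. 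Uniqueness is immediate because $\nu$ is defined pointwise on every event by a single formula. Any measure $\nu'$ satisfying $\nu'(B)=Q_u(\phi_u^{-1}(B)\mid c)$ for all $B$ equals $\nu$. Since $\mathcal X_0$ is finite with the discrete $\sigma$-algebra, agreement on the singletons $\{x_0\},\ldots,\{x_K\}$ already suffices, by finite additivity.

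For the second claim I would write $\mu_{u,k}(c)=\nu(\{x_k\})$ and note that $M_u(c)=\sum_{k=1}^K\mu_{u,k}(c)=1-\nu(\{x_0\})$, which uses $\nu(\mathcal X_0)=1$ and the partition of $\mathcal X_0$ into singletons. When $M_u(c)>0$, each $p_{u,k}(c)=\mu_{u,k}(c)/M_u(c)$ is a well-defined nonnegative number, and the coordinates sum to $M_u(c)/M_u(c)=1$, so $\boldsymbol p_u(c)\in\Delta_K$. Equivalently, $\boldsymbol p_u(c)$ is the elementary conditional probability $\nu(\{x_k\}\mid\mathcal X)$, which is defined precisely because the conditioning event has positive mass. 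Uniqueness conditional on $(Q_u,\phi_u,c)$ then follows because $\boldsymbol p_u(c)$ is a deterministic function of $\nu$, and $\nu$ is itself uniquely determined by these three inputs.

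There is no real obstacle in this argument; it is a specialisation of the standard image-measure construction cited from Kallenberg. The only point needing care is what the word \emph{conditional} in ``conditional uniqueness'' carries. Changing $\phi_u$ can change $\boldsymbol p_u(c)$, so uniqueness is not intrinsic to $Q_u$ alone. I would make this explicit with a one-line example: reassigning ``immediate escalation'' from urgent to $x_0$ in the running example changes both the composition and $M_u$. I would also record that the hypothesis $M_u(c)>0$ is necessary. If $M_u(c)=0$, the pushforward still exists, but it is the point mass at $x_0$, and no normalised composition is defined.
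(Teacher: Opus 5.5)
Your proposal is correct and follows essentially the same route as the paper's proof. You define the image set function through preimages, verify the null-set, total-mass and countable-additivity axioms, obtain uniqueness from the defining formula, and then normalise the declared-state masses under $M_u(c)>0$; your closing remarks on the degenerate case $M_u(c)=0$ and on the dependence on $\phi_u$ match the paper's final observations.
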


Uniqueness is conditional on the fixed representation $(Q_u,\phi_u,c)$: the continuation law, semantic map and realised context. Changing any component may change the induced probability distribution. The lemma does not establish the scientific validity or identification of these components from data.

Existence is also insufficient for estimation because only finitely many, possibly truncated, phrases can generally be evaluated. What is needed is a quantitative account that separates semantic misassignment, probability-calculation error and insufficient expressed mass, and shows how they affect the normalised state probabilities. Such an account should identify whether the appropriate response is redesign, improved measurement or rejection rather than conceal distinct failures within one error score. The next theorem provides this decomposition.

\begin{definition}[Ideal and scored semantic cells]\label{def:semantic-cells}
For a declared state $x_k$, its \emph{ideal semantic cell} is
\[
 A_k=\phi_u^{-1}(\{x_k\})\in\mathscr R.
\]
Thus $A_k$ contains every continuation assigned to $x_k$ by the fixed semantic map. A \emph{scored semantic cell} $B_k\in\mathscr R$ is the finite set of candidate continuations actually evaluated for state $x_k$. We require $B_1,\ldots,B_K$ to be pairwise disjoint. Ideally $B_k=A_k$; in practice, $B_k$ may omit valid continuations or contain continuations assigned to a different ideal cell.
\end{definition}

\begin{definition}[Semantic and probability-calculation errors]\label{def:approx-errors}
At a fixed context $c$, let
\[
 q_k=Q_u(A_k\mid c),\qquad
 \widetilde q_k=\widetilde Q_u(B_k\mid c),
\]
where $Q_u(\cdot\mid c)$ is the full conditional continuation law and $\widetilde Q_u(\cdot\mid c)$ is the measured probability law used to score the finite candidate sets. Write
\[
 \boldsymbol q=(q_1,\ldots,q_K),\quad
 \widetilde{\boldsymbol q}=(\widetilde q_1,\ldots,\widetilde q_K),\quad
 M=\sum_{k=1}^Kq_k,\quad
 \widetilde M=\sum_{k=1}^K\widetilde q_k.
\]
The \emph{semantic approximation error} and \emph{probability-calculation error} are, respectively,
\[
 e_{\rm sem}=\sum_{k=1}^K Q_u(A_k\triangle B_k\mid c),\qquad
 e_{\rm score}=\sum_{k=1}^K
 \left|Q_u(B_k\mid c)-\widetilde Q_u(B_k\mid c)\right|.
\]
The first measures incorrect or incomplete approximation of the semantic cells; the second measures error in the probabilities assigned to those scored cells.
\end{definition}

\begin{theorem}[Finite lexical approximation and semantic-error decomposition]\label{thm:approx}
Suppose $M=\sum_kq_k\ge m_0>0$ and $e=e_{\rm sem}+e_{\rm score}<m_0$. With $\boldsymbol p=\boldsymbol q/M$ and $\widetilde{\boldsymbol p}=\widetilde{\boldsymbol q}/\widetilde M$ in $\Delta_K$,
\[
\|\widetilde{\boldsymbol p}-\boldsymbol p\|_1\le \frac{2e}{m_0}.
\]
\end{theorem}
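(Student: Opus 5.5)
The plan is to bound the unnormalised discrepancy first and then pass it through the normalisation. Write $\widetilde q_k-q_k=\{Q_u(B_k\mid c)-Q_u(A_k\mid c)\}+\{\widetilde Q_u(B_k\mid c)-Q_u(B_k\mid c)\}$. For the first bracket, $|Q_u(B_k\mid c)-Q_u(A_k\mid c)|\le Q_u(A_k\triangle B_k\mid c)$, because $Q_u(B_k)-Q_u(A_k)=Q_u(B_k\setminus A_k)-Q_u(A_k\setminus B_k)$. The second bracket is exactly a summand of $e_{\rm score}$. Summing over $k$ and applying the triangle inequality gives
\[
\|\widetilde{\boldsymbol q}-\boldsymbol q\|_1\le e_{\rm sem}+e_{\rm score}=e,
\qquad |\widetilde M-M|\le \|\widetilde{\boldsymbol q}-\boldsymbol q\|_1\le e .
\]
Consequently $\widetilde M\ge M-e\ge m_0-e>0$, so $\widetilde{\boldsymbol p}$ is well defined and lies in $\Delta_K$. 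This is where the hypothesis $e<m_0$ enters. Pairwise disjointness of the $B_k$ is not needed for this step. It only guarantees that $\widetilde{\boldsymbol q}$ has a coherent interpretation as masses of disjoint events.

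For the normalisation step, use the standard decomposition
\[
\widetilde{\boldsymbol p}-\boldsymbol p=\frac{\widetilde{\boldsymbol q}-\boldsymbol q}{M}+\widetilde{\boldsymbol q}\Bigl(\frac1{\widetilde M}-\frac1M\Bigr).
\]
The $\ell_1$ norm of the first term is at most $e/M\le e/m_0$. For the second term, $\|\widetilde{\boldsymbol q}\|_1=\widetilde M$ because the entries are nonnegative, so its norm equals $\widetilde M\,|M-\widetilde M|/(\widetilde M M)=|M-\widetilde M|/M\le e/m_0$. Adding the two bounds gives $\|\widetilde{\boldsymbol p}-\boldsymbol p\|_1\le 2e/m_0$.

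The point that needs care is the choice of factorisation: one must divide by $M$, which is bounded below by $m_0$, rather than by $\widetilde M$, which is only bounded below by $m_0-e$. With the decomposition above, the weight $\widetilde{\boldsymbol q}$ on the second term cancels exactly against $1/\widetilde M$. This is why the constant is $2/m_0$ and not $2/(m_0-e)$.

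I would also remark that $e<m_0$ is used only to ensure $\widetilde M>0$. The bound itself never requires it, and it remains meaningful because $\|\widetilde{\boldsymbol p}-\boldsymbol p\|_1\le 2$ always holds.
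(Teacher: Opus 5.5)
Your proposal is correct and follows the paper's proof essentially step for step. It uses the same symmetric-difference and triangle-inequality bound $\|\widetilde{\boldsymbol q}-\boldsymbol q\|_1\le e$, the same mass control $|\widetilde M-M|\le e$ giving $\widetilde M\ge m_0-e>0$, and the same add-and-subtract of $\widetilde{\boldsymbol q}/M$ with $\|\widetilde{\boldsymbol q}\|_1=\widetilde M$ to reach $2e/M\le 2e/m_0$; your side remarks (disjointness of the $B_k$ is inessential, and $e<m_0$ only guarantees $\widetilde M>0$) also agree with the paper's own comments.
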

Theorem~\ref{thm:approx} converts empirical bounds on semantic and probability-calculation error into a deterministic bound on the final state probabilities, with $m_0^{-1}$ quantifying amplification by limited expressed mass. This decomposition is a primary theoretical contribution. For example, $m_0=0.80$, $e_{\rm sem}=0.02$ and $e_{\rm score}=0.01$ give an $L^1$ bound of $0.075$; if $m_0=0.20$, the same errors give $0.30$.

\noindent\textbf{Discussion.} The theorem turns semantic coverage, probability accuracy and missing mass into a common error budget. It detects apparently confident estimates created by discarding unsupported alternatives and renormalising the remainder---a failure that modal accuracy can conceal. If the bound is unacceptable, the state vector should be withheld from downstream decisions; its components also identify the remedy: revise the semantic cells, improve probability calculation or expand verbal coverage. It thereby supplies both a quantitative runtime-governance rule and a diagnosis of why intervention is required.

\subsection{Perturbation and lexical-presentation stability}
A presentation change can alter the continuation law before calibration. This is problematic because unchanged evidence can then produce different measured state probabilities: the estimator responds to wording rather than information, weakening reproducibility and potentially invalidating a calibration fitted under another presentation. The relevant statistical quantity is the distance between the resulting laws, not an unobserved contextual representation or the amount of textual editing: a long paraphrase may preserve meaning, whereas one negation may reverse it. For example, for probabilities over urgent, routine and unsure,
\[
 \boldsymbol q=(.80,.15,.05),\quad
 \boldsymbol q_{\rm para}=(.79,.16,.05),\quad
 \boldsymbol q_{\rm neg}=(.15,.80,.05),
\]
the respective total-variation distances from $\boldsymbol q$ are $.01$ and $.65$.

Theorem~\ref{thm:approx} controls approximation of a single semantic composition. Presentation stability additionally requires propagating the observable distance between two continuation laws through semantic grouping, normalisation and calibration. For probability measures $\mu$ and $\nu$ on a common measurable space, write
\[
 d_{\rm TV}(\mu,\nu)=\sup_A|\mu(A)-\nu(A)|
\]
for their total-variation distance, where the supremum is over measurable events \citep{kallenberg2021foundations}. To formalise presentation stability, fix $\phi$, let $Q=Q_u(\cdot\mid c)$ and $Q'=Q_{u'}(\cdot\mid c')$, and denote their normalised compositions by $\boldsymbol p$ and $\boldsymbol p'$. The next theorem applies contraction of total variation under measurable maps and stability of conditioning; the corollary carries the resulting bound through calibration.

\begin{theorem}[Stability of semantic coarsening under context perturbation; cf.\ \citealp{kallenberg2021foundations}]\label{thm:perturb}
Suppose the same measurable semantic map is used for $Q$ and $Q'$, and their expressed masses satisfy $M,M'\ge m_0>0$. Then
\[
\TV(\boldsymbol p,\boldsymbol p')
\le \min\left\{1,\frac{2\TV(Q,Q')}{m_0}\right\}.
\]
\end{theorem}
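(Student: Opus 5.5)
The plan is to prove the bound in two stages: first push the perturbation through the semantic map, then control the effect of normalisation. Write $q_k=Q(A_k)$ and $q'_k=Q'(A_k)$, where $A_k=\phi^{-1}(\{x_k\})$ as in Definition~\ref{def:semantic-cells}, and let $M=\sum_kq_k$, $M'=\sum_kq'_k$. The bound $\TV(\boldsymbol p,\boldsymbol p')\le1$ is automatic, since total variation between probability vectors never exceeds one. It therefore suffices to show $\TV(\boldsymbol p,\boldsymbol p')\le2\TV(Q,Q')/m_0$.

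\emph{Step 1 (contraction under the semantic map).} The pushforwards $\phi\push Q$ and $\phi\push Q'$ on $\mathcal X_0$ satisfy $\TV(\phi\push Q,\phi\push Q')\le\TV(Q,Q')=:\delta$, because every event $B\subseteq\mathcal X_0$ pulls back to the measurable event $\phi^{-1}(B)$. I will use this in $L^1$ form on the finite space $\mathcal X_0$:
\[
 \sum_{k=1}^K|q_k-q'_k|+|(1-M)-(1-M')|\le2\delta .
\]
In particular, writing $\|\boldsymbol q-\boldsymbol q'\|_1=\sum_{k=1}^K|q_k-q'_k|$, this gives $\|\boldsymbol q-\boldsymbol q'\|_1+|M-M'|\le2\delta$.

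\emph{Step 2 (stability of conditioning).} This is the step that needs care, since a naive triangle inequality loses a factor of two. I will decompose
\[
 \boldsymbol p-\boldsymbol p'=\frac{\boldsymbol q-\boldsymbol q'}{M}+\boldsymbol q'\Bigl(\frac1M-\frac1{M'}\Bigr).
\]
The first term has $L^1$ norm at most $\|\boldsymbol q-\boldsymbol q'\|_1/m_0$. The second term has $L^1$ norm $M'\,|M'-M|/(MM')=|M-M'|/M\le|M-M'|/m_0$. Summing the two and applying Step 1 gives
\[
 \|\boldsymbol p-\boldsymbol p'\|_1\le\frac{\|\boldsymbol q-\boldsymbol q'\|_1+|M-M'|}{m_0}\le\frac{2\delta}{m_0},
\]
so $\TV(\boldsymbol p,\boldsymbol p')=\tfrac12\|\boldsymbol p-\boldsymbol p'\|_1\le\delta/m_0$. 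This is in fact stronger than the stated bound by a factor of two, so the stated inequality follows. The only obstacle is bookkeeping: Step 1 must be applied jointly to the declared coordinates and to the $x_0$ coordinate rather than separately, otherwise the constant inflates. Even with the cruder estimate $|M-M'|\le\|\boldsymbol q-\boldsymbol q'\|_1\le2\delta$, one obtains $\|\boldsymbol p-\boldsymbol p'\|_1\le4\delta/m_0$, hence $\TV\le2\delta/m_0$, which is exactly the claim.

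Finally, take the minimum with the trivial bound $1$. Note that the argument uses only $M\ge m_0$; the assumption $M'\ge m_0$ enters only to guarantee that $\boldsymbol p'$ is defined.
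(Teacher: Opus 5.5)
Your proof is correct, and its skeleton is the paper's: contraction of total variation under $\phi$, then the same add-and-subtract decomposition $\boldsymbol p-\boldsymbol p'=(\boldsymbol q-\boldsymbol q')/M+\boldsymbol q'(1/M-1/M')$. The paper writes that decomposition event-wise as $R(A)/M-R'(A)/M'$ for $A\subseteq\mathcal X$.

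Where you differ is the bookkeeping, and it changes the constant. The paper works with the supremum definition and bounds $|R(A)-R'(A)|$ and $|M-M'|$ \emph{separately} by $\TV(R,R')$. That separate treatment is exactly where its factor $2/m_0$ comes from.

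You instead work in $L^1$ on the finite space $\mathcal X_0$. You observe that $\|\boldsymbol q-\boldsymbol q'\|_1+|M-M'|$ equals $\|\phi\push Q-\phi\push Q'\|_1=2\TV(\phi\push Q,\phi\push Q')$, because the $x_0$ coordinate carries the mass discrepancy. This yields $\TV(\boldsymbol p,\boldsymbol p')\le\TV(Q,Q')/m_0$.

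That sharper bound is attained. With $\phi$ the identity on three outcomes, take $Q=(m_0,0,1-m_0)$ and $Q'=(m_0-\epsilon,\epsilon,1-m_0)$ for $0<\epsilon\le m_0$. Then $\TV(Q,Q')=\epsilon$ and $\TV(\boldsymbol p,\boldsymbol p')=\epsilon/m_0$. So the theorem's constant is conservative by a factor of two. The improvement passes directly to Corollary~\ref{cor:calibrated-perturbation}, whose bound becomes $L_\psi\TV(Q,Q')/m_0$.

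The paper's event-wise route avoids the identity between total variation and half the $L^1$ norm. With a finite state space, that buys nothing essential.

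Your closing remark, that only $M\ge m_0$ and $M'>0$ are used, applies equally to the paper's own argument.
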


The data-processing inequality behind Theorem~\ref{thm:perturb} controls the unnormalised pushforward and is standard. It does not by itself control the distribution conditional on falling inside the declared semantic states. The factor $2/m_0$ arises from this additional conditioning step and exposes the paper-specific failure mode in which small perturbations are amplified by low expressed mass.

\begin{corollary}[Stability after calibration]\label{cor:calibrated-perturbation}
If a calibrated map $\psi:\Delta_K\to\Delta_K$ satisfies
\[
d_{\Pi}\{\psi(\boldsymbol a),\psi(\boldsymbol b)\}
\le L_\psi\TV(\boldsymbol a,\boldsymbol b)
\quad\text{for all }\boldsymbol a,\boldsymbol b\text{ in the operating domain},
\]
for a posterior metric $d_{\Pi}$, then
\[
d_{\Pi}\{\psi(\boldsymbol p),\psi(\boldsymbol p')\}
\le \frac{2L_\psi}{m_0}\TV(Q,Q').
\]
\end{corollary}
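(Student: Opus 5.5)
The plan is to compose the Lipschitz hypothesis on $\psi$ with the bound of Theorem~\ref{thm:perturb}. Both normalised compositions $\boldsymbol p$ and $\boldsymbol p'$ are defined, because $M,M'\ge m_0>0$ and Lemma~\ref{lem:exist} then gives existence and uniqueness of each. We assume, as the statement implicitly does, that both vectors lie in the operating domain on which the Lipschitz condition is asserted. This containment is the only genuine hypothesis to check. If it is not automatic, I would add it explicitly: the corollary's statement restricts $\boldsymbol a,\boldsymbol b$ to the operating domain, so $\boldsymbol p,\boldsymbol p'$ must belong to it.

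With $\boldsymbol a=\boldsymbol p$ and $\boldsymbol b=\boldsymbol p'$, the hypothesis gives
\[
d_{\Pi}\{\psi(\boldsymbol p),\psi(\boldsymbol p')\}\le L_\psi\TV(\boldsymbol p,\boldsymbol p').
\]
Theorem~\ref{thm:perturb} bounds $\TV(\boldsymbol p,\boldsymbol p')$ by $\min\{1,2\TV(Q,Q')/m_0\}$, which is at most $2\TV(Q,Q')/m_0$. Since $L_\psi\ge0$, multiplying this inequality by $L_\psi$ preserves it. Chaining the two inequalities then yields $d_{\Pi}\{\psi(\boldsymbol p),\psi(\boldsymbol p')\}\le 2L_\psi\TV(Q,Q')/m_0$.

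No step is a real obstacle, because the analytic work lives in Theorem~\ref{thm:perturb}. The only care needed is bookkeeping: the same semantic map must be used for both laws, the common lower bound $m_0$ must hold for both expressed masses, and both compositions must lie in the operating domain. We drop the $\min\{1,\cdot\}$ truncation, which only weakens the bound. A sharper version would retain it and read $L_\psi\min\{1,2\TV(Q,Q')/m_0\}$.
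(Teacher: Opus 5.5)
Your proposal is correct and follows the paper's proof exactly: apply the Lipschitz hypothesis with $\boldsymbol a=\boldsymbol p$, $\boldsymbol b=\boldsymbol p'$, then insert the bound $\TV(\boldsymbol p,\boldsymbol p')\le 2\TV(Q,Q')/m_0$ from Theorem~\ref{thm:perturb}. Two points the paper leaves implicit are handled soundly in your version: that $\boldsymbol p,\boldsymbol p'$ must lie in the operating domain, and that retaining the $\min\{1,\cdot\}$ truncation would give a sharper bound.
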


\noindent\textbf{Discussion.} Return to the three-vector example and suppose, for clarity, that the three outcomes exhaust the continuation law, the semantic map is the identity and hence $M=M'=m_0=1$. Theorem~\ref{thm:perturb} bounds the change after the paraphrase by $2(.01)=.02$, but gives only the trivial bound $1$ after the negation because $2(.65)>1$. The actual distances, $.01$ and $.65$, lie within these bounds. If calibration is unit-Lipschitz in total variation, Corollary~\ref{cor:calibrated-perturbation} carries the $.02$ paraphrase bound to the posterior scale; for the negation its displayed bound is $1.30$ and therefore supplies no nontrivial guarantee. The example shows the intended distinction: close continuation laws with adequate expressed mass imply controlled state-level changes, whereas a large change in the law need not be stable. Low expressed mass and an unstable calibration map provide two further amplification mechanisms.

\noindent\textbf{Empirical assessment.} Under free-text generation, top-$k$ probabilities identify only a subprobability measure unless omitted mass is bounded. The controlled experiment instead augments three outcomes under $Q_u^G$ by the residual mass $1-\sum_{k=1}^3Q_u^G(v_k\mid c)$ and evaluates total variation on that finite law. The resulting claim concerns the constrained experiment, not unrestricted prose. Perturbations are prespecified as nuisance changes that preserve information, relevant changes that alter evidence, or irrelevant additions that test distraction. A useful channel must be stable to the first and responsive to the second; constancy alone is uninformative.

The presentation experiment therefore compares information-equivalent rewrites with prespecified changes to relevant evidence. The former should induce small state-probability changes; the latter should produce changes commensurate with those in the reference posterior. These are properties of the declared perturbation design, not universal constants of a fitted language model.

\subsection{Posterior recovery as an inverse problem}
We have now established what presentation stability can and cannot mean: small changes in the full response law remain controlled after semantic grouping when expressed mass is adequate, but textual similarity by itself guarantees nothing. This completes the forward part of the argument. We are now ready to state the paper's principal inverse problem formally.

Once the language-derived composition is well defined, the next question is whether it contains the right information. A measurement can be perfectly reproducible yet scientifically useless if substantially different reference posteriors produce nearly identical response probabilities. This is an inverse problem. Quantitative observability measures how strongly the language distribution separates posterior states, and the recovery theorem shows how that separation controls amplification of repeated-measure and probability error.

\subsubsection{Why the full distribution matters}
The modal response is a many-to-one summary: very different probability vectors can have the same largest component. For example, $(.51,.48,.01)$ and $(.98,.01,.01)$ select the same state, although the first is nearly tied and the second is highly concentrated. That distinction determines whether to abstain, seek more evidence or take a consequential action, and it enters any decision based on asymmetric losses, thresholds or the value of information. Modal accuracy cannot reveal whether the language channel preserves it.

Inverse recovery therefore asks for more than correct classification: do meaningful changes anywhere in the reference posterior remain distinguishable in the language-derived probabilities? If the forward map collapses such differences, no subsequent calibration can reconstruct them reliably. The lower observability modulus measures the weakest sensitivity of the language channel to a change in posterior log odds and hence whether full-distribution recovery is possible rather than merely modal agreement.

\subsubsection{Stable inverse recovery}
To state that calibration problem, assume the observable semantic log odds satisfy
\[
\boldsymbol\lambda_u=h_u(\boldsymbol\ell^\star)+\boldsymbol r,
\]
where $\boldsymbol r:=\boldsymbol\lambda_u-h_u(\boldsymbol\ell^\star)\in\mathbb R^{K-1}$ is the residual disturbance for the fixed presentation rule $u$. It records departures from the conditional-mean relation, including repeated-measure and probability-calculation variation; presentation changes are represented separately by the index $u$ and the perturbation analysis above. Calibration estimates an inverse of $h_u$, not a hidden model state. Thus the earlier triage example becomes a collection of paired observations: language-derived log odds computed from phrases such as ``urgent review'' are used to estimate the known reference log odds for \emph{urgent}, \emph{routine}, and \emph{insufficient information}. The scientific question is whether distinct reference odds remain sufficiently separated after this forward map to permit stable inversion.

\begin{theorem}[Stable nonlinear recovery; cf.\ \citealp{engl1996regularization}]\label{thm:recover}
If $h_u\in C(\mathcal L;\mathbb R^{K-1})$ has lower modulus $c_u>0$, $\|\boldsymbol r\|_2\le\delta$, and
\[
\widehat{\boldsymbol\ell}\in\arg\min_{\boldsymbol\ell\in\mathcal L}\|\boldsymbol\lambda_u-h_u(\boldsymbol\ell)\|_2,
\]
then $\|\widehat{\boldsymbol\ell}-\boldsymbol\ell^\star\|_2\le2\delta/c_u$.
\end{theorem}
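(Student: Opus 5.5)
The argument is the standard least-squares stability argument from inverse-problem theory, in three steps: a triangle inequality, the minimising property of $\widehat{\boldsymbol\ell}$, and the lower modulus of quantitative observability. Two preliminaries come first. The true coordinate $\boldsymbol\ell^\star$ must be admissible, i.e.\ lie in $\mathcal L$; this is implicit in the declared posterior domain $\Pi_0$ and I would state it explicitly. The minimiser $\widehat{\boldsymbol\ell}$ is taken as given by the hypothesis, so no compactness argument is needed. Only the stated membership is used; if the argmin were empty the statement would be vacuous.

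Step 1 bounds the residual at the estimate. Because $\widehat{\boldsymbol\ell}$ minimises the misfit over $\mathcal L$ and $\boldsymbol\ell^\star\in\mathcal L$,
\[
\|\boldsymbol\lambda_u-h_u(\widehat{\boldsymbol\ell})\|_2\le\|\boldsymbol\lambda_u-h_u(\boldsymbol\ell^\star)\|_2=\|\boldsymbol r\|_2\le\delta .
\]

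Step 2 transfers this to the forward images. By the triangle inequality, inserting $\boldsymbol\lambda_u$,
\[
\|h_u(\widehat{\boldsymbol\ell})-h_u(\boldsymbol\ell^\star)\|_2\le\|h_u(\widehat{\boldsymbol\ell})-\boldsymbol\lambda_u\|_2+\|\boldsymbol\lambda_u-h_u(\boldsymbol\ell^\star)\|_2\le2\delta .
\]

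Step 3 invokes quantitative observability. Both $\widehat{\boldsymbol\ell}$ and $\boldsymbol\ell^\star$ lie in $\mathcal L$, so the lower-modulus inequality with modulus $c_u$ gives
\[
c_u\|\widehat{\boldsymbol\ell}-\boldsymbol\ell^\star\|_2\le\|h_u(\widehat{\boldsymbol\ell})-h_u(\boldsymbol\ell^\star)\|_2\le2\delta .
\]
Dividing by $c_u>0$ yields the claim. Continuity of $h_u$ is not actually used beyond making $h_u$ well defined in the stated class. I would note that it is relevant only for the existence of the minimiser when $\mathcal L$ is closed and bounded.

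The main obstacle is interpretive rather than technical. The bound requires $\boldsymbol\ell^\star\in\mathcal L$ and a uniform modulus on all of $\mathcal L$, not merely near $\boldsymbol\ell^\star$. It also treats $h_u$ as known, so estimation error in the fitted inverse is not covered. I would flag in a remark that replacing $h_u$ by a fitted $\widehat h_u$ with $\sup_{\mathcal L}\|\widehat h_u-h_u\|_2\le\eta$ changes $\delta$ to $\delta+\eta$ in Steps 1–2. This gives the bound $2(\delta+\eta)/c_u$ by the same argument, which prepares the combined error budget.
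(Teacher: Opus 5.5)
Your proof is correct. The paper does not prove Theorem~\ref{thm:recover} itself. It presents the inequality as a specialization of \citet{dixon2026semanticobservation} (in the spirit of \citet{engl1996regularization}) and proves only the corollaries built on it. Your chain of minimality, triangle inequality and lower modulus is the standard argument behind that citation, made self-contained.

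Writing it out makes two things explicit that the paper leaves implicit. First, the admissibility condition $\boldsymbol\ell^\star\in\mathcal L$, i.e.\ $\boldsymbol\pi^\star(y)\in\Pi_0$, is used twice: as the competitor in Step~1 and as an argument of the modulus in Step~3. The statement never asserts it, so you are right to add it.

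Second, your proof shows where the factor $2$ comes from: it is the price of allowing $\boldsymbol\lambda_u\notin h_u(\mathcal L)$. When the observed coordinate lies in the range, the minimiser has zero misfit, Step~1 gives $0$, and the same chain yields $\delta/c_u$. The factor-free display in Corollary~\ref{cor:affine-budget} comes from exactly this zero-misfit case, because a square full-rank $B_u$ can be inverted exactly over $\mathbb R^{K-1}$. Corollary~\ref{cor:combined}, by contrast, inherits your factor $2/c_u$.

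Your $\eta$ extension is also sound. Step~1 gives misfit at most $\delta+\eta$, and returning from $\widehat h_u(\widehat{\boldsymbol\ell})$ to $h_u(\widehat{\boldsymbol\ell})$ in Step~2 costs one further $\eta$, giving $2(\delta+\eta)/c_u$. This bounds fitted-inverse error, which the paper only diagnoses through calibration resampling.
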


For example, if two posterior log-ratio vectors are one unit apart and their mean lexical log ratios are at least $0.5$ units apart, then $c_u=0.5$. If measurement error is bounded by $\delta=0.05$, recovery error is at most $2\delta/c_u=0.20$. A larger $c_u$ means that genuinely different posteriors remain more visibly different after language measurement.

This is a specialization of the general identification and inverse-modulus
theory in \citet{dixon2026semanticobservation}. The inverse inequality itself
is therefore not claimed as new here. Corollary~\ref{cor:combined} instead
turns its abstract perturbation radius into an auditable semantic error budget
derived from semantic-cell mismatch, probability calculation and unexpressed
mass. The radius $\delta$ measures noise entering the language-derived
coordinates, while $c_u^{-1}$ measures how strongly inversion magnifies that
noise. A small average held-out error can arise on a
favourable test distribution even when $c_u$ is nearly zero in an untested
direction. Correlation and predictive error are therefore evidence about
sampled behaviour, not proofs of identification.

\subsubsection{The combined semantic-to-posterior error budget}
\begin{corollary}[Semantic-to-posterior error budget]\label{cor:combined}
Suppose the conditions of Theorems~\ref{thm:approx} and~\ref{thm:recover}
hold, the ideal and estimated compositions lie in a compact interior simplex
on which the selected log-ratio map is $L_{\rm lr}$-Lipschitz from
$\ell_1$ to $\ell_2$, and other lexical-coordinate disturbance is bounded by
$\delta_0$.  Then
\[
\|\widehat{\boldsymbol\ell}-\boldsymbol\ell^\star\|_2
\le
\frac{2}{c_u}
\left\{\frac{2L_{\rm lr}}{m_0}
(e_{\rm sem}+e_{\rm score})+\delta_0\right\}.
\]
\end{corollary}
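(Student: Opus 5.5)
The plan is to chain Theorem~\ref{thm:approx} into Theorem~\ref{thm:recover} through the Lipschitz property of the log-ratio map, so that the residual radius $\delta$ of Theorem~\ref{thm:recover} becomes an explicit function of $e_{\rm sem}$, $e_{\rm score}$, $m_0$ and $\delta_0$.

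First, I fix the context and write the observed lexical log ratios as $\boldsymbol\lambda_u=\boldsymbol\ell(\widetilde{\boldsymbol p})$, computed from the scored cells and measured law. The corresponding ideal quantity is $\boldsymbol\ell(\boldsymbol p)$, computed from the ideal cells and the full law. I then decompose the residual as
\[
\boldsymbol r=\boldsymbol\lambda_u-h_u(\boldsymbol\ell^\star)
=\bigl\{\boldsymbol\ell(\widetilde{\boldsymbol p})-\boldsymbol\ell(\boldsymbol p)\bigr\}
+\bigl\{\boldsymbol\ell(\boldsymbol p)-h_u(\boldsymbol\ell^\star)\bigr\}.
\]
I read the hypothesis ``other lexical-coordinate disturbance is bounded by $\delta_0$'' as precisely $\|\boldsymbol\ell(\boldsymbol p)-h_u(\boldsymbol\ell^\star)\|_2\le\delta_0$. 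This is the disturbance that remains once semantic and scoring error are removed. The triangle inequality then gives $\|\boldsymbol r\|_2\le\|\boldsymbol\ell(\widetilde{\boldsymbol p})-\boldsymbol\ell(\boldsymbol p)\|_2+\delta_0$.

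Second, I bound the first term. Both $\boldsymbol p$ and $\widetilde{\boldsymbol p}$ lie in the compact interior simplex, where $\boldsymbol\ell$ is $L_{\rm lr}$-Lipschitz from $\ell_1$ to $\ell_2$. Hence $\|\boldsymbol\ell(\widetilde{\boldsymbol p})-\boldsymbol\ell(\boldsymbol p)\|_2\le L_{\rm lr}\|\widetilde{\boldsymbol p}-\boldsymbol p\|_1$. The hypotheses $M\ge m_0$ and $e<m_0$ of Theorem~\ref{thm:approx} hold by assumption, so $\|\widetilde{\boldsymbol p}-\boldsymbol p\|_1\le 2(e_{\rm sem}+e_{\rm score})/m_0$. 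Combining the two steps gives $\delta:=\frac{2L_{\rm lr}}{m_0}(e_{\rm sem}+e_{\rm score})+\delta_0$ as a valid bound on $\|\boldsymbol r\|_2$.

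Third, I apply Theorem~\ref{thm:recover} with this $\delta$. Its hypotheses are the lower modulus $c_u$, continuity of $h_u$, and $\widehat{\boldsymbol\ell}$ being a least-squares minimiser over $\mathcal L$; all hold by assumption. The theorem yields $\|\widehat{\boldsymbol\ell}-\boldsymbol\ell^\star\|_2\le 2\delta/c_u$, which is the displayed bound. For completeness I recall its two-line mechanism. Because $\boldsymbol\ell^\star\in\mathcal L$ is feasible, minimality gives $\|\boldsymbol\lambda_u-h_u(\widehat{\boldsymbol\ell})\|_2\le\|\boldsymbol r\|_2\le\delta$. The triangle inequality then gives $c_u\|\widehat{\boldsymbol\ell}-\boldsymbol\ell^\star\|_2\le\|h_u(\widehat{\boldsymbol\ell})-h_u(\boldsymbol\ell^\star)\|_2\le2\delta$.

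The main obstacle is the bookkeeping in the first step rather than any inequality. The conditions must be aligned: Theorem~\ref{thm:recover} assumes $\boldsymbol\ell^\star\in\mathcal L$, and the Lipschitz step needs both compositions to lie in the compact interior region. In particular, $\widetilde{\boldsymbol p}$ must have no zero components, which is why the corollary explicitly assumes it lies in the compact interior simplex. I would also state clearly that $\delta_0$ is defined relative to the ideal composition $\boldsymbol p$, so that semantic and scoring error are not double-counted.
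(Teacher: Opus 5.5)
Your proposal is correct and follows the paper's proof essentially step for step. Theorem~\ref{thm:approx} bounds the simplex error, and the $L_{\rm lr}$-Lipschitz log-ratio map carries that bound into coordinates. The triangle inequality then adds $\delta_0$ to give the radius $\delta$, and Theorem~\ref{thm:recover} converts $\delta$ into $2\delta/c_u$. Your reading of $\delta_0$ as a bound on $\|\boldsymbol\ell(\boldsymbol p)-h_u(\boldsymbol\ell^\star)\|_2$ states precisely what the paper leaves implicit, and it makes clear that semantic and scoring error are not counted twice.
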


The corollary is the operational consequence needed here.  Semantic
misclassification and probability-calculation error are first amplified by
normalization when expressed mass is small, then amplified again when the
inverse is poorly conditioned. The proposed measurement procedure estimates or
bounds these components separately; the companion identification study
\citep{dixon2026semanticobservation} treats the inverse modulus, convergence
rates and weak-identification limits in full generality. For example, take
$e_{\rm sem}+e_{\rm score}=0.03$, $m_0=0.80$, $L_{\rm lr}=1$ and
$\delta_0=0$. Normalisation first gives a disturbance bound of $0.075$; if
$c_u=0.50$, inversion raises the log-ratio recovery bound to
$2(0.075)/0.50=0.30$. With expressed mass $m_0=0.20$, the same errors would
instead give a bound of $1.20$.

\subsubsection{Affine observability and its empirical diagnostic}
\begin{corollary}[Affine semantic-to-posterior error budget; cf.\ \citealp{horn2012matrix}]\label{cor:affine-budget}
Suppose the conditions of Corollary~\ref{cor:combined} hold and
$h_u(\boldsymbol\ell)=B_u\boldsymbol\ell+\boldsymbol d_u$, where $B_u$ has full column rank. Then
\[
 \|h_u(\boldsymbol\ell)-h_u(\boldsymbol\ell')\|_2
 \ge \sigma_{\min}(B_u)\|\boldsymbol\ell-\boldsymbol\ell'\|_2
\]
for all $\boldsymbol\ell,\boldsymbol\ell'$ in the declared domain, and the oracle affine inverse satisfies
\[
 \|\widehat{\boldsymbol\ell}-\boldsymbol\ell^\star\|_2
 \le
 \frac{2L_{\rm lr}(e_{\rm sem}+e_{\rm score})/m_0+\delta_0}
 {\sigma_{\min}(B_u)}.
\]
Thus $\sigma_{\min}(B_u)$ is a valid global Euclidean lower modulus for the affine forward map.
\end{corollary}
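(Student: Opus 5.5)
The proof has two parts, and both reduce to earlier results. The first part is the lower-modulus inequality. The second part substitutes $c_u=\sigma_{\min}(B_u)$ into the error budget of Corollary~\ref{cor:combined}.

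\textbf{Lower modulus.} For any $\boldsymbol\ell,\boldsymbol\ell'$ in the declared domain $\mathcal L$, the intercept cancels:
\[
h_u(\boldsymbol\ell)-h_u(\boldsymbol\ell')=B_u(\boldsymbol\ell-\boldsymbol\ell').
\]
It therefore suffices to show $\|B_u\boldsymbol v\|_2\ge\sigma_{\min}(B_u)\|\boldsymbol v\|_2$ for every $\boldsymbol v\in\mathbb R^{K-1}$. This is the standard Rayleigh-quotient or singular-value fact \citep{horn2012matrix}. Write
\[
\|B_u\boldsymbol v\|_2^2=\boldsymbol v^\top B_u^\top B_u\boldsymbol v\ge\lambda_{\min}(B_u^\top B_u)\|\boldsymbol v\|_2^2,
\]
and use $\lambda_{\min}(B_u^\top B_u)=\sigma_{\min}(B_u)^2$. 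Full column rank makes $B_u^\top B_u$ positive definite, so $\sigma_{\min}(B_u)>0$ and the modulus is nondegenerate. The inequality holds for all $\boldsymbol v$, not only for differences inside $\mathcal L$. So it is a global Euclidean lower modulus, which gives the final sentence of the statement.

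\textbf{Error bound.} Since $h_u$ is affine, it is continuous, so $h_u\in C(\mathcal L;\mathbb R^{K-1})$. It has lower modulus $c_u=\sigma_{\min}(B_u)$, so the hypotheses of Theorem~\ref{thm:recover}, and hence of Corollary~\ref{cor:combined}, hold with this $c_u$. Corollary~\ref{cor:combined} then gives
\[
\|\widehat{\boldsymbol\ell}-\boldsymbol\ell^\star\|_2\le\frac{2}{\sigma_{\min}(B_u)}\Bigl\{\frac{2L_{\rm lr}}{m_0}(e_{\rm sem}+e_{\rm score})+\delta_0\Bigr\}.
\]
This carries an extra factor of $2$ compared with the displayed claim.

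\textbf{Main obstacle: removing the factor $2$.} The claimed bound is sharper, and I would obtain it from the exact affine inverse rather than the least-squares argmin of Theorem~\ref{thm:recover}. This is what "oracle affine inverse" should mean: $\widehat{\boldsymbol\ell}=B_u^{+}(\boldsymbol\lambda_u-\boldsymbol d_u)$, where $B_u^{+}=(B_u^\top B_u)^{-1}B_u^\top$. Under full column rank $B_u^+B_u=I$, so
\[
\widehat{\boldsymbol\ell}-\boldsymbol\ell^\star=B_u^+\boldsymbol r,
\qquad
\|B_u^+\|_2=\frac{1}{\sigma_{\min}(B_u)}.
\]
Hence $\|\widehat{\boldsymbol\ell}-\boldsymbol\ell^\star\|_2\le\|\boldsymbol r\|_2/\sigma_{\min}(B_u)$, with no factor $2$.

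It remains to bound $\|\boldsymbol r\|_2$ by the numerator of the claim. Split $\boldsymbol r$ into two pieces:
\begin{itemize}
\item the log-ratio image of the composition error, which is at most $L_{\rm lr}\|\widetilde{\boldsymbol p}-\boldsymbol p\|_1\le 2L_{\rm lr}(e_{\rm sem}+e_{\rm score})/m_0$ by Theorem~\ref{thm:approx};
\item the other lexical-coordinate disturbance, bounded by $\delta_0$.
\end{itemize}
The triangle inequality then gives the stated numerator.

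One caveat needs checking. The argmin over $\mathcal L$ coincides with this pseudoinverse solution only when the unconstrained minimiser lies in $\mathcal L$. If it does not, I would fall back on the factor-$2$ bound from Corollary~\ref{cor:combined}. Alternatively, I would take the oracle inverse to be unconstrained and apply it directly, which is what the statement's wording suggests.
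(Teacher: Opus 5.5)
Your proposal is correct and follows essentially the paper's route. The paper likewise obtains the lower modulus from the variational characterization of $\sigma_{\min}(B_u)$ once the intercept cancels, and takes the disturbance radius $2L_{\rm lr}(e_{\rm sem}+e_{\rm score})/m_0+\delta_0$ from Corollary~\ref{cor:combined}. It then inverts the affine map exactly (``invert the lower-modulus inequality''), which is your bound $\|B_u^{+}\boldsymbol r\|_2\le\|\boldsymbol r\|_2/\sigma_{\min}(B_u)$, and this is how it avoids the factor $2$ from the argmin in Theorem~\ref{thm:recover}.

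Your caveat about the constrained argmin over $\mathcal L$ is a point the paper does not address. When $\mathcal L$ is convex the factor-$2$ fallback is unnecessary. The constrained minimiser is then the $\|B_u\cdot\|_2$-projection of the unconstrained one, and projection onto a convex set containing $\boldsymbol\ell^\star$ does not increase $\|B_u(\cdot-\boldsymbol\ell^\star)\|_2$.
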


\noindent\textbf{Discussion.} Corollary~\ref{cor:affine-budget} gives the empirical singular-value diagnostic its theoretical interpretation. A small $\sigma_{\min}(B_u)$ means that the language measurement nearly erases some change in posterior log odds, allowing its inverse to amplify modest error. A scenario-bootstrap lower confidence bound for $\sigma_{\min}(B_u)$ therefore assesses the denominator of the affine error bound; it is not an unrelated diagnostic. Mean prediction loss alone cannot reveal such a weakly observed direction.

Figure~\ref{fig:inverse} illustrates this mechanism. Distinct reference posteriors must remain separated after language measurement; the calibrated inverse then maps the retained lexical differences back to the posterior scale.

\begin{figure}[H]\centering
\begin{tikzpicture}[node distance=12mm and 19mm,font=\small]
\node[bbnode] (l1) {posterior log odds\\$\ell_1^\star$};\node[bbnode,below=of l1] (l2) {posterior log odds\\$\ell_2^\star$};
\node[bbnode,right=of l1] (z1) {lexical log odds\\$h_u(\ell_1^\star)$};\node[bbnode,right=of l2] (z2) {lexical log odds\\$h_u(\ell_2^\star)$};
\node[bbnode,right=of $(z1)!0.5!(z2)$,xshift=10mm] (inv) {calibrated inverse\\$\psi_u$};
\draw[bbmap] (l1)--(z1);\draw[bbmap] (l2)--(z2);\draw[bbdash] (l1)--node[left,font=\scriptsize]{$\|\Delta\ell\|$}(l2);\draw[bbdash] (z1)--node[right,align=left,font=\scriptsize]{$\|\Delta\lambda\|\ge$\\$\sigma_{\min}(B_u)\|\Delta\ell\|$}(z2);\draw[bbmap] (z1)--(inv);\draw[bbmap] (z2)--(inv);
\end{tikzpicture}
\caption{Quantitative observability and inverse recovery. For the affine forward map, the smallest singular value $\sigma_{\min}(B_u)$ is a lower modulus: it prevents two materially different reference posteriors from collapsing to the same lexical measurement, while $1/\sigma_{\min}(B_u)$ controls worst-direction amplification by the inverse. Good average correlation can still coexist with a nearly unobserved direction.}
\label{fig:inverse}
\end{figure}

\subsubsection{Finite-design observability and uncertainty assessment}\label{sec:empirical-observability}
The experiment asks whether the fitted affine map retains every log-ratio direction on the sampled design and recovers untouched posteriors. Both properties are necessary: a collapsed direction makes distinct posterior changes observationally indistinguishable, while held-out recovery tests whether the fitted inverse generalizes beyond its calibration sample. A scenario-clustered bootstrap lower bound for the smallest singular value addresses the first question; held-out losses and conformal coverage address the second. These results support finite-design observability, not unrestricted nonlinear recovery or extrapolation beyond the studied distribution. This qualification follows the broader nonparametric-inference literature: honest uniform statements generally require regularity restrictions and cannot be inferred adaptively from finite observations alone \citep{genovese2008adaptive,low1997nonparametric}. The companion paper, \emph{Identification and Learning of Semantic Observation Kernels: Partial Observation, Uniform Recovery, and Minimax Limits} \citep{dixon2026semanticobservation}, develops the stronger recovery questions.

\section{Experimental design for assessing the theoretical properties}
The experimental programme is organised around the same three questions introduced in Section~1. \emph{First}, does semantic measurement add value beyond printed numerical confidence? \emph{Second}, can it recover a known target posterior on new cases, with calibrated uncertainty? \emph{Third}, is it stable under presentation changes while still responding to genuine changes in evidence? The approximation theorem requires coverage and scoring diagnostics, the inverse-recovery theorem requires posterior-space variation and held-out calibration, and the perturbation theorem requires controlled changes to presentation and evidence. Organising the design in this order ensures that each reported statistic answers a declared scientific question.

The experimental logic is domain-general. For concreteness, one empirical study uses financial-market evidence because it is heterogeneous, time-stamped and professionally consequential; finance is the test bed, not part of the mathematical definition. The same design could study diagnostic states from patient records, incident states from cybersecurity reports or fault states from engineering logs, provided that the states, evidence, reference posterior and validation partitions are specified independently of the language measurement. Put differently, the application vocabulary changes, but the statistical questions do not. Transfer to another domain nevertheless requires new validation, so no claim of automatic external validity is made.

\paragraph{Information timing and look-ahead bias.} Two forms of look-ahead are excluded. First, evidence is restricted to information available at the stated information time; later outcomes cannot enter the prompt or reference construction. In a medical application, for example, a later confirmed diagnosis could not silently be inserted into an earlier triage record. In the market study, records use only levels and changes ending on or before their information dates. The five-trading-day horizon defines the requested interpretation, not a future return used as the response. Second, semantic states, response phrases, prompt specification, calibration class and scenario partitions are frozen before the untouched test data are opened. Validation may select a specification and conformal calibration may set a radius, but neither sees final-test outcomes. In other words, neither future information nor future test performance is allowed to leak backwards into the measurement. Because the locally cached Federal Reserve series do not reconstruct historical data revisions, the study supports a contemporaneous semantic-measurement claim rather than real-time trading or revision-free forecasting.

\paragraph{Discovery and confirmation.} Candidate phrases, semantic equivalence classes and presentation forms may be explored during discovery because language permits many plausible representations. For example, ``urgent review'' and ``immediate escalation'' may be tested as expressions of one state before confirmation begins. Once chosen, however, these choices become part of the measuring instrument. Altering them after seeing final-test errors would resemble changing a laboratory scale after observing the desired result. Prespecification preserves the meaning of held-out evidence.

\paragraph{Partitioned experimental protocol.} The protocol follows the mathematical dependencies rather than presenting an undifferentiated collection of summaries. The semantic partition and presentation rule are selected first; candidate phrases are then checked for shared beginnings, incomplete probability mass and semantic ambiguity. Entire scenarios are assigned to discovery, semantic validation, calibration, model selection, conformal calibration or final testing. The final test remains inaccessible until the calibration rule, reference state, candidate set and analysis are fixed. Put simply, choices may be learned from validation data, but they may not be rescued by the test set.

\paragraph{Question 1: added value beyond printed confidence.} The semantic-orientation experiment asks whether language-derived probabilities agree with a fixed reference more reliably than probabilities printed by the same model. The two channels see identical records and are judged against the same states. This is the most direct practical test: does the semantic construction retain information that numerical elicitation loses?

\paragraph{Manual audit of the benchmark responses.} The professional-text benchmark does not obtain its reference responses from a hidden Markov model or another fitted state-transition model. Each benchmark response was manually audited against the frozen state definitions and the contemporaneous evidence record before confirmatory evaluation. The audit checks whether the assigned response is supported by the declared semantic protocol; it does not use subsequent market outcomes and does not turn the study into a forecasting exercise. The controlled experiment is separate: there the reference posterior is known exactly from the prespecified static data-generating law.

\paragraph{Question 2: posterior recovery and calibrated uncertainty.} The controlled experiment spans interior and near-boundary posteriors, equally uncertain distributions pointing in different directions, and distributions with the same modal state but different residual mass. These cases ask whether the method preserves the whole probability vector rather than merely choosing the right state. Affine calibration is primary because its smallest singular value has a transparent observability interpretation. Diagnostics include held-out log-ratio and Jensen--Shannon error, clustered confidence intervals and split-conformal sets. In other words, the experiment asks both whether the posterior is recovered and whether its remaining error is honestly quantified.

\paragraph{Question 3: presentation stability and evidence responsiveness.} Prespecified paraphrases, reorderings and irrelevant additions test whether the measurement changes when the scientific information does not. Omissions and contradictions test the converse: whether it moves when the evidence genuinely changes. Both sides are necessary. A measurement that changes with every harmless rewrite is unstable, while one that never changes is uninformative.

\subsection{Reproducibility}
The controlled experiment was repeated without changing its scenario partition, semantic map, calibration method, or analysis for two prespecified fitted language models, GPT-4.1-mini and GPT-4.0-mini. Both used deterministic generation settings, five repeated measurements per scenario, and probabilities of the declared continuations. Each model contributed 2,600 observations over 520 scenarios: 240 for calibration, 80 for validation, 100 for conformal calibration, and 100 for untouched testing. Repetitions remain grouped by scenario. The semantic comparison uses 200 untouched records in 60 clusters and 5,000 cluster-bootstrap resamples. A broader descriptive sample contains 1,500 records; every candidate composition satisfied the prespecified probability-coverage rule, and two independent probability calculations agreed whenever both were available.

Table~\ref{tab:design} consolidates the experimental parameters that otherwise become difficult to reconstruct from separate results. The unit of analysis is always the scenario; repeated observations characterize conditional measurement variation and never inflate the effective sample size. The partitions have distinct statistical roles: validation selects a presentation or calibration specification, conformal calibration fixes an error radius, and the untouched test partition estimates final performance. No test scenario is used to select the inverse map or its uncertainty threshold.

\begin{table}[H]\centering\scriptsize
\caption{Experimental design and fixed analysis parameters. ``Per fitted model'' applies separately to both prespecified language distributions. Scenario counts, rather than repeated responses, determine the independent or exchangeable sampling units used for inference.}\label{tab:design}
\begin{tabularx}{\textwidth}{p{25mm}p{28mm}p{23mm}X}\toprule
Study component & Statistical target & Sample and partitions & Fixed measurement and inference settings\\\midrule
Semantic channel comparison & Agreement with a manually audited four-state semantic reference & 200 untouched records in 60 scenario clusters; broader descriptive set of 1,500 records & Risk-on, Mixed, Risk-off and Unsure; frozen state definitions; paired lexical and elicited measurements; 5,000 scenario-cluster bootstrap resamples; McNemar paired test\\
Controlled posterior recovery & Recovery of an exact three-state posterior distribution in log-ratio coordinates & Per fitted model: 520 scenarios split 240 calibration, 80 validation, 100 conformal and 100 untouched test & Five repeats per scenario (2,600 observations); affine log-ratio calibration; scenario clustering; 90\% split-conformal radius\\
Cross-model replication & Stability of conclusions across fitted language distributions & Identical 520-scenario design applied separately to GPT-4.1-mini and GPT-4.0-mini & Same partitions, semantic map, calibration class, deterministic sampling settings and analysis\\
Observability diagnostic & Retention of all fitted affine log-ratio directions & 240 calibration scenarios per model, with all repeats retained within cluster & Smallest singular value of fitted forward map; scenario-cluster bootstrap confidence interval\\
Calibration uncertainty & Sensitivity of recovered test posteriors to fitted inverse & 2,000 resamples of the 240 calibration scenarios per model & Refit calibration on each cluster resample; evaluate movement on the same untouched test scenarios\\
Presentation study & Stability to equivalent wording and sensitivity to omitted information & 24 validation scenarios balanced over three posterior-entropy strata; two repeats & Two information-preserving rewrites and three omissions; prespecified Jensen--Shannon equivalence margin; no final-test reuse\\
Perturbation extension & Resistance to irrelevant text and directional response to contradictory evidence & 12 validation scenarios per model, four from each entropy stratum; two repeats; 48 calls per model & One fixed irrelevant distractor; one observed variable replaced by the alternative maximizing exact posterior change; two primary contrasts\\
Probability calculation audit & Completeness of constrained enum law and agreement of extraction methods & All controlled observations for which both calculations were available & Raw code mass retained with an explicit residual outcome; absent candidates not set to zero; completeness threshold 0.999999; unrestricted free-text law not claimed\\\bottomrule
\end{tabularx}
\end{table}

The design supports three levels of conclusion. The paired semantic study asks whether two observable measurement channels differ under a common reference. The controlled study asks whether calibrated lexical coordinates recover an exact posterior out of sample and with advertised finite-sample coverage. The lexical-presentation and extraction studies locate sensitivity to wording, missing evidence and probability calculation.

Reproduction requires more than a model name. The archive records model versions, observation times, sampling settings, observed and normalised probability mass, presentation rules, scenario partitions and random seeds. The submission materials will include the fixed presentations, semantic definitions, scenario lists, probability tables, analysis programs and figure data. Since a fitted language distribution may change after publication, exact reproduction applies to the archived observations; new observations constitute statistical replications.

The natural-language corpus consists of deterministic short-form market commentary compiled from Federal Reserve Economic Data series covering the S\&P 500, CBOE VIX, two- and ten-year Treasury yields, and investment-grade and high-yield corporate credit spreads. It fixes US broad-equity risk orientation over the next five trading days as its case-study target and uses Risk-on, Mixed, Risk-off, and Unsure as the possible responses. The reference responses were manually audited under definitions frozen before confirmatory analysis, and the evaluation partition was likewise fixed in advance. The paired comparison therefore estimates relative performance against the same prespecified semantic target rather than allowing either measurement channel to define its own success criterion.

The reference construction is deliberately domain-general. A finite Bayesian network is useful when evidence components have a known conditional structure, but it is neither mandatory nor used to assign the professional-text benchmark responses. A medical study might use an expert-reviewed diagnostic model, a reliability study a prespecified degradation model, and a scientific-hypothesis study a finite likelihood family. The essential requirement is that the reference construction be reproducible and independent of the language measurement being evaluated.

Together, these design choices make the empirical claims narrow but reusable. The semantic-orientation study asks whether continuation-derived and numerically elicited probabilities agree with a semantic reference; the controlled-posterior study asks whether the former preserve known reference posterior distributions in log-ratio coordinates. The result is a validated static measurement from which a separate dynamic study could subsequently proceed.

\section{Experimental results}
We now answer the three questions in the order posed in Sections~1 and~5. Semantic measurement first is compared with printed numerical confidence; posterior recovery and uncertainty coverage are then tested on untouched cases; finally, presentation stability is considered alongside responsiveness to changed evidence. Table~\ref{tab:tests} gives the answers at a glance before the supporting detail.

\begin{table}[H]\centering\small
\caption{Answers to the paper's three central questions and their supporting diagnostics. ``Yes'' and ``mostly'' refer to the prespecified experimental design, not every model, prompt or state space.}\label{tab:tests}
\begin{tabularx}{\textwidth}{p{42mm}p{24mm}X}\toprule
Question & Verdict & Main evidence\\\midrule
Does semantic measurement add value beyond printed confidence? & Yes & Higher accuracy and lower log and Brier losses on the same professional-text records\\
Can it recover the target posterior? & Yes, on the tested design & Mean held-out Jensen--Shannon error near $0.04$ across two fitted models\\
Is uncertainty on new cases calibrated? & Yes & The 90\% conformal sets cover 94\% and 90\% of untouched cases\\
Is it stable under harmless presentation changes? & Mostly & Paraphrases are stable; evidence ordering is a material exception\\
Does it respond to genuine evidence changes? & Yes & Omission worsens recovery and contradiction moves estimates in the correct direction\\
Does raw word entropy equal posterior uncertainty? & No & Their correlation is approximately zero before semantic calibration\\\bottomrule
\end{tabularx}
\end{table}

These findings support the paper's measurement theory on the declared design. They do not establish universal prompt invariance or validate unrestricted free-text generation. Each subsection begins with its substantive verdict; readers seeking inferential detail can then inspect the accompanying table or figure.

\subsection{Does semantic measurement improve on printed confidence?}
\paragraph{Verdict: yes, in the professional-text experiment.} Continuation-derived measurement raises accuracy by seven percentage points and improves both proper losses. Asking a model to print a probability therefore discards information present in its distribution over language.

\paragraph{Design and evidence.} The untouched test contains 200 records in 60 scenario clusters and four semantic states, including insufficient information. Candidate-phrase probabilities form the lexical measurement; printed probabilities form the elicited measurement. Both are evaluated on identical records under one reference protocol, with 5,000 cluster-bootstrap resamples. Table~\ref{tab:semantic} reports the paired results summarized in Table~\ref{tab:tests}.

\begin{table}[H]\centering\small
\caption{Untouched comparison between continuation-derived lexical measurement and direct numerical elicitation under the same fixed semantic reference protocol. This table tests the motivating distinction between two observable channels; it is not a direct test of a theorem. Differences are lexical minus elicited; negative values favour lexical measurement for proper losses.}\label{tab:semantic}
\begin{tabular}{lccc}\toprule
Estimand & Lexical & Elicited & Paired difference (95\% CI)\\\midrule
Accuracy & 0.885 & 0.815 & $0.070\ [0.011,0.126]$\\
Log loss & --- & --- & $-0.788\ [-1.371,-0.317]$\\
Brier score & --- & --- & $-0.144\ [-0.214,-0.078]$\\\bottomrule
\end{tabular}
\end{table}

\paragraph{What this checks.} McNemar's test gives $p=0.0243$, and both proper-loss intervals exclude zero. The two channels are thus empirically different, with the semantic measurement preferred here. This motivates the construction; it is not a claim that it must dominate in every application. Transfer requires a new state space and reference protocol.

\paragraph{Channel diagnostics.} Figure~\ref{fig:semantic-results} shows why the measurements should remain separate: their calibration patterns and scenario-level disagreements differ rather than representing interchangeable versions of one probability.

\begin{figure}[H]\centering
\begin{subfigure}{.49\linewidth}\includegraphics[width=\linewidth]{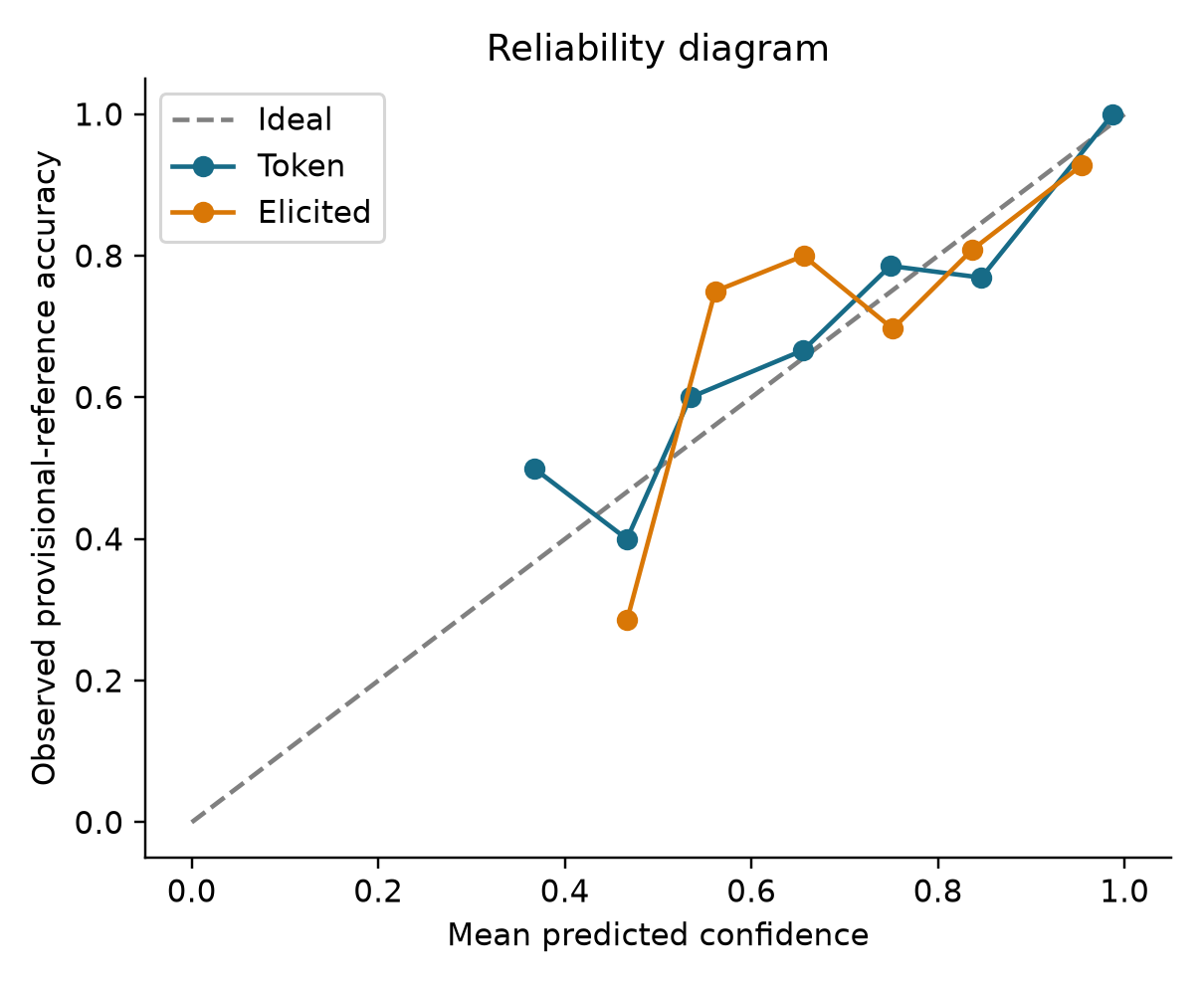}\caption{Classwise calibration diagnostics.}\end{subfigure}\hfill
\begin{subfigure}{.49\linewidth}\includegraphics[width=\linewidth]{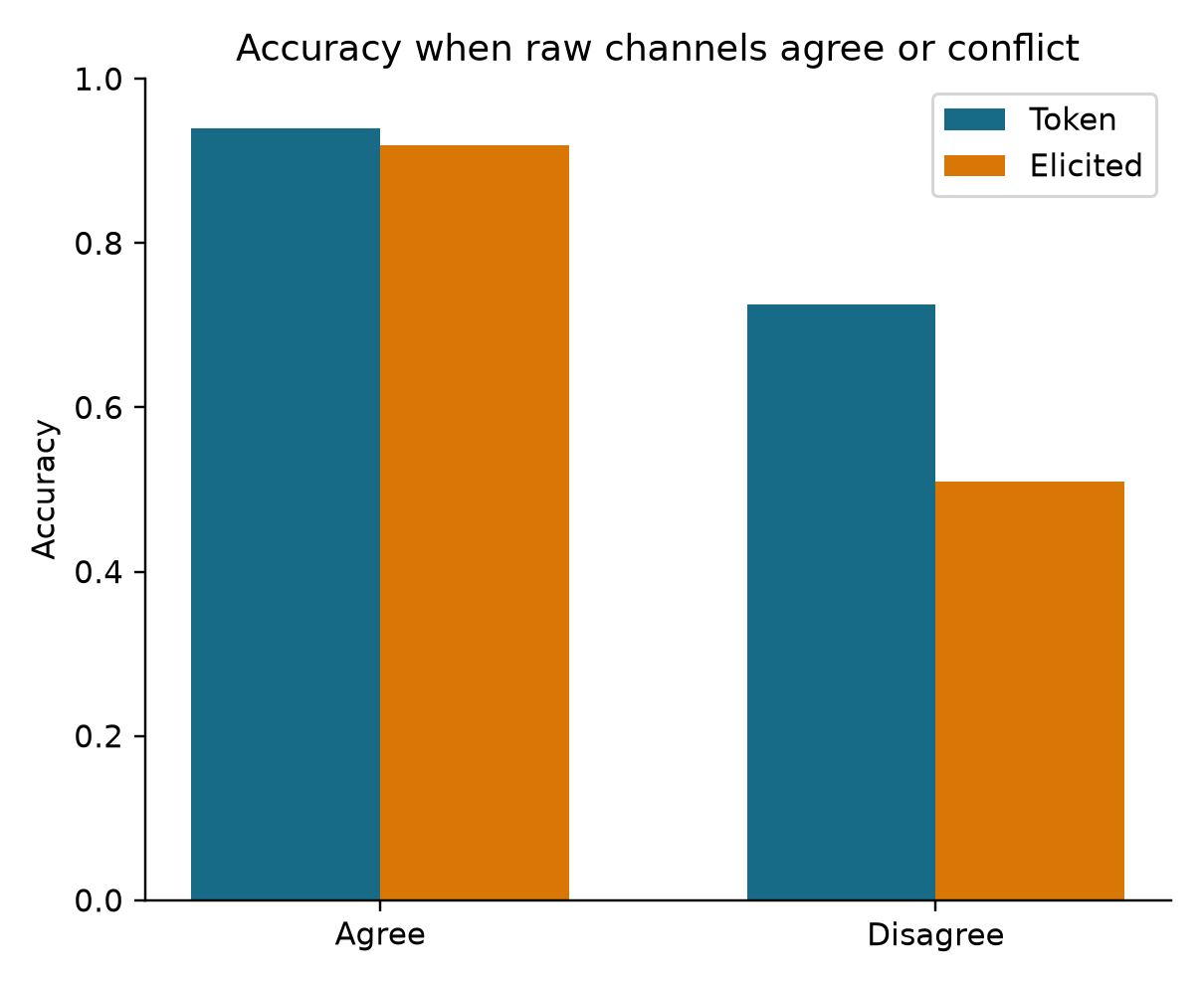}\caption{Lexical--elicited disagreement.}\end{subfigure}
\caption{Two views of measurement quality in the semantic study. These are diagnostics of the paper's motivating channel distinction rather than direct tests of a theorem. The calibration panels assess whether reported probability levels agree with observed response frequencies, while the disagreement plot shows that continuation-derived and elicited probabilities are empirically distinct rather than interchangeable versions of one quantity.}
\label{fig:semantic-results}
\end{figure}

\subsection{Can the semantic measurement recover a posterior?}
\paragraph{Verdict: yes, on the controlled design.} Both fitted models recover untouched exact posteriors with mean Jensen--Shannon error near $0.04$; their 90\% uncertainty sets cover 94\% and 90\% of new cases. This is the paper's central test of posterior recovery rather than modal classification.

\paragraph{What this checks.} Near-complete measured mass supports the premises of Lemma~\ref{lem:exist} and Theorem~\ref{thm:approx}; held-out error checks Theorem~\ref{thm:recover} and Corollary~\ref{cor:combined}; positive singular-value bounds check the affine recovery condition in Corollary~\ref{cor:affine-budget}. Put simply, the language coordinates retain enough posterior variation to permit a stable fitted inverse on the sampled design.

Each controlled replication contains 520 scenarios and 2,600 repeated observations. Calibration is fitted out of sample, and uncertainty is clustered by scenario. Table~\ref{tab:crossmodel} shows that posterior recovery is not confined to the first fitted model: the second produces similar held-out error and nominal conformal coverage.

\begin{table}[H]\centering\scriptsize
\caption{Cross-model exact-posterior recovery under an identical prespecified experiment. Held-out error assesses Theorem~\ref{thm:recover} and Corollary~\ref{cor:combined}; the singular-value interval assesses Corollary~\ref{cor:affine-budget}; conformal coverage is a separate predictive-uncertainty assessment.}\label{tab:crossmodel}
\begin{tabular}{lcccc}\toprule
Fitted model & Mean Jensen--Shannon error (95\% CI) & 90\% radius & Coverage & $\sigma_{\min}$ (95\% bootstrap CI)\\\midrule
GPT-4.1-mini-2025-04-14 & $0.0390\ [0.0295,0.0498]$ & 0.1371 & 0.94 & $1.892\ [1.686,2.061]$\\
GPT-4.0-mini-2024-07-18 & $0.0366\ [0.0286,0.0454]$ & 0.0922 & 0.90 & $1.848\ [1.600,2.028]$\\\bottomrule
\end{tabular}
\end{table}

Table~\ref{tab:budget} quantifies the uncertainty decomposition on untouched test scenarios. The dominant term is stable scenario-level recovery error rather than within-scenario variation, which contributes 1.79\% and 0.72\% of squared Aitchison error. Resampling the 240 calibration scenarios shows that estimation of the inverse map is smaller still. Additional repetitions would therefore remove little of the observed error; improvement should instead target the semantic representation and calibration across scenarios.

\begin{table}[H]\centering\scriptsize
\caption{Empirical uncertainty budget in squared Aitchison distance on 100 untouched scenarios with five repetitions each. The first three columns assess equation~\eqref{eq:decomp}; calibration resampling diagnoses uncertainty in the inverse entering Corollary~\ref{cor:combined}. The calibration column is nested within recovery error and is not added again.}\label{tab:budget}
\begin{tabular}{lcccc}\toprule
Fitted model & Total error & Within-scenario variation (share) & Stable recovery (share) & Calibration estimation\\\midrule
GPT-4.1-mini & 3.859 & 0.069 (1.79\%) & 3.790 (98.21\%) & $0.067\ [0.006,0.231]$\\
GPT-4.0-mini & 3.276 & 0.024 (0.72\%) & 3.252 (99.28\%) & $0.047\ [0.006,0.143]$\\\bottomrule
\end{tabular}
\end{table}

The semantic approximation decomposition clarifies what these experiments do and do not establish. Probability-calculation error and candidate-mass coverage are measured directly; semantic-cell mismatch is governed by the frozen reference protocol and must be examined through independent semantic review or prespecified alternative partitions. Repeated probability measurements cannot identify a mistaken semantic map, because repeating the same assignment reproduces the same mistake. This separation is practically useful: it prevents excellent numerical reproducibility from being misreported as evidence that the state definitions themselves are valid.

The observation procedure contributes no visible discrepancy at the reported precision: the maximum difference between two probability calculations is zero, mean candidate-mass deficit is below $4.8\times10^{-8}$, and the maximum deficit is below $2.0\times10^{-7}$ in both studies. In the terminology of Theorem~\ref{thm:approx}, these findings make the measured contribution of $e_{\rm score}$ and the observed mass deficit negligible for the declared candidate phrases. They do not prove that every semantic expression has been enumerated or that $e_{\rm sem}=0$. Thus truncation and numerical calculation do not explain the test error for the prespecified phrases, while semantic validity remains a distinct design requirement. Because Aitchison distance emphasises relative error near the simplex boundary, its scale is not comparable with Jensen--Shannon divergence in Table~\ref{tab:crossmodel}.

\noindent\textbf{Observability interpretation.} The positive bootstrap lower bounds show that the \emph{fitted affine forward maps} retain both additive log-ratio directions on the sampled calibration design. This is stronger than reporting a condition number alone. It is not a global identification theorem: the empirical pairwise lower modulus is effectively zero for GPT-4.1-mini and zero at machine precision for GPT-4.0-mini, and affine residual root-mean-square errors are 7.62 and 7.35 in lexical log-ratio units. The data support aggregate affine recoverability on the sampled design, not uniform injectivity over an unrestricted function class.

The full-law premise was also audited on the structured response law. The minimum summed probability of its three admissible codes was $0.999999808$ for GPT-4.1-mini and $0.999999816$ for GPT-4.0-mini, both above the frozen $0.999999$ threshold; the residual was retained as a fourth outcome. This supports the observed-law premise used by Lemma~\ref{lem:exist} and Theorem~\ref{thm:approx} for the constrained experiment, not for unrestricted prose.

Figure~\ref{fig:posterior-results} supplies the visual counterpart to Table~\ref{tab:crossmodel}: its left panel concerns recovery under Theorem~\ref{thm:recover}, while its right panel concerns the separate conformal-coverage assessment.

\begin{figure}[H]\centering
\begin{subfigure}{.49\linewidth}\includegraphics[width=\linewidth]{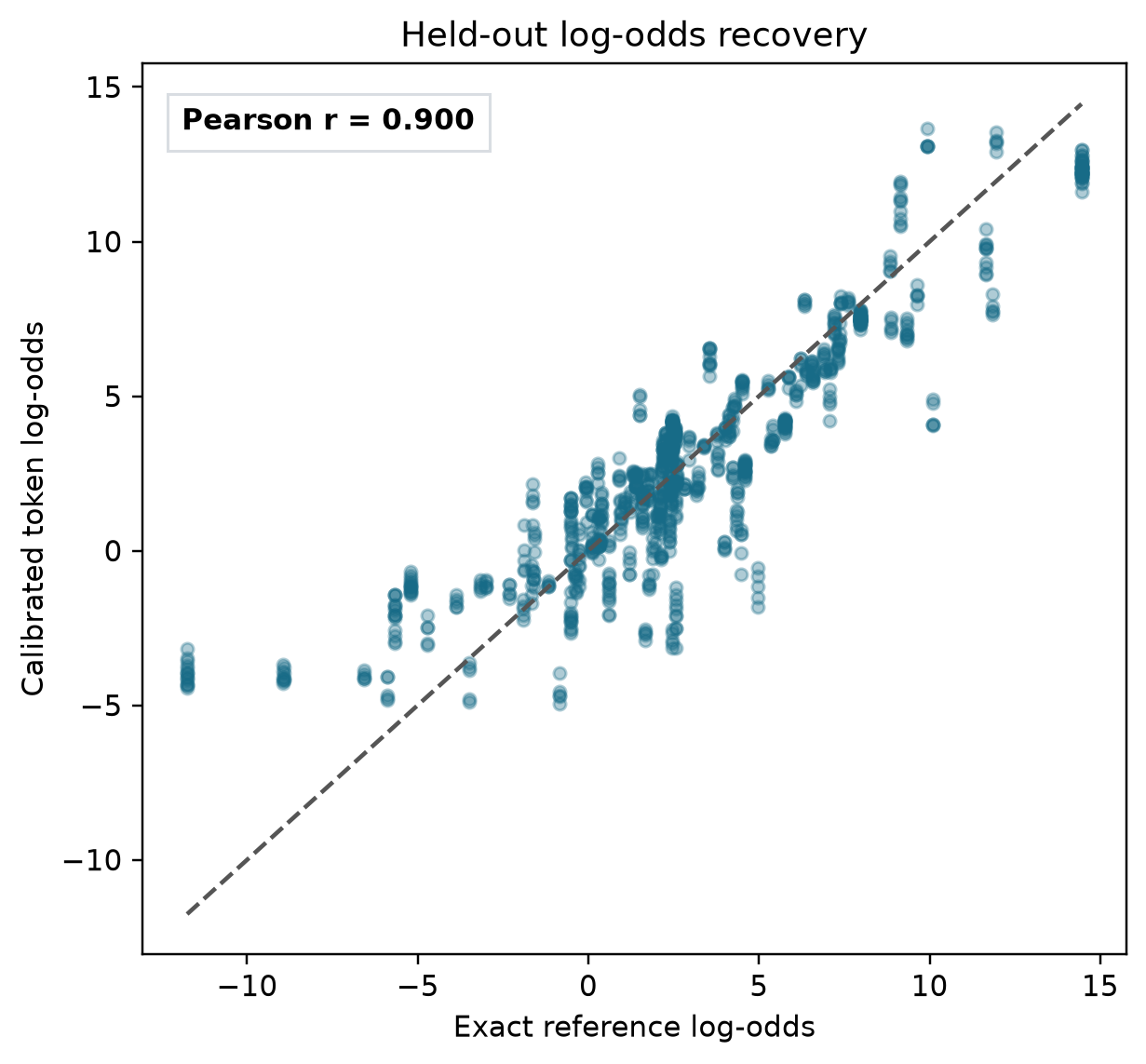}\caption{Reference versus calibrated log odds.}\end{subfigure}\hfill
\begin{subfigure}{.49\linewidth}\includegraphics[width=\linewidth]{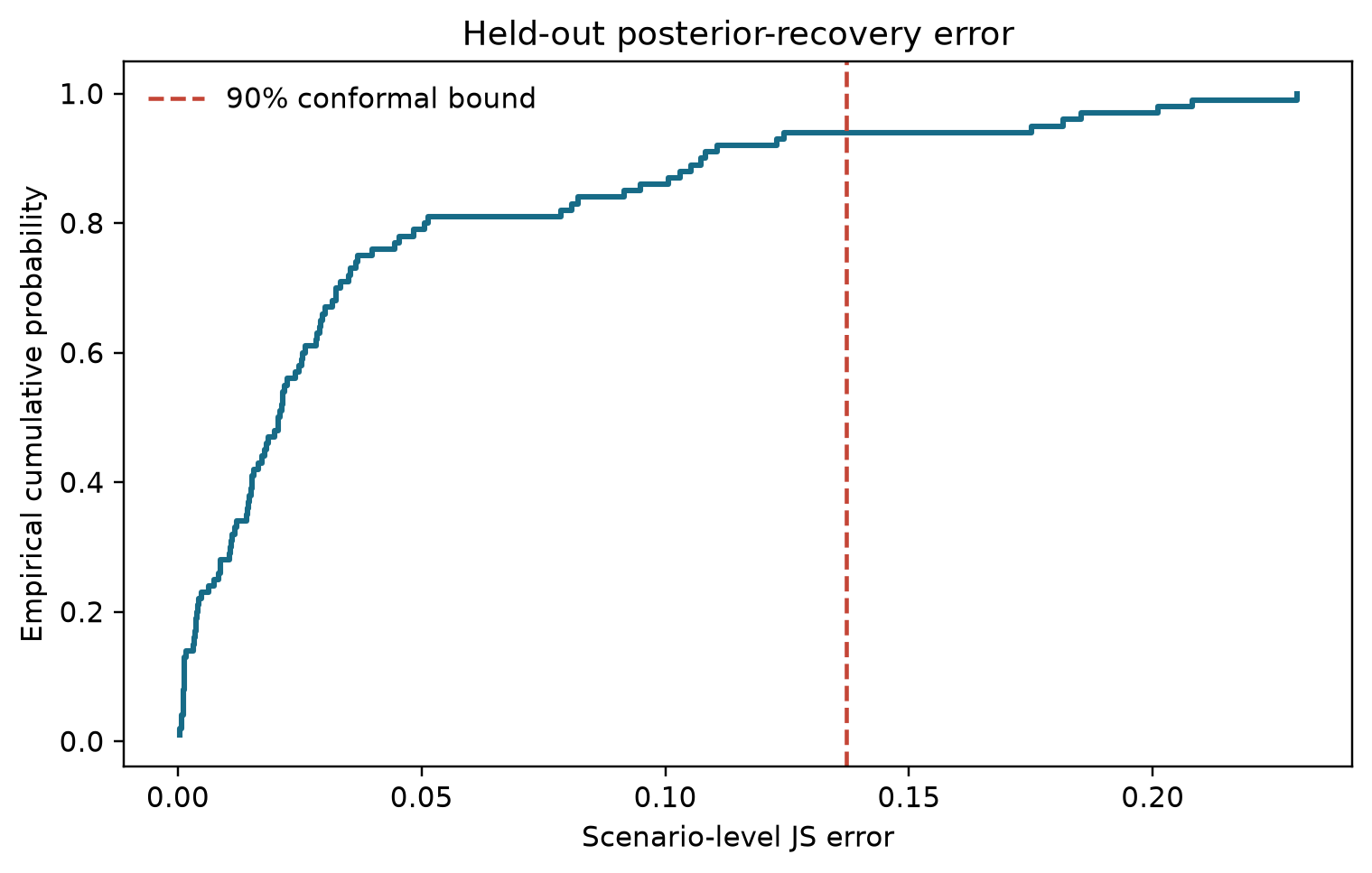}\caption{Error distribution and conformal threshold.}\end{subfigure}
\caption{Evidence for inverse recovery and finite-sample uncertainty. The left panel tests whether lexical coordinates preserve the directions and scale of exact posterior log odds after calibration; the displayed correlation is a summary rather than an identification proof. The right panel evaluates individual-scenario errors against the frozen conformal threshold. Thus only the left panel bears directly on Theorem~\ref{thm:recover}; the right evaluates predictive coverage under exchangeability.}
\label{fig:posterior-results}
\end{figure}

Entropy of the uncalibrated candidate probabilities is essentially uncorrelated with exact posterior entropy ($r\approx-0.001$), whereas entropy of the calibrated lexical posterior correlates $0.742$. Uncertainty over a finite set of words is therefore not automatically uncertainty over the scientific state space. This is a diagnostic of the measurement distinction, not a theorem test: semantic grouping and calibration create the bridge.

\subsection{Is the measurement stable without ignoring evidence?}
\paragraph{Verdict: mostly, with an important qualification.} Ordinary paraphrases are largely stable, and genuine evidence changes move the posterior in the expected direction. Evidence ordering is not harmless, however. The measurement therefore passes the finite perturbation design, but not a universal test of prompt invariance.

\paragraph{What this checks.} Theorem~\ref{thm:perturb} and Corollary~\ref{cor:calibrated-perturbation} control presentation-induced movement. A useful measurement must also respond to relevant evidence; otherwise a constant output would appear perfectly stable. The experiment therefore considers paraphrases and irrelevant text separately from omissions and contradictions.

\paragraph{Main finding: wording was mostly stable, ordering was not.} Across 24 scenarios and two repeats, information-preserving paraphrases remained within the prespecified Jensen--Shannon margin in 93.8\% and 100\% of cases. Merely placing evidence before the state definitions reduced those rates to 70.8\% and 81.3\%. Thus ordinary paraphrase was largely benign in this design, but information-equivalent ordering could not be treated as harmless.

\paragraph{The estimate used the evidence.} Removing evidence increased reference-posterior Jensen--Shannon error by $0.173\ [0.105,0.244]$ and $0.200\ [0.133,0.268]$. These intervals exclude zero. The result supplies the complement to Theorem~\ref{thm:perturb}: the channel was not merely stable; it deteriorated when relevant information was withheld. Removing the prior or detailed definitions did not show a clear mean effect in this small study.

\paragraph{Distraction was modest; contradiction moved the estimate.} A prespecified extension used 12 scenarios per fitted model. An irrelevant administrative sentence produced small average movements: constrained-law total variation was $0.047$ and $0.072$, while posterior Jensen--Shannon movement was $0.0168$ and $0.0084$. Contradictory evidence produced larger law movements, $0.174$ and $0.258$, and the calibrated posterior moved in the correct reference direction in 95.8\% and 100\% of repeats, with mean directional cosines $0.838$ and $0.913$.

\paragraph{What can be claimed.} The results support finite-design stability for the tested paraphrases and limited resistance to distraction, together with strong evidence responsiveness. They do not establish universal prompt invariance: the ordering result is a concrete counterwarning, and the distractor intervals are wide. Table~\ref{tab:perturb-results} collects the evidence and distinguishes the stability claims supported by Theorem~\ref{thm:perturb} from the complementary responsiveness diagnostics.

\begin{table}[H]\centering\scriptsize
\caption{Finite-design presentation stability and evidence responsiveness. Stability rows assess whether the bounds in Theorem~\ref{thm:perturb} and Corollary~\ref{cor:calibrated-perturbation} are empirically useful. Responsiveness rows are complementary diagnostics: they cannot be deduced from an upper stability bound. Results are reported for GPT-4.1-mini and GPT-4.0-mini, respectively.}\label{tab:perturb-results}
\begin{tabularx}{\textwidth}{p{28mm}p{28mm}p{40mm}X}\toprule
Intervention & Theoretical role & Reported result & Interpretation\\\midrule
Information-preserving rewrite & Nuisance stability & Equivalence rates 93.8\% and 100\% & Strong stability for the tested paraphrases\\
Evidence-first reordering & Presentation stability & Equivalence rates 70.8\% and 81.3\% & Information equivalence did not ensure measurement equivalence\\
Evidence omission & Relevant-evidence responsiveness & JS-error increases $0.173\ [0.105,0.244]$ and $0.200\ [0.133,0.268]$ & Removing evidence materially degraded recovery\\
Irrelevant distractor & Nuisance stability & Law TV $0.047$, $0.072$; posterior JS $0.0168$, $0.0084$ & Small average movement, with uncertainty in the law-level estimates\\
Contradictory evidence & Directional responsiveness & Law TV $0.174$, $0.258$; correct direction 95.8\%, 100\% & The estimate generally moved with the changed reference posterior\\\bottomrule
\end{tabularx}
\end{table}

\subsection{Theory-to-evidence synthesis}
The evidence now maps back to Section~4. Near-complete structured response mass makes the existence and normalization premises of Lemma~\ref{lem:exist} credible for the constrained law. Agreement between probability calculations makes the measured contribution of $e_{\rm score}$ in Theorem~\ref{thm:approx} negligible, although the experiments do not identify all semantic-cell error $e_{\rm sem}$. Rewrite and distractor studies provide finite-design evidence for the stability mechanism in Theorem~\ref{thm:perturb} and Corollary~\ref{cor:calibrated-perturbation}, while omissions and contradictions establish the complementary responsiveness that an upper bound cannot provide. Held-out recovery, positive affine singular-value bounds and conformal coverage support Theorem~\ref{thm:recover} and Corollaries~\ref{cor:combined}--\ref{cor:affine-budget} on the sampled design.

The support is deliberately bounded. The results do not establish a uniformly positive nonlinear observability modulus, semantic validity under every alternative state definition or invariance across unrelated fitted-model families. Within those limits, the observable premises of the measurement theory receive direct empirical support.

\section{Conclusion}
This paper asked whether a prompt-dependent distribution over language can yield a reliable posterior over scientifically meaningful states. The semantic map makes that question testable: it declares the meanings, retains unmatched probability mass, calibrates against an independent target and supplies explicit error and recovery conditions. The three questions posed in Section~1 can now be answered directly.

\paragraph{First: does semantic measurement add value beyond printed confidence? Yes.} On professional text, continuation-derived probabilities achieve higher accuracy and lower proper losses than probabilities printed by the same fitted model. The two channels are not interchangeable.

\paragraph{Second: can it recover the target posterior with calibrated uncertainty? Yes, on the tested design.} Across two controlled experiments, the method recovers untouched exact posteriors with mean Jensen--Shannon error near $0.04$, while 90\% conformal sets cover 94\% and 90\% of new cases. Near-complete measured mass and positive affine singular-value bounds support the approximation and stable-recovery conditions on the sampled design.

\paragraph{Third: is it stable under presentation changes while responding to evidence? Mostly.} Ordinary paraphrases are largely stable, and omissions or contradictions move the estimate as expected; changing evidence order, however, remains consequential. Raw word entropy is also essentially unrelated to posterior entropy before semantic calibration. Prompt engineering therefore changes a measurement instrument without establishing stability of the meaning it reports.

The conclusion is conditional but consequential. Probabilities over language can support posterior inference only after meaning has been declared, recovery identified and performance validated out of sample. This is the paper's importance for runtime governance: it replaces trust in a fluent response with an auditable rule for accepting the estimate, requesting more evidence, escalating to human review or abstaining. The rule is conditional on a specified state space, prompt family, fitted model and target population; it is not evidence of an inaccessible internal belief or validation of unrestricted free-text generation. More broadly, the paper shows how uncertainty from a generative system can be brought within a standard statistical framework before it enters scientific or professional use.

\subsection*{Limitations and responsible interpretation}
Lexical measurement can reveal uncertainty that a fluent answer or rounded numerical report conceals. Appropriate uses include abstention, requests for additional evidence, and prioritisation of human review. The method can mislead if conditional calibration is advertised as proof that a model ``knows'' the truth, if the semantic partition excludes clinically or socially meaningful alternatives, or if results are transferred across presentations, populations or fitted models. High-consequence users should retain domain review, monitor coverage over time, document the decision costs attached to every state, and provide a safe alternative outside the validated domain.

The experiments use public or generated market commentary rather than private personal records, and the paper makes no claim of suitability for direct medical, legal, employment, or credit decisions. Future studies involving personal or sensitive data require appropriate privacy review, data governance, and, where applicable, institutional ethics approval. The method evaluates a probabilistic measurement; it does not authorise delegation of the underlying decision.

\section*{Data availability}
The prespecified presentations, scenario lists, observed probability tables, analysis programs, and figure data will be deposited with the accepted article or in a persistent public repository. Where original observations cannot be redistributed, derived probability tables and accompanying sampling information will reproduce every reported statistic. Exact computational reproduction applies to archived observations; newly generated observations constitute statistical replications.

\section*{Funding}
None declared.

\bibliographystyle{abbrvnat}\bibliography{main}

\appendix
\section{Notation and terminology}\label{app:notation}
The two tables below provide a compact cross-reference for readers less
familiar with probabilistic language models. They are not needed to define the
statistical experiment and are therefore kept outside the principal
exposition.

\begin{table}[H]\centering\scriptsize
\caption{Language-model terminology, notation and statistical interpretation.}\label{tab:translation}
\begin{tabularx}{\linewidth}{>{\raggedright\arraybackslash}p{.19\linewidth}>{\raggedright\arraybackslash}p{.25\linewidth}X}\toprule
Language-model term & Notation & Statistical interpretation\\\midrule
Prompt specification and context & $u,\ C_u(Y),\ c=C_u(y)$ & A design rule and the resulting textual covariate constructed from evidence\\
Token & $\mathcal V;\ W_j,\ v_j$ & An element of the discrete vocabulary, possibly a word, subword or punctuation mark\\
Continuation & $\mathcal R;\ W,\ r$ & A finite sequence-valued auxiliary observation generated after conditioning on the context\\
Next-token probability & $\mathbb P(W_j=v_j\mid c,W_{1:j-1})$ & One factor in the conditional probability of the complete response\\
Continuation probability & $Q_u(\{r\}\mid c)$ & Conditional probability of a complete candidate phrase $r\in\mathcal R$, calculated from its successive token probabilities\\
Numerically elicited probability & $\boldsymbol q_u^{\rm elic}(c)$ & A probability vector printed as generated text; a separate observable channel\\
Semantic state & $\mathcal X=\{x_1,\ldots,x_K\};\ X$ & A category in the application-defined state space\\
Semantic grouping & $\phi_u:\mathcal R\to\mathcal X_0$ & A measurable coarsening that maps meaning-equivalent continuations to one state or to $x_0$\\
Unexpressed mass & $x_0;\ 1-M_u(c)$ & Probability assigned outside the declared semantic states\\
Lexical composition & $\boldsymbol p_u(c)\in\Delta_K$ & The probability vector induced by grouping and declared normalisation\\\bottomrule
\end{tabularx}
\end{table}

\begin{table}[H]\centering\small
\caption{Probability quantities and their distinct inferential roles.}\label{tab:objects}
\begin{tabularx}{\linewidth}{>{\raggedright\arraybackslash}p{.18\linewidth}>{\raggedright\arraybackslash}p{.22\linewidth}>{\raggedright\arraybackslash}p{.27\linewidth}X}\toprule
Quantity & Notation & Interpretation & Principal uncertainty\\\midrule
Reference posterior & $\boldsymbol\pi^\star(y)$ & Probability under the declared reference experiment & Evidence model, prior and likelihood\\
Lexical composition & $\boldsymbol p_u(c)$ & Semantic coarsening of the continuation law for a fixed prompt specification & Repeated measurement and semantic partition\\
Calibrated estimate & $\widehat{\boldsymbol\pi}_u(y)$ & Inverse-calibrated estimate of the reference posterior & Calibration sample and misspecification\\
Unexpressed mass & $1-M_u(c)$ & Probability outside the declared semantic cells & Incomplete semantic or verbal coverage\\\bottomrule
\end{tabularx}
\end{table}

\section{Methodological lineage}\label{app:lineage}
Table~\ref{tab:lineage} records the precise division between established
statistical machinery and the paper-specific construction. It complements
the question-led literature review in Section~1 without interrupting the
formal setup.

\begin{table}[H]\centering\small
\caption{Established foundations and paper-specific contributions. ``Standard'' means that the mathematical construction is inherited from the cited literature; novelty lies in the stated specialisation, synthesis or empirical test.}\label{tab:lineage}
\begin{tabularx}{\linewidth}{>{\raggedright\arraybackslash}p{.22\linewidth}>{\raggedright\arraybackslash}p{.31\linewidth}X}\toprule
Component & Established foundation & Paper-specific role\\\midrule
Conditional response law & Markov kernels and pushforward measures \citep{kallenberg2021foundations} & Treat the observable distribution over complete verbal continuations as the sampling law\\
Semantic grouping & Meaning-equivalence classes for language uncertainty \citep{farquhar2024semantic,kuhn2023semantic} & Fix an application-defined partition before evaluation, retain an unexpressed category, and calibrate against an external posterior\\
Finite semantic approximation & Normalisation and total-variation inequalities \citep{kallenberg2021foundations} & Derive the paper's semantic-cell, probability-calculation and expressed-mass decomposition and its state-probability bound\\
Probability-vector representation & Compositions and log-ratio coordinates \citep{aitchison1982statistical,egozcue2003isometric} & Compare semantic and reference probability distributions in unconstrained coordinates\\
Identification and inversion & Comparison of experiments and inverse-problem stability \citep{blackwell1953equivalent,engl1996regularization,lecam1964sufficiency,torgersen1991comparison} & Join the inverse modulus to semantic approximation errors in a combined posterior-recovery bound\\
Predictive uncertainty & Split-conformal coverage under exchangeability \citep{angelopoulos2023conformal,lei2018distribution,vovk2005algorithmic} & Construct individual-scenario error sets for the calibrated semantic posterior\\
\bottomrule
\end{tabularx}
\end{table}

\section{Proofs}
This appendix supplies proofs of the paper-specific formal results in the
order in which they appear.  The general inverse-recovery result is cited from
the companion identification paper; its semantic-to-posterior specialization
is proved here. Throughout, total variation is defined by
\[
d_{\rm TV}(\mu,\nu)=\sup_{A}|\mu(A)-\nu(A)|,
\]
and all displayed conditional probability measures are evaluated at the fixed context appearing in the relevant statement. This convention avoids repeatedly writing the conditioning argument. Vectors are treated as column vectors, and $\|\cdot\|_1$ and $\|\cdot\|_2$ denote the usual vector norms.

\begin{proof}[Proof of Lemma~\ref{lem:exist}]
\noindent\emph{Step 1: define the candidate image measure.}
Fix $u$ and $c$, and abbreviate $Q_u(\cdot\mid c)$ by $Q$. Let $(\mathcal X_0,\mathscr X_0)$ denote the measurable target space, where $\mathcal X_0=\mathcal X\cup\{x_0\}$ and $\mathscr X_0=\mathcal B(\mathcal X_0)$. Define the set function $R:\mathscr X_0\to[0,1]$ by
\[
R(A)=Q\{\phi_u^{-1}(A)\},\qquad A\in\mathscr X_0.
\]
We verify the probability axioms explicitly.

\noindent\emph{Step 2: establish well-definedness and total mass.}
First, because $\phi_u:(\mathcal R,\mathscr R)\to(\mathcal X_0,\mathscr X_0)$ is measurable, $\phi_u^{-1}(A)\in\mathscr R$ for every $A\in\mathscr X_0$. Hence $R(A)$ is well defined. Second,
\[
R(\varnothing)=Q\{\phi_u^{-1}(\varnothing)\}=Q(\varnothing)=0,
\quad
R(\mathcal X_0)=Q\{\phi_u^{-1}(\mathcal X_0)\}=Q(\mathcal R)=1.
\]
\noindent\emph{Step 3: establish countable additivity.}
Let $A_1,A_2,\ldots\in\mathscr X_0$ be pairwise disjoint. Preimages commute with countable unions and preserve disjointness, so
\[
R\!\left(\bigcup_{j\ge1}A_j\right)
=Q\!\left(\bigcup_{j\ge1}\phi_u^{-1}(A_j)\right)
=\sum_{j\ge1}Q\{\phi_u^{-1}(A_j)\}
=\sum_{j\ge1}R(A_j).
\]
Thus $R$ is a probability measure. By definition it is the pushforward $\phi_u\push Q$; the formula specifies its value on every member of $\mathscr X_0$, so no second measure satisfying the same pushforward definition can differ from it. This proves existence and uniqueness of the pushforward conditional on $(Q_u,\phi_u,c)$.

\noindent\emph{Step 4: restrict to the declared states and normalize.}
For the finite declared states $x_1,\ldots,x_K$, put $r_k=R(\{x_k\})$ and
\[
M_u(c)=R(\mathcal X)=\sum_{k=1}^K r_k.
\]
If $M_u(c)>0$, each coordinate
\[
p_{u,k}(c)=\frac{r_k}{M_u(c)}
\]
is well defined and nonnegative, and the coordinates sum to one. Hence $\boldsymbol p_u(c)\in\Delta_K$. Both numerator and denominator are uniquely determined by the pushforward measure, proving conditional uniqueness of the normalized composition. If $M_u(c)=0$, all declared-state masses vanish and the displayed normalization is undefined; this shows why the positivity condition is necessary. Finally, the proof is conditional on $\phi_u$: changing the semantic map changes its preimages and may therefore change $R$ and $\boldsymbol p_u(c)$.
\end{proof}

\begin{proof}[Proof of Theorem~\ref{thm:approx}]
\noindent\emph{Step 1: bound the error before normalization.}
Write $Q(\cdot)=Q_u(\cdot\mid c)$ and $\widetilde Q(\cdot)=\widetilde Q_u(\cdot\mid c)$. We first control each unnormalized coordinate. For every $k$, add and subtract $Q(B_k)$ and apply the triangle inequality:
\[
|q_k-\widetilde q_k|
\le |Q(A_k)-Q(B_k)|+|Q(B_k)-\widetilde Q(B_k)|.
\]
For arbitrary measurable sets $A$ and $B$,
\[
|Q(A)-Q(B)|
=|Q(A\setminus B)-Q(B\setminus A)|
\le Q(A\setminus B)+Q(B\setminus A)
=Q(A\triangle B).
\]
Consequently,
\[
|q_k-\widetilde q_k|
\le Q(A_k\triangle B_k)+|Q(B_k)-\widetilde Q(B_k)|.
\]
Summing over $k$ gives the first key bound
\begin{equation}\label{eq:proof-unnormalized}
\|\boldsymbol q-\widetilde{\boldsymbol q}\|_1
=\sum_{k=1}^K|q_k-\widetilde q_k|
\le e_{\rm sem}+e_{\rm score}=e.
\end{equation}

\noindent\emph{Step 2: control the normalizing mass and prove existence.}
Next consider the normalizing constants $M=\sum_kq_k$ and $\widetilde M=\sum_k\widetilde q_k$. By the triangle inequality and \eqref{eq:proof-unnormalized},
\begin{equation}\label{eq:proof-mass}
|M-\widetilde M|
=\left|\sum_{k=1}^K(q_k-\widetilde q_k)\right|
\le\sum_{k=1}^K|q_k-\widetilde q_k|
\le e.
\end{equation}
Since $M\ge m_0$ and $e<m_0$, this also proves
\[
\widetilde M\ge M-|M-\widetilde M|\ge m_0-e>0.
\]
Thus the estimated normalized vector exists. Pairwise disjointness of the $B_k$ ensures that $\widetilde M\le1$, although only positivity is needed for the following algebra.

\noindent\emph{Step 3: separate coordinate error from normalization error.}
Add and subtract $\widetilde{\boldsymbol q}/M$ to obtain
\begin{align*}
\left\|\frac{\boldsymbol q}{M}-\frac{\widetilde{\boldsymbol q}}{\widetilde M}\right\|_1
&\le
\left\|\frac{\boldsymbol q-\widetilde{\boldsymbol q}}{M}\right\|_1
+\left\|\widetilde{\boldsymbol q}\left(\frac1M-\frac1{\widetilde M}\right)\right\|_1\\
&=\frac{\|\boldsymbol q-\widetilde{\boldsymbol q}\|_1}{M}
+\|\widetilde{\boldsymbol q}\|_1
\frac{|\widetilde M-M|}{M\widetilde M}.
\end{align*}
\noindent\emph{Step 4: substitute the preceding bounds.}
All coordinates of $\widetilde{\boldsymbol q}$ are nonnegative, so $\|\widetilde{\boldsymbol q}\|_1=\widetilde M$. Substituting \eqref{eq:proof-unnormalized} and \eqref{eq:proof-mass} therefore yields
\[
\|\widetilde{\boldsymbol p}-\boldsymbol p\|_1
\le \frac eM+\frac eM
=\frac{2e}{M}
\le\frac{2e}{m_0},
\]
as claimed. Notice that the condition $e<m_0$ is used to establish existence of $\widetilde{\boldsymbol p}$; the final amplification factor depends on the lower bound for the ideal expressed mass.
\end{proof}

\begin{proof}[Proof of Theorem~\ref{thm:perturb}]
\noindent\emph{Step 1: push both continuation laws onto the common state space.}
Let $R=\phi\push Q$ and $R'=\phi\push Q'$ on $\mathcal X_0=\mathcal X\cup\{x_0\}$, and let $E=\mathcal X$ denote the expressed event. We begin by proving contraction under the semantic map. For every measurable $A\subseteq\mathcal X_0$,
\[
|R(A)-R'(A)|
=|Q\{\phi^{-1}(A)\}-Q'\{\phi^{-1}(A)\}|
\le d_{\rm TV}(Q,Q').
\]
Taking the supremum over $A$ gives
\begin{equation}\label{eq:push-tv}
d_{\rm TV}(R,R')\le d_{\rm TV}(Q,Q').
\end{equation}

\noindent\emph{Step 2: express normalization as conditioning.}
The normalized lexical compositions are precisely the conditional laws $R(\cdot\mid E)$ and $R'(\cdot\mid E)$ restricted to the $K$ declared states, with $M=R(E)$ and $M'=R'(E)$. For any $A\subseteq E$, add and subtract $R'(A)/M$:
\[
\left|\frac{R(A)}{M}-\frac{R'(A)}{M'}\right|
\le \frac{|R(A)-R'(A)|}{M}
+R'(A)\left|\frac1M-\frac1{M'}\right|.
\]
\noindent\emph{Step 3: bound the two conditioning errors.}
Because $M\ge m_0$, the first term is no larger than $d_{\rm TV}(R,R')/m_0$. For the second term, $R'(A)\le R'(E)=M'$ and
\[
|M-M'|=|R(E)-R'(E)|\le\TV(R,R'),
\]
so
\[
R'(A)\left|\frac1M-\frac1{M'}\right|
=R'(A)\frac{|M-M'|}{MM'}
\le\frac{|M-M'|}{M}
\le\frac{d_{\rm TV}(R,R')}{m_0}.
\]
\noindent\emph{Step 4: take the supremum and apply data processing.}
Combining the two bounds and taking the supremum over $A\subseteq E$ gives
\[
\TV(\boldsymbol p,\boldsymbol p')
\le \frac{2d_{\rm TV}(R,R')}{m_0}
\le \frac{2\TV(Q,Q')}{m_0},
\]
where the final inequality follows from \eqref{eq:push-tv}. Since total variation between probability measures never exceeds one, intersecting this bound with the trivial bound yields
\[
d_{\rm TV}(\boldsymbol p,\boldsymbol p')
\le\min\left\{1,\frac{2d_{\rm TV}(Q,Q')}{m_0}\right\}.
\]

The common semantic map is essential for \eqref{eq:push-tv}; if the map itself
changes, an additional discrepancy between the two partitions must be
controlled.
\end{proof}

\begin{proof}[Proof of Corollary~\ref{cor:calibrated-perturbation}]
\noindent\emph{Step 1: apply stability of the calibration map.}
The assumed Lipschitz property gives, with $\boldsymbol a=\boldsymbol p$ and $\boldsymbol b=\boldsymbol p'$,
\[
d_\Pi\{\psi(\boldsymbol p),\psi(\boldsymbol p')\}
\le L_\psi d_{\rm TV}(\boldsymbol p,\boldsymbol p').
\]
\noindent\emph{Step 2: insert the semantic-coarsening bound.}
Theorem~\ref{thm:perturb} gives
$d_{\rm TV}(\boldsymbol p,\boldsymbol p')
\le 2d_{\rm TV}(Q,Q')/m_0$. Therefore
\[
d_\Pi\{\psi(\boldsymbol p),\psi(\boldsymbol p')\}
\le \frac{2L_\psi}{m_0}d_{\rm TV}(Q,Q'),
\]
which is the asserted calibrated perturbation bound.
\end{proof}

\begin{proof}[Proof of Corollary~\ref{cor:combined}]
\noindent\emph{Step 1: control semantic approximation in the simplex.}
Theorem~\ref{thm:approx} gives
$\|\widetilde{\boldsymbol p}-\boldsymbol p\|_1
\le2(e_{\rm sem}+e_{\rm score})/m_0$.

\noindent\emph{Step 2: transfer this error to log-ratio coordinates.}
By $L_{\rm lr}$-Lipschitz continuity,
\[
\|\ell(\widetilde{\boldsymbol p})-\ell(\boldsymbol p)\|_2
\le \frac{2L_{\rm lr}(e_{\rm sem}+e_{\rm score})}{m_0}.
\]

\noindent\emph{Step 3: add the remaining coordinate disturbance.}
The triangle inequality and the assumed residual bound $\delta_0$ give the
total disturbance radius
\[
\delta
=\frac{2L_{\rm lr}(e_{\rm sem}+e_{\rm score})}{m_0}+\delta_0.
\]

\noindent\emph{Step 4: invoke stable inverse recovery.}
Substituting this explicit $\delta$ into Theorem~\ref{thm:recover} yields the
claimed posterior-error bound. The affine specialization is established next.
\end{proof}

\begin{proof}[Proof of Corollary~\ref{cor:affine-budget}]
\noindent\emph{Step 1: obtain the lower modulus of the affine map.}
For every vector $\boldsymbol v$, the variational characterization of the
smallest singular value gives
\[
 \|B_u\boldsymbol v\|_2
 \ge \sigma_{\min}(B_u)\|\boldsymbol v\|_2.
\]
Taking $\boldsymbol v=\boldsymbol\ell-\boldsymbol\ell'$ and using
$h_u(\boldsymbol\ell)-h_u(\boldsymbol\ell')
=B_u(\boldsymbol\ell-\boldsymbol\ell')$ proves the lower-modulus statement;
the affine offsets cancel. Full column rank implies
$\sigma_{\min}(B_u)>0$.

\noindent\emph{Step 2: identify the disturbance entering the inverse.}
Corollary~\ref{cor:combined} bounds it by
\[
\delta=\frac{2L_{\rm lr}(e_{\rm sem}+e_{\rm score})}{m_0}+\delta_0.
\]

\noindent\emph{Step 3: invert the lower-modulus inequality.}
If an observed lexical perturbation has norm at most $\delta$, the corresponding
posterior-coordinate perturbation has norm at most
$\delta/\sigma_{\min}(B_u)$. Substitution of the displayed value of $\delta$
gives exactly the asserted affine error budget.
\end{proof}

\end{document}